\documentclass[11pt]{article}

\usepackage[utf8]{inputenc}
\usepackage{graphicx}
\graphicspath{{figs/}}
\IfFileExists{arxiv-figure-deps.tex}{
}{}
\usepackage{indentfirst,csquotes}
\usepackage{amssymb,amsthm,amsmath}
\usepackage{xcolor,paralist,hyperref,titlesec,fancyhdr,etoolbox,float}
\usepackage[font=small,labelfont=bf,skip=6pt]{caption}
\usepackage[margin=1in]{geometry}

\hypersetup{
    colorlinks=true,
    linkcolor=black,
    filecolor=black,
    urlcolor=black
}

\newtheorem{theorem}{Theorem}
\newtheorem{definition}[theorem]{Definition}

\newtheorem{proposition}[theorem]{Proposition}

\newtheorem{remark}[theorem]{Remark}

\titleformat{\section}
  {\normalfont\Large\bfseries}
  {\thesection}{1em}{}

\titleformat{\subsection}
  {\normalfont\large\bfseries}
  {\thesubsection}{1em}{}

\titlespacing{\section}{0pt}{2ex plus .5ex minus .2ex}{1ex plus .2ex}
\titlespacing{\subsection}{0pt}{1.5ex plus .5ex minus .2ex}{1ex plus .2ex}

\DeclareCaptionFont{nestfigcap}{\small}
\newcommand{\nestfigcaption}[2]{%
  \caption{#1}%
  \label{fig:#2}%
}

\newcommand{\nestincludefig}[2][width=\linewidth]{%
  \includegraphics[#1]{#2}%
}

\newcommand{\nestfig}[2][]{%
  \begin{figure}[htbp]
    \centering
    \nestincludefig[width=0.92\linewidth]{#2}%
    \if\relax\detokenize{#1}\relax\else\nestfigcaption{#1}{#2}\fi
  \end{figure}%
}
\newcommand{\nestfigwithlabel}[3][]{%
  \begin{figure}[htbp]
    \centering
    \nestincludefig[width=0.92\linewidth]{#2}%
    \if\relax\detokenize{#1}\relax\else\nestfigcaption{#1}{#3}\fi
  \end{figure}%
}
\newcommand{\nestfigsmall}[2][]{%
  \begin{figure}[htbp]
    \centering
    \nestincludefig[width=0.46\linewidth]{#2}%
    \if\relax\detokenize{#1}\relax\else\nestfigcaption{#1}{#2}\fi
  \end{figure}%
}
\newcommand{\nestfigsmallw}[3][]{%
  \begin{figure}[htbp]
    \centering
    \nestincludefig[width=#3\linewidth]{#2}%
    \if\relax\detokenize{#1}\relax\else\nestfigcaption{#1}{#2}\fi
  \end{figure}%
}
\newcommand{\nestfigload}[5][]{%
  \begin{figure}[htbp]
    \centering
    \begin{minipage}[t]{0.31\linewidth}\centering\nestincludefig[width=\linewidth]{#2}\end{minipage}\hfill
    \begin{minipage}[t]{0.31\linewidth}\centering\nestincludefig[width=\linewidth]{#3}\end{minipage}\hfill
    \begin{minipage}[t]{0.31\linewidth}\centering\nestincludefig[width=\linewidth]{#4}\end{minipage}
    \if\relax\detokenize{#1}\relax\else\nestfigcaption{#1}{#5}\fi
  \end{figure}%
}
\newcommand{\nestfigloadH}[5][]{%
  \begin{figure}[H]
    \centering
    \setlength{\intextsep}{6pt plus 2pt minus 2pt}%
    \begin{minipage}[t]{0.31\linewidth}\centering\nestincludefig[width=\linewidth,height=0.13\textheight,keepaspectratio]{#2}\end{minipage}\hfill
    \begin{minipage}[t]{0.31\linewidth}\centering\nestincludefig[width=\linewidth,height=0.13\textheight,keepaspectratio]{#3}\end{minipage}\hfill
    \begin{minipage}[t]{0.31\linewidth}\centering\nestincludefig[width=\linewidth,height=0.13\textheight,keepaspectratio]{#4}\end{minipage}
    \if\relax\detokenize{#1}\relax\else\nestfigcaption{#1}{#5}\fi
    \vspace{-8pt}
  \end{figure}%
}
\newcommand{\nestfigloadcaption}[5][]{%
  \begin{figure}[H]
    \centering
    \begin{minipage}[t]{0.31\linewidth}\centering\nestincludefig[width=\linewidth]{#2}\end{minipage}\hfill
    \begin{minipage}[t]{0.31\linewidth}\centering\nestincludefig[width=\linewidth]{#3}\end{minipage}\hfill
    \begin{minipage}[t]{0.31\linewidth}\centering\nestincludefig[width=\linewidth]{#4}\end{minipage}
    \if\relax\detokenize{#1}\relax\else\nestfigcaption{#1}{#5}\fi
  \end{figure}%
}

\title{\textbf{Nested Episodic State Topology (NEST): A Graph-Theoretic Architecture of Cognitive States}}
\author{Ishant}
\date{}

\begin{document}
\maketitle

\begin{abstract}
We present NEST (Nested Episodic State Topology), a foundational graph-theoretic representational ontology for modeling cognition as structured state formation and transformation rather than as a finished empirical model. Concepts, episodes, percepts, and task contexts are represented as typed, weighted graphs whose nodes may carry internal subgraph payloads; edges are typed under six relation classes---causal, containment, temporal, associative, evidential, and spatial. Durable belief graphs are separated from capacity-limited working-memory graphs that may host transient non-belief content. WM--belief grounding, conflict catalogs, and belief-update operators specify how transient structure is tested against stored knowledge and how belief is revised. A reusable operator toolkit---activation, graph-property functionals, working-memory transitions, awareness and trajectory functionals, and belief update---organizes the formal core. Derived diagnostics such as fragmentation, involvement, signed evaluation, coherence, and active conflict define familiar phenomena in the same ontology; self-related processing is modeled through designated self-image subgraphs within belief. Subsequent sections instantiate this core without new primitives: phenomena signatures, a task-instantiation schema for action selection and failure modes, and compatibility mappings that embed ACT-R, Soar, Sigma, the Common Model of Cognition, Global Workspace Theory, semantic networks, Theory-Theory, and chunking as constrained regions of one language. Mappings constitute the culminating technical section; discussion addresses scope, limitations, and open research directions. The contribution is intentionally foundational: a transparent representational substrate for later empirical, computational, and domain-specific work.
\end{abstract}

\section{Introduction}

Theoretical cognitive science has long been marked by fragmentation.
The field has produced many successful local theories of memory, reasoning, perception, language, control, and learning, but it still lacks a common representational language in which these theories can be stated, compared, and revised on equal footing \cite{Newell1990,Milkowski2017}.
This makes integration difficult: two theories may appear incompatible simply because they are expressed in different formal systems, even when they are making structurally similar claims.
The central problem, then, is not only how cognition works, but how cognitive science should represent what its theories are about.

Many integrative architectures unify cognition by committing to a particular processing mechanism, such as a production system, a workspace, or a control loop.
That approach is valuable, but it often forces translation into one architecture's native procedural vocabulary before theories can be compared.
What is missing is a representational layer prior to mechanism: a shared substrate in which theories can be expressed as structural commitments, so that comparison turns on what cognitive structure they assume rather than on how they happen to be implemented.

NEST (Nested Episodic State Topology) is motivated by this gap.
It proposes a graph-theoretic representational ontology in which cognition is modeled as structured state formation and transformation.
Cognitive states are nested graphs whose nodes can carry internal structure, including subgraphs, so that concepts, episodes, percepts, and task contexts can be represented in one formal language while preserving typed relations among causal, temporal, spatial, associative, evidential, and containment structure.

Human cognition must also distinguish transient content from durable knowledge---tentative hypotheses, unresolved conflict, and stored beliefs that guide interpretation and action.
A single undifferentiated memory store does not capture these distinctions well enough for theoretical comparison.
NEST therefore separates durable belief graphs from capacity-limited working-memory graphs, allowing perceptual input and intermediate reasoning to be represented without immediately collapsing them into stored knowledge.

Our contributions are fourfold.
First, we specify the recursive node-and-relation ontology and the machinery linking working memory and belief.
Second, we define graph-theoretic signatures for key cognitive phenomena in the same ontology.
Third, we introduce a task-instantiation schema for action selection, outcome prediction, and failure-mode reasoning.
Fourth, we show how major frameworks in cognitive science can be expressed as special cases or constrained regions of the NEST ontology, with compatibility mappings as the culminating technical step.
The paper is foundational rather than a finished empirical model: its aim is theoretical comparison, representational transparency, and a substrate for later domain-specific development.

The paper proceeds as follows.
Section~\ref{sec:formalism} introduces the recursive ontology and operator toolkit.
Section~\ref{sec:phenomena} derives graph-theoretic signatures for key cognitive phenomena.
Section~\ref{sec:task-instantiation} presents the task-instantiation schema for action selection, outcome prediction, and failure-mode reasoning.
Section~\ref{sec:mappings} develops compatibility mappings to major cognitive frameworks.
Section~\ref{sec:discussion} closes with implications, limitations, and open research directions.

\section{Formal Architecture}
\label{sec:formalism}

We now specify the formal core of NEST. We begin with basic graph-theoretic definitions, introduce a small reusable library of generic functionals and operators, and then instantiate that library for long-term belief graphs, working-memory graphs, awareness, control, and graph-theoretic phenomena. Throughout, most cognitive constructs are treated as instances of activation functions, graph-property functionals, belief-update operators, or trajectory functionals over working-memory structure.

\subsection{Graphs and recursive nodes}

\begin{definition}[Base graph]
\label{def:base-graph}
A \emph{base graph} is a tuple
\[
G = (V, E, \mathcal{R}, \tau, w)
\]
where \(V\) is a finite set of nodes, \(E \subseteq V \times V\) is a set of directed edges, \(\mathcal{R}\) is a set of relation types, \(\tau : E \rightarrow \mathcal{R}\) assigns a type to each edge, and \(w : E \rightarrow \mathbb{R}\) assigns a real-valued weight to each edge.
By default, non-evidential edge classes use non-negative weights; signed weights are reserved for edge types whose semantics require polarity, such as evidential support and contradiction.
\end{definition}

\nestfigsmall[Base graph $G=(V,E,\mathcal{R},\tau,w)$: nodes $v_i\in V$, directed edges $(v_i,v_j)\in E$, and edge labels $r_k\in\mathcal{R}$ illustrating $\tau(e)$ for representative edges $e$ (specific relation names are domain-dependent and belong in captions or instantiations, not in schematic figures).]{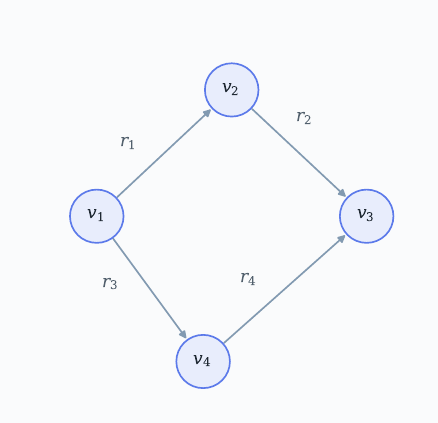}

Figure~\ref{fig:fig01_base_graph} illustrates a representative base graph with typed, weighted edges.

\begin{definition}[Representational payload]
\label{def:representational-payload}
A \emph{representational payload} is an internal structure associated with a node.
For a node \(v\), its payload \(P_v\) may take one of the following forms:
\begin{enumerate}
    \item an empty payload \(\varnothing\), corresponding to an atomic node,
    \item a linguistic payload, such as a word, phrase, sentence, or symbolic chunk,
    \item a perceptual payload, such as a visual, auditory, tactile, or motor chunk,
    \item a structured perceptual payload, i.e., a perceptual payload whose internal representation is a subgraph encoding a perceptual scene, feature bundle, or sensorimotor pattern,
    \item a graph payload \(G_v\), where \(G_v\) is itself a base graph.
\end{enumerate}
\end{definition}

\begin{definition}[Recursive node]
\label{def:recursive-node}
A \emph{recursive node} is a tuple
\[
v = (c_v, P_v),
\]
where \(c_v\) is a node label or identifier and \(P_v\) is a representational payload.
If \(P_v\) is a structured perceptual payload or a graph payload, then \(v\) is said to be \emph{internally structured}, and its internal subgraph may encode the node's perceptual scene, conceptual organization, episodic content, or procedural structure.
\end{definition}

\nestfig[Recursive node $v=(c_v,P_v)$ with the five representational payload types from Definition~\ref{def:representational-payload}: (1)~empty $P_v=\varnothing$; (2)~linguistic $P_v$ (e.g.\ a word token $\langle w\rangle$); (3)~atomic perceptual $P_v$; (4)~structured perceptual $P_v$ as an internal subgraph; (5)~graph payload $G_v$ with internal nodes $u_i$.]{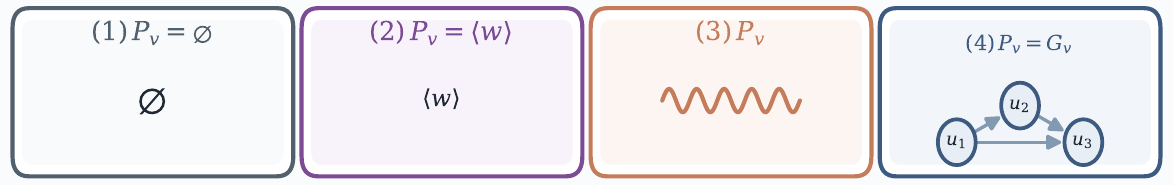}

Figure~\ref{fig:fig02_recursive_node} summarizes the five payload types in Definition~\ref{def:representational-payload}.

This definition allows NEST nodes to function at multiple levels of abstraction.
A node may denote a simple concept, a linguistic token, an atomic or structured perceptual chunk, or a recursively structured subgraph.
For example, a node for a person may contain an internal role graph, a node for an object may contain a visual feature graph, and a node for a spoken phrase may contain an auditory or phonological chunk.

We write \(v \in V\) for nodes in a base graph. When \(P_v\) is a graph payload \(G_v\), or a structured perceptual payload whose internal representation is a subgraph \(G_v\), we may emphasize the internal structure by writing \(v = (c(v), G_v)\).

\begin{definition}[Edge type]
\label{def:edge-type-taxonomy}
Let \(G = (V, E, \mathcal{R}, \tau, w)\) be a base graph (Definition~\ref{def:base-graph}), where each node \(v \in V\) carries a representational payload \(P_v\) as in Definition~\ref{def:representational-payload}. The relation-type set \(\mathcal{R}\) partitions into six pairwise-disjoint classes:
\[
\mathcal{R} = \mathcal{R}_{\mathrm{causal}} \cup \mathcal{R}_{\mathrm{contain}} \cup \mathcal{R}_{\mathrm{temp}} \cup \mathcal{R}_{\mathrm{assoc}} \cup \mathcal{R}_{\mathrm{epist}} \cup \mathcal{R}_{\mathrm{spatial}},
\]
each defined by an independent structural test below. For each edge \(e = (u, v) \in E\), the assigned type \(\tau(e) \in \mathcal{R}\) belongs to exactly one class. A particular graph may employ only a subset; belief-graph and working-memory instantiations specify which classes appear in \(\mathcal{R}_B\) and \(\mathcal{R}_{W_t}\). The six classes are:
\begin{enumerate}
    \item \textbf{Causal} (\(\mathcal{R}_{\mathrm{causal}}\)). An edge \((u,v)\) with \(\tau(u,v) \in \mathcal{R}_{\mathrm{causal}}\) asserts that node \(u\) produces, enables, probabilistically raises, or inhibits the occurrence of node \(v\), independent of the payload types of \(u\) and \(v\):
    \[
    \mathcal{R}_{\mathrm{causal}} = \{r_{\mathrm{causes}},\ r_{\mathrm{enables}},\ r_{\mathrm{inhibits}},\ r_{\mathrm{counterfactual}}\}.
    \]
    Causal edges are antisymmetric: \((u,v) \in E_{\mathcal{R}_{\mathrm{causal}}} \Rightarrow (v,u) \notin E_{\mathcal{R}_{\mathrm{causal}}}\). Acyclicity over \(\mathcal{R}_{\mathrm{causal}}\) is a structural forbidden pattern in \(\mathcal{S}\). When \(u\) is a distinguished agent node, an optional \emph{attribution tag} \(a(u,v) \in V\) records whose causal model the edge belongs to---e.g., ``agent \(a\) believes \(u\) causes \(v\)''.

    \item \textbf{Containment} (\(\mathcal{R}_{\mathrm{contain}}\)). An edge \((u,v)\) with \(\tau(u,v) \in \mathcal{R}_{\mathrm{contain}}\) asserts that \(v\)'s existence or identity is nested inside \(u\)'s scope, characterized by an \emph{existential-dependency test}: \((u,v) \in E_{\mathcal{R}_{\mathrm{contain}}}\) iff removing \(u\) from \(G\) requires removing \(v\) as well.
    \[
    \mathcal{R}_{\mathrm{contain}} = \{r_{\mathrm{is\mathrm{-}a}},\ r_{\mathrm{part\mathrm{-}of}},\ r_{\mathrm{instance\mathrm{-}of}},\ r_{\mathrm{role\mathrm{-}of}}\}.
    \]
    When \(v\) carries a graph payload \(G_v\) (Definition~\ref{def:representational-payload}, item~5) or a structured perceptual payload (item~4), its internal subgraph is linked to \(v\) via \(r_{\mathrm{part\mathrm{-}of}}\) edges. Containment is transitive and irreflexive; a directed cycle over \(\mathcal{R}_{\mathrm{contain}}\) is a forbidden pattern in \(\mathcal{S}\).

    \item \textbf{Temporal} (\(\mathcal{R}_{\mathrm{temp}}\)). An edge \((u,v)\) with \(\tau(u,v) \in \mathcal{R}_{\mathrm{temp}}\) encodes ordering of nodes across time via a node timestamp function \(\theta_V : V \to \mathbb{R}\) (distinct from the working-memory time index \(t\)) such that \((u,v) \in E_{\mathcal{R}_{\mathrm{temp}}} \Rightarrow \theta_V(u) \leq \theta_V(v)\):
    \[
    \mathcal{R}_{\mathrm{temp}} = \{r_{\mathrm{before}},\ r_{\mathrm{after}},\ r_{\mathrm{during}},\ r_{\mathrm{next}}\}.
    \]
    Here \(r_{\mathrm{before}}\) and \(r_{\mathrm{after}}\) are mutually exclusive on the same ordered pair (a predicate incompatibility in \(\mathcal{F}\)); \(r_{\mathrm{next}}\) is the sequential successor used in action-sequence and planning subgraphs.

    \item \textbf{Associative} (\(\mathcal{R}_{\mathrm{assoc}}\)). An edge \((u,v)\) with \(\tau(u,v) \in \mathcal{R}_{\mathrm{assoc}}\) encodes statistical co-occurrence or spreading-activation linkage with no stronger typed commitment \cite{CollinsQuillian1969,CollinsLoftus1975}:
    \[
    \mathcal{R}_{\mathrm{assoc}} = \{r_{\mathrm{related\mathrm{-}to}},\ r_{\mathrm{co\mathrm{-}occurs\mathrm{-}with}},\ r_{\mathrm{similar\mathrm{-}to}}\}.
    \]
    Associative edges are undirected, represented as symmetric pairs \((u,v),(v,u) \in E\) with equal weights \(w(u,v) = w(v,u)\). This is the default class: any relation not warranting a stronger typed commitment from classes 1--3 or 5--6 defaults here.

    \item \textbf{Evidential} (\(\mathcal{R}_{\mathrm{epist}}\)). An edge \((u,v)\) with \(\tau(u,v) \in \mathcal{R}_{\mathrm{epist}}\) encodes inferential, evidential, or evaluative support between propositional or evaluative nodes:
    \[
    \mathcal{R}_{\mathrm{epist}} = \{r_{\mathrm{supports}},\ r_{\mathrm{contradicts}},\ r_{\mathrm{derived\mathrm{-}from}},\ r_{\mathrm{supersedes}}\}.
    \]
    Valence and deontic force are carried as \emph{signed edge weights} on \(r_{\mathrm{supports}}\) and \(r_{\mathrm{contradicts}}\) edges (extending \(w\) to \(\mathbb{R}\) on \(\mathcal{R}_{\mathrm{epist}}\) when needed); an obligation is an \(r_{\mathrm{supports}}\) edge with \(|w(u,v)| = 1\) directed at a norm node, and a prohibition is the corresponding \(r_{\mathrm{contradicts}}\) edge. Here \(r_{\mathrm{contradicts}}\) induces a conflict pair in \((\mathcal{F}, \mathcal{S})\); \(r_{\mathrm{supersedes}}\) is generated by belief-update operators when a node is revised rather than deleted.

    \item \textbf{Spatial} (\(\mathcal{R}_{\mathrm{spatial}}\)). An edge \((u,v)\) with \(\tau(u,v) \in \mathcal{R}_{\mathrm{spatial}}\) encodes geometric relations, restricted to node pairs where at least one endpoint carries a perceptual or structured perceptual payload (Definition~\ref{def:representational-payload}, items~3--4):
    \[
    \mathcal{R}_{\mathrm{spatial}} = \{r_{\mathrm{above}},\ r_{\mathrm{below}},\ r_{\mathrm{left\mathrm{-}of}},\ r_{\mathrm{right\mathrm{-}of}},\ r_{\mathrm{near}},\ r_{\mathrm{affords}}\}.
    \]
    The sub-type \(r_{\mathrm{affords}}\) links a body-schema node \(u\) to a perceptual node \(v\), encoding a sensorimotor affordance relation.
\end{enumerate}
For any edge class \(\mathcal{R}_k\), write \(E_{\mathcal{R}_k} = \{(u,v) \in E \mid \tau(u,v) \in \mathcal{R}_k\}\) for the induced edge set. For a subgraph \(Q \subseteq B\), the restriction \(E_{\mathcal{R}_k}(Q) = E_{\mathcal{R}_k} \cap (V_Q \times V_Q)\) denotes the edges of that class internal to \(Q\).
\end{definition}

\begin{remark}[Identification and attributed mental states]
\label{rem:edge-type-derived}
Coreference and attributed mental states are expressed within the six edge classes of Definition~\ref{def:edge-type-taxonomy}. Coreference is not represented by mutual containment, since containment is irreflexive and acyclic. Instead, coreference is represented either by a distinguished evidential or associative identification relation, or by quotienting an explicitly declared coreference relation external to \(\mathcal{R}_{\mathrm{contain}}\). Informally, two nodes are coreferential when they are declared to denote the same referent and may be collapsed into a single equivalence class for selected operations. Attributed beliefs and trust are causal or evidential edges carrying an attribution tag \(a(u,v) \in V\): an agent's belief that \(u\) causes \(v\) is a causal edge tagged with that agent's node, and trust in a source is a positively weighted evidential edge from the agent to that source. Recognition events may promote a partial match to an explicit coreference or identification link in \(B\).
For each socially relevant agent \(i\), an \emph{other-agent anchor node} \(v_{\mathrm{self}}^{(i)} \in V_B\) is a designated role for a belief-graph node, not a new primitive node type. An \emph{attributed-agent subgraph} \(A_t^{(i)} \subseteq B\) is centered on \(v_{\mathrm{self}}^{(i)}\) and contains beliefs, goals, preferences, and predicted actions attributed to that agent; its edges are causal or evidential edges as already permitted above.
\label{def:attributed-agent-subgraph}
\end{remark}

\begin{remark}[Cross-class edge assignments]
\label{rem:edge-type-cross-class}
The partition \(\mathcal{R} = \bigcup_k \mathcal{R}_k\) does not preclude multiple edges of different types between the same node pair---e.g., \((u,v)\) may simultaneously carry \(r_{\mathrm{causes}}\) and \(r_{\mathrm{before}}\). The incompatibility constraint \((\mathcal{F}, \mathcal{S})\) (Definition~\ref{def:incompatibility}) does not require \(\mathcal{F}\) or \(\mathcal{S}\) to respect the class partition: cross-class coexistence on the same ordered pair is permitted by default. Belief-graph constraints \((\mathcal{F}_B, \mathcal{S}_B)\) are declared primarily to forbid \emph{within-class} contradictions; see Remark~\ref{rem:incompatibility-examples}.
\end{remark}

\begin{remark}[Edge type summary]
\label{rem:edge-type-summary}
The six canonical edge-type classes are summarized below.
\begin{center}
\small
\begin{tabular}{@{}p{0.14\linewidth}p{0.10\linewidth}p{0.30\linewidth}p{0.38\linewidth}@{}}
\hline
Class & Symbol & Key sub-types & Structural test \\
\hline
Causal & \(\mathcal{R}_{\mathrm{causal}}\) & causes, enables, inhibits, counterfactual & Antisymmetry; acyclicity \\
Containment & \(\mathcal{R}_{\mathrm{contain}}\) & is-a, part-of, instance-of, role-of & Existential dependency; transitivity \\
Temporal & \(\mathcal{R}_{\mathrm{temp}}\) & before, after, during, next & \(\theta_V(u) \leq \theta_V(v)\) \\
Associative & \(\mathcal{R}_{\mathrm{assoc}}\) & related-to, co-occurs-with, similar-to & Symmetry; default class \\
Evidential & \(\mathcal{R}_{\mathrm{epist}}\) & supports, contradicts, supersedes & Signed weight; conflict pairs \\
Spatial & \(\mathcal{R}_{\mathrm{spatial}}\) & above, left-of, affords & Perceptual payload restriction \\
\hline
\end{tabular}
\end{center}
\end{remark}

Definition~\ref{def:edge-type-taxonomy} supplies the taxonomy for typed edges in any base graph.

\subsection{Belief graph and mental models}

\begin{definition}[Belief graph]
A \emph{belief graph} is a base graph
\[
B = (V_B, E_B, \mathcal{R}_B, \tau_B, w_B),
\]
whose nodes are recursive nodes representing concepts, episodes, and other structured knowledge. \(B\) is intended to approximate long-term memory and background knowledge.
\end{definition}

\begin{definition}[Mental model]
\label{def:mental-model}
A \emph{mental model} is a connected subgraph
\[
M = (V_M, E_M) \subseteq B
\]
such that the nodes and edges in \(M\) jointly represent a particular situation, domain, or system (e.g., a physical scenario, a social interaction, a mathematical problem).
\end{definition}

\nestfigsmallw[A mental model \(M\) as a connected subgraph of the belief graph \(B\).]{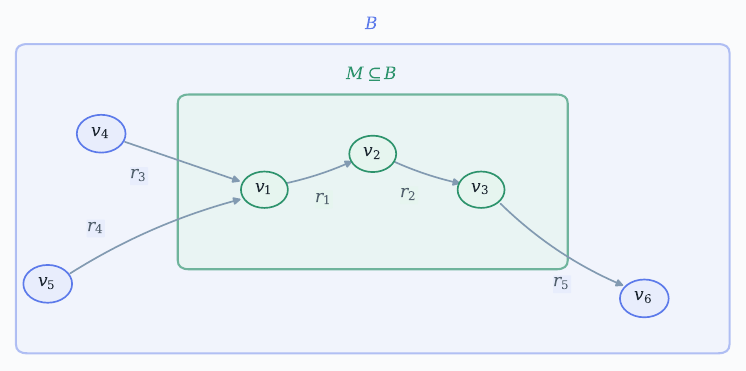}{0.552}

Figure~\ref{fig:fig03_belief_mental_model} shows a connected subgraph \(M \subseteq B\).

Mental models are thus structured subgraphs of \(B\). Multiple mental models may overlap in nodes and edges.

\subsection{Working-memory graphs}

\begin{definition}[Working-memory graph]
A \emph{working-memory graph} at time \(t\), denoted \(W_t\), is a base graph
\[
W_t = (V_{W_t}, E_{W_t}, \mathcal{R}_{W_t}, \tau_{W_t}, w_t)
\]
subject to capacity constraints on nodes and edges (e.g., \(|V_{W_t}| \leq C_V\), \(|E_{W_t}| \leq C_E\), where \(C_V, C_E\) are architecture-specific constants).
Most nodes in \(V_{W_t}\) are linked, via one or more edges, to nodes in the belief graph \(B\), but \(V_{W_t}\) may also contain transient nodes that encode sensory, hypothetical, or not-yet-integrated content.
The weight function \(w_t\) may differ from \(w_B\) to reflect transient activation strengths.
\end{definition}

\begin{definition}[Belief-anchored working memory]
A working-memory state \(W_t\) is \emph{belief-anchored} if every node in \(V_{W_t}\) corresponds to some node in \(V_B\) and every edge in \(E_{W_t}\) corresponds to an edge in \(E_B\).
In this case we write \(W_t \sqsubseteq B\) to emphasize that \(W_t\) is realized as a subgraph of the belief graph.
\end{definition}

\begin{definition}[Perceptual input node]
\label{def:perceptual-input-node}
A \emph{perceptual input node} at time \(t\) is a node
\[
v^{\mathrm{perc}} \in V_{W_t} \setminus V_B
\]
such that its payload \(P_{v^{\mathrm{perc}}}\) is a perceptual payload, and either
\[
P_{v^{\mathrm{perc}}} \in \mathcal{P}_{\mathrm{atom}}
\quad\text{or}\quad
P_{v^{\mathrm{perc}}} = G_{\mathrm{perc}} = (V_{\mathrm{perc}}, E_{\mathrm{perc}}, \mathcal{R}_{\mathrm{perc}}, \tau_{\mathrm{perc}}, w_{\mathrm{perc}}),
\]
where \(\mathcal{P}_{\mathrm{atom}}\) is the set of atomic perceptual payloads (Definition~\ref{def:representational-payload}, item~3) and \(G_{\mathrm{perc}}\) is a connected base graph encoding a perceptual scene, feature bundle, or sensorimotor pattern, whose edges draw on spatial, temporal, and property relations from the edge-type taxonomy (Definition~\ref{def:edge-type-taxonomy}, classes \(\mathcal{R}_{\mathrm{spatial}}\) and \(\mathcal{R}_{\mathrm{temp}}\)).

Its activation \(\alpha_t^{W}(v^{\mathrm{perc}})\) is initialized by input strength and evolves under the working-memory transition operator \(U\) unless the node is integrated into \(B\).

The set of perceptual input nodes active at time \(t\) is
\[
V^{\mathrm{perc}}_t = \bigl\{ v \in V_{W_t} \setminus V_B : \bigl(P_v \in \mathcal{P}_{\mathrm{atom}} \lor (\exists\, G_{\mathrm{perc}} \text{ connected},\, P_v = G_{\mathrm{perc}})\bigr) \land \alpha_t^{W}(v) \geq \theta_W \bigr\}.
\]
\end{definition}

Integrating a perceptual input node or chunk from working memory into the belief graph \(B\) is termed \textbf{encoding}: a micro-scale revision that adds the transient node to \(V_B\), grounds edges from \(E_{W_t}\) into \(E_B\), and may align structure with a stored mental model when graph similarity exceeds an architecture-specific threshold (Remark~\ref{rem:encoding-gate}).
Repeated micro-applications over a trajectory \((W_t, \dots, W_{t+n})\) accumulate into a new mental model \(M' \subseteq B'\). Macro-scale compression of recently encoded structure into a single high-level recursive node---\emph{schema consolidation}---reorganizes belief-graph structure at larger scale.

\begin{remark}[Perceptual scene organization]
\label{rem:perceptual-scene-organization}
A currently perceived environment may be represented by a designated subgraph \(S_t^{\mathrm{scene}} \subseteq W_t\) whose nodes are perceptual input nodes or chunked perceptual structures and whose edges draw on spatial, temporal, and containment relations (Definition~\ref{def:edge-type-taxonomy}). This is an interpretation-level specialization of perceptual-input and recursive-node machinery, not a new primitive graph type.
\end{remark}

\begin{remark}[Body-schema subgraph]
\label{rem:body-schema-subgraph}
A \emph{body-schema subgraph} \(B^{\mathrm{body}}=(V_{\mathrm{body}},E_{\mathrm{body}}) \subseteq B\) encodes the agent's model of its own body, proprioceptive sensations, and motor capabilities, with spatial and affordance edges as in \(\mathcal{R}_{\mathrm{spatial}}\) (Definition~\ref{def:edge-type-taxonomy}). Formally, it is a designated belief subgraph selected by functional role rather than a new primitive graph type.
\end{remark}

Working memory often has the same schematic form as the mental model in Figure~\ref{fig:fig03_belief_mental_model}: a relatively small, capacity-limited graph whose nodes are mostly drawn from, and connected back to, the larger belief graph \(B\).
Capacity limits \((C_V, C_E)\), the time-varying weight function \(w_t\), and the possibility of transient non-belief nodes distinguish \(W_t\) from a stored mental model \(M\).
We therefore omit a separate figure.

\subsection{Generic functionals, operators, and constraints}
\label{sec:generic-functionals}

Before introducing the generic toolkit, we fix the role of the main symbol families used throughout this section. Activation maps \(\alpha\) govern node salience and WM membership; graph-property functionals \(\Phi\) return scalar or vector diagnostics of working-memory structure; conflict catalogs \(\mathfrak{C}\) enumerate incompatible region pairs; coherence and involvement functionals \(\mathcal{H}\) and \(\mathcal{I}\) summarize conflict and participation over designated subgraphs; transition operators \(U\) evolve working-memory states; belief-update operators \(\Delta\) revise \(B\) from finite WM trajectories; awareness \(\mathcal{A}\) selects the accessed subgraph of \(W_t\); and trajectory functionals \(\Psi\), including the control functional \(\mathcal{K}\), summarize or regulate temporal evolution.

Most constructs in the formal architecture---graph-defined properties such as subgraph involvement, signed evaluation, coherence, graph-property descriptors, belief update, and trajectory-based regulation---are instances of the generic function types defined here.

\begin{definition}[Activation maps]
\label{def:activation-function}
An \emph{belief-node activation function} at time \(t\) is a mapping
\[
\alpha_t^B : V_B \rightarrow \mathbb{R}_{\ge 0}.
\]
For a subgraph \(X \subseteq B\), we write
\[
\alpha_t^B(X) = \sum_{v \in V_X} \alpha_t^B(v)
\]
for its total activation.
When transient nodes are present in working memory, \(\alpha_t^B\) is extended to a \emph{working-memory activation map}
\[
\alpha_t^{W} : V_{W_t} \rightarrow \mathbb{R}_{\ge 0},
\]
such that \(\alpha_t^{W}(v)=\alpha_t^{B}(v)\) for \(v \in V_B \cap V_{W_t}\), and \(\alpha_t^{W}(v)\) is initialized by input strength for \(v \in V_{W_t}\setminus V_B\), decaying under \(U\) unless the node is integrated into \(B\).
For a subgraph \(X \subseteq W_t\), we write \(\alpha_t^{W}(X) = \sum_{v \in V_X} \alpha_t^{W}(v)\).
Working-memory membership is defined by thresholding against an architecture-specific cutoff \(\theta_W \geq 0\): for \(v \in V_B\), we say that \(v\) is \emph{belief-active in working memory} at \(t\), written \(\mathrm{active}_B(v,W_t)\), when \(\alpha_t^B(v) \geq \theta_W\).
Accordingly, the belief-derived portion of working memory is
\[
V_{W_t}^{B}=\{v \in V_B \cap V_{W_t} : \mathrm{active}_B(v,W_t)\}.
\]
The transient portion is \(V_{W_t}^{\mathrm{tr}} = V_{W_t}\setminus V_B\); a transient node \(v\) is WM-active when \(\alpha_t^{W}(v) \geq \theta_W\).
For \(X \subseteq V_B\), the \emph{active restriction} is
\[
\mathrm{active}(X,W_t)=X\cap V_{W_t}^{B}.
\]
\end{definition}

\begin{definition}[Salience set]
\label{def:salience-set}
A \emph{salience set} at time \(t\) is any designated subset \(X_t \subseteq V_B\).
Metrics over designated node-sets are written in terms of \(\mathrm{active}(X_t, W_t)\).
\end{definition}

\begin{definition}[Node-set satisfaction]
\label{def:node-set-satisfaction}
The \emph{node-set satisfaction} of a finite nonempty salience set \(X_t\) is
\[
S_{\mathrm{sat}}(X_t, W_t) = \frac{|\mathrm{active}(X_t, W_t)|}{|X_t|}.
\]
\end{definition}

\begin{definition}[Graph property functional]
\label{def:graph-property-functional}
A \emph{graph property functional} is any mapping
\[
\Phi : \mathcal{W} \rightarrow \mathcal{D},
\]
where \(\mathcal{D}\) is a descriptor space of scalar or finite tuple outputs. In the present paper, \(\mathcal{D}\) is usually instantiated as \(\mathbb{R}^k\). Individual components are written \(\Phi_j(W_t)\) or with descriptive subscripts; domain-specific components are introduced where the corresponding domain is defined.
\end{definition}

\subsubsection{Incompatibility and conflict}
\label{sec:incompatibility-conflict}

A graph may declare forbidden label combinations and relational patterns.
Working memory and belief are linked by a collation graph when transient
content must be tested against stored structure. The definitions below
specify a single detection rule, a derived conflict catalog, and derived
metrics that summarize active conflict and coherence.

\begin{definition}[Incompatibility constraint]
\label{def:incompatibility}
Let \(G = (V, E, \mathcal{R}, \tau, w)\) be a base graph. An
\emph{incompatibility constraint} on \(G\) is a pair
\((\mathcal{F}, \mathcal{S})\), where:
\begin{enumerate}
    \item \textbf{Predicate incompatibility} \(\mathcal{F}\).
    \(\mathcal{F}\) is a set of label pairs \(\{\ell_1,\ell_2\}\),
    where each \(\ell_i\) is either a node label or an edge type in
    \(\mathcal{R}\).     Each pair \(\{\ell_1, \ell_2\}\) specifies that the
    two labels may not simultaneously apply to the same referent. If
    \(\ell_1\) and \(\ell_2\) are node labels, they may not both hold of
    the same node \(v \in V\); if they are edge types, they may not both
    hold of the same ordered pair \((u,v) \in V \times V\). A subgraph
    \(G' \subseteq G\) \emph{violates} \(\mathcal{F}\) if there exists
    \(\{\ell_1, \ell_2\} \in \mathcal{F}\) such that a single referent in
    \(G'\) carries both labels.

    \item \textbf{Relational incompatibility} \(\mathcal{S}\).
    \(\mathcal{S}\) is a set of forbidden pattern schemas
    \[
    P = (V_P, E_P, \mathcal{R}_P, \tau_P, \omega_P),
    \]
    where \(\omega_P\) is an optional predicate imposing constraints on
    the weights of matched edges. A subgraph \(G' \subseteq G\)
    \emph{matches} \(P\) if there exists a label-preserving isomorphism
    \(f : V_P \to V_{G'}\) such that the induced correspondence preserves
    edges and \(\tau_P\)-labels, and such that \(\omega_P\) is satisfied
    under the weight function \(w\).
\end{enumerate}
Two subgraphs \(X\) and \(Y\) of \(G\) are \emph{incompatible}, written
\(X \bowtie Y\), relative to \((\mathcal{F}, \mathcal{S})\), iff
\(X \cup Y\) violates \(\mathcal{F}\) or matches some
\(P \in \mathcal{S}\). A subgraph \(X\) is \emph{internally incompatible}
if \(X \bowtie X\). The graph \(G\) is \emph{well-formed} under
\((\mathcal{F}, \mathcal{S})\) iff \(G\) is not internally incompatible.
The sets \(\mathcal{F}\) and \(\mathcal{S}\) are assumed to be finite,
or to belong to finitely parameterized families of forbidden patterns, so
that incompatibility checking is decidable. For the belief graph \(B\),
the relation-type set \(\mathcal{R}_B\) is equipped with a declared
constraint \((\mathcal{F}_B, \mathcal{S}_B)\), which specifies the
combinations and patterns that are ill-formed in \(B\) and in any
working-memory graph derived from it.
\end{definition}

\begin{remark}[Examples]
\label{rem:incompatibility-examples}
Predicate incompatibilities include \((\texttt{alive}, \texttt{dead})\)
on the same node, \((r_{\mathrm{causes}}, r_{\mathrm{inhibits}})\) and
\((r_{\mathrm{before}}, r_{\mathrm{after}})\) on the same ordered
node-pair. Structural forbidden patterns include a directed cycle over
\(r_{\mathrm{is\mathrm{-}a}}\), an acyclicity violation over
\(\mathcal{R}_{\mathrm{causal}}\), or opposing signed
\(r_{\mathrm{supports}}\) and \(r_{\mathrm{contradicts}}\) edges on the
same target with incompatible magnitudes via a pattern \(P \in
\mathcal{S}\) whose weight predicate \(\omega_P\) requires
\(w(e_1)\,w(e_2) < 0\) on the matched edges.
\end{remark}

\begin{definition}[WM--belief collation]
\label{def:wm-belief-collation}
Let \(W_t\) be a working-memory graph and \(B\) the belief graph. Write
\(V^{\mathrm{tr}}_t = V_{W_t} \setminus V_B\) for transient WM nodes.
A \emph{grounding correspondence} at time \(t\) is a finite relation
\(\Gamma_t \subseteq V^{\mathrm{tr}}_t \times V_B\) with scores
\(\sigma_t(v,u) \in [0,1]\) measuring similarity between transient node
\(v\) and belief candidate \(u\). Let
\[
\Gamma_t^{\geq\theta_{\mathrm{gnd}}} =
\bigl\{ (v,u) \in \Gamma_t : \sigma_t(v,u) \geq \theta_{\mathrm{gnd}} \bigr\}.
\]
Several transient nodes may ground to the same belief node; a transient node
may have multiple qualifying candidates.
The \emph{WM--belief collation graph} is
\[
G^{\mathrm{col}}_t = \mathrm{Coll}(W_t, B, \Gamma_t),
\]
with node set \(V^{\mathrm{col}}_t = V_B \cup V^{\mathrm{tr}}_t\). It
contains all edges of \(E_B\) and \(E_{W_t}\) whose endpoints lie in
\(V^{\mathrm{col}}_t\), plus an associative grounding edge \((v,u)\) of type
\(r_{\mathrm{related\mathrm{-}to}}\) for each \((v,u) \in \Gamma_t^{\geq\theta_{\mathrm{gnd}}}\).
The graph \(G^{\mathrm{col}}_t\) is \emph{collation-well-formed} when it is
not internally incompatible under \((\mathcal{F}_B, \mathcal{S}_B)\)
(Definition~\ref{def:incompatibility}).
\end{definition}

\begin{definition}[Cross-memory incompatibility]
\label{def:cross-memory-incompatibility}
Let \((\mathcal{F}_B, \mathcal{S}_B)\) be the belief-graph incompatibility
constraint. Regions \(X\) and \(Y\) are \emph{incompatible}, written
\(X \bowtie Y\), relative to \((W_t, B)\) and \((\mathcal{F}_B, \mathcal{S}_B)\),
as follows.
Single-graph regions are tested directly by Definition~\ref{def:incompatibility}.
WM--belief regions are tested in the collation graph \(G^{\mathrm{col}}_t\) (Definition~\ref{def:wm-belief-collation}), and incompatible whenever the induced WM--belief subgraph violates \((\mathcal{F}_B,\mathcal{S}_B)\) or a forbidden schema cross-matches a transient payload region.
For a transient node \(v \in V^{\mathrm{tr}}_t\), the \emph{payload region} is \(V_{\mathrm{pl}}(v) = V_{G_v}\) when \(v\) carries a graph payload \(G_v\) and \(V_{\mathrm{pl}}(v) = \{v\}\) otherwise; a forbidden schema \(P \in \mathcal{S}_B\) \emph{cross-matches} \((v, Y)\) when \(P\) matches the subgraph induced by \(V_{\mathrm{pl}}(v) \cup V_Y\) together with relevant grounding edges \((v,u) \in \Gamma_t^{\geq\theta_{\mathrm{gnd}}}\).
\end{definition}

\begin{remark}[Belief-anchored working memory]
\label{rem:belief-anchored-wm-incompatibility}
When \(W_t \sqsubseteq B\), the WM--belief case of
Definition~\ref{def:cross-memory-incompatibility} reduces to the
single-graph test on \(B\).
\end{remark}

\begin{proposition}[WM--belief conflict catalog]
\label{prop:wm-belief-conflict-catalog}
Given \(W_t\), \(B\), and \(\Gamma_t\), the \emph{WM--belief conflict
catalog}
\[
\mathfrak{C}_{\mathrm{wb}}(W_t, B) =
\bigl\{ (X,Y) : X \subseteq W_t,\; Y \subseteq B,\;
V_Y \cap V_{W_t} = \varnothing,\; X \bowtie Y \bigr\},
\]
is the set of incompatible WM--belief region pairs at \(t\).
\end{proposition}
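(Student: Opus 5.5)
The proposition is essentially definitional: it names a set and asserts that the set is exactly the collection of incompatible WM--belief region pairs at \(t\). The substantive content is that \(\mathfrak{C}_{\mathrm{wb}}(W_t,B)\) is \emph{well-defined}. Concretely, this means the relation \(X \bowtie Y\) is determinate for every admissible pair, and the catalog is a finite, decidable set. The plan is to establish these three facts in turn, relying only on Definitions~\ref{def:incompatibility}, \ref{def:wm-belief-collation}, and \ref{def:cross-memory-incompatibility}.

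\textbf{Step 1: finiteness of the index set.} Since \(W_t\) and \(B\) are base graphs, \(V_{W_t}\), \(E_{W_t}\), \(V_B\), and \(E_B\) are finite by Definition~\ref{def:base-graph}. Hence there are finitely many subgraphs \(X \subseteq W_t\) and \(Y \subseteq B\), and the set of candidate pairs satisfying \(V_Y \cap V_{W_t} = \varnothing\) is finite. The disjointness side condition is checked by finite set comparison. Its role is to ensure that \(Y\) is a genuinely stored region, not a belief node already present in working memory, so that the pair falls under the WM--belief clause of Definition~\ref{def:cross-memory-incompatibility} rather than the single-graph clause.

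\textbf{Step 2: determinacy of \(\bowtie\) on each pair.} For a fixed pair \((X,Y)\), I form the collation graph \(G^{\mathrm{col}}_t\) of Definition~\ref{def:wm-belief-collation}. It is finite because \(\Gamma_t\) is a finite relation and \(\Gamma_t^{\geq\theta_{\mathrm{gnd}}}\) is a subset of it. I then consider the induced subgraph on \(V_X \cup V_Y\) together with the relevant grounding edges, and test it against \((\mathcal{F}_B,\mathcal{S}_B)\).
\begin{itemize}
\item \emph{Predicate violations.} Checking \(\mathcal{F}_B\) amounts to scanning finitely many referents for forbidden label pairs, using finiteness of \(\mathcal{F}_B\).
\item \emph{Structural matches.} Checking \(\mathcal{S}_B\) amounts to searching for label-preserving isomorphisms from each pattern \(P\) into the finite subgraph, then evaluating \(\omega_P\) on the matched weights.
\item \emph{Cross-matches.} For each transient \(v \in V_X \cap V^{\mathrm{tr}}_t\), I form the payload region \(V_{\mathrm{pl}}(v)\), which is finite because \(G_v\) is a base graph, and repeat the same pattern search on \(V_{\mathrm{pl}}(v) \cup V_Y\) with the grounding edges added.
\end{itemize}
When \(W_t \sqsubseteq B\), Remark~\ref{rem:belief-anchored-wm-incompatibility} collapses all of this to the single-graph test of Definition~\ref{def:incompatibility}.

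\textbf{Step 3: conclusion.} The catalog is a finite set, and membership of each pair is decided by Step 2. It therefore coincides with the set of incompatible WM--belief region pairs at \(t\), which is the claim.

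\textbf{Main obstacle.} The delicate point is the decidability assumption when \(\mathcal{S}_B\) is a finitely parameterized family rather than a finite set, for example "a directed \(r_{\mathrm{is\mathrm{-}a}}\) cycle of any length." My approach is to observe that any match must embed into a finite host graph. Only pattern instances with \(|V_P| \le |V^{\mathrm{col}}_t| + \max_v |V_{\mathrm{pl}}(v)|\) can match, which reduces the family to finitely many relevant instances. A second subtlety is that \(\omega_P\) must itself be a decidable predicate on reals. I would state this as an implicit standing assumption, read off from the finiteness/decidability clause of Definition~\ref{def:incompatibility}.
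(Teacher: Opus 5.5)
Your proposal is correct, but it does more than the paper does. The paper gives no argument at all. The proposition works as a naming convention: membership in \(\mathfrak{C}_{\mathrm{wb}}(W_t,B)\) is fixed outright by the set-builder condition together with Definition~\ref{def:cross-memory-incompatibility}. So ``is the set of incompatible WM--belief region pairs'' holds by construction. The side condition \(V_Y \cap V_{W_t} = \varnothing\) simply picks out the WM--Belief class \(Y \subseteq B \setminus W_t\) of Remark~\ref{rem:conflict-classification}.

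Your Steps~1--2 instead verify three things: the defining predicate is determinate, the catalog is finite, and membership is decidable. Existence of the set needs only the first of these, via separation from the finite set of candidate pairs. Decidability is an extra algorithmic property. The paper posits it in Definition~\ref{def:incompatibility} (``so that incompatibility checking is decidable'') rather than deriving it.

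What your route buys is an explicit justification of finiteness, which the paper relies on silently downstream. The count \(\kappa\) in Definition~\ref{def:active-conflict-count} is a cardinality, and \(\mathcal{H}\) in Definition~\ref{def:subgraph-coherence} divides by \(|\mathfrak{C}_Q|\). Your cross-match step also has to build a host graph by adjoining \(G_v\) to \(G^{\mathrm{col}}_t\), since payload nodes are not vertices of the collation graph. The paper leaves this unspecified, but your finiteness argument goes through under any such reading.

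One caveat on your ``main obstacle''. The size bound does reduce a family parameterized by pattern size, such as \(r_{\mathrm{is\mathrm{-}a}}\)-cycle length, to finitely many relevant instances. It does not do so for a family parameterized by real-valued thresholds inside \(\omega_P\), where infinitely many patterns share one shape. In that case you must fall back on the paper's standing decidability assumption rather than the bound.
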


\begin{remark}[Conflict classification]
\label{rem:conflict-classification}
Catalogued pairs \(X \bowtie Y\) are classified by where the incompatible
regions lie:
\begin{enumerate}
    \item \textbf{Intra-WM}: \(X, Y \subseteq W_t\) --- mutually
          incompatible active subgraphs of working memory.
    \item \textbf{WM--Belief}: \(X \subseteq W_t\),
          \(Y \subseteq B \setminus W_t\) --- active WM content conflicts
          with stored inactive belief.
    \item \textbf{Structural}: \(X \subseteq W_t\) --- \(X\) is internally
          incompatible under \(\mathcal{S}_B\).
\end{enumerate}
\end{remark}

\begin{definition}[Conflict catalog constructor]
\label{def:conflict-catalog-constructor}
Let \((\mathcal{F}_B, \mathcal{S}_B)\) be the belief-graph incompatibility
constraint, let \(W_t\), \(B\), and \(\Gamma_t\) specify the current state,
and let \(\mathrm{scope}\) be a predicate on region pairs \((X,Y)\).
The \emph{conflict catalog constructor} is
\[
\mathrm{BuildConflictCatalog}(\mathcal{F}_B, \mathcal{S}_B, \mathrm{scope}, W_t, B, \Gamma_t)
=
\bigl\{ (X,Y) : X \bowtie Y \text{ under Definition~\ref{def:cross-memory-incompatibility}}
\text{ and } \mathrm{scope}(X,Y) \bigr\}.
\]
Domain-specific catalogs such as \(\mathfrak{C}_{\mathrm{self}}\) and
task-restricted confusion catalogs are scope-instantiations of
\(\mathrm{BuildConflictCatalog}\), parametrized entirely by their
\(\mathrm{scope}\) predicate.
\end{definition}

\begin{remark}[Conflict catalog scope]
\label{rem:conflict-catalog-scope}
Typical scope predicates restrict catalogs to intra-region pairs
\((X,Y \subseteq Q)\), WM--belief pairs with \(V_X \cap R \neq \varnothing\)
for a salience set \(R\), or the unrestricted WM--belief catalog
\(\mathfrak{C}_{\mathrm{wb}}(W_t,B)\) of Proposition~\ref{prop:wm-belief-conflict-catalog}.
\end{remark}

\begin{definition}[Active conflict count]
\label{def:active-conflict-count}
Let \(\mathfrak{C}_\kappa\) be a conflict catalog produced by
Definition~\ref{def:conflict-catalog-constructor} or an equivalent finite
enumeration of ordered pairs \((X, Y)\) such that \(X \bowtie Y\) under
Definition~\ref{def:cross-memory-incompatibility}.
An entry \((X,Y)\) is \emph{actively conflicting} at time \(t\) when
\[
(V_X \cup V_Y)\cap V_{W_t} \neq \varnothing,
\]
and the incompatibility witness for \(X \bowtie Y\) involves at least one WM-active node or transient grounding currently present in \(W_t\).
The \emph{active conflict count} in \(W_t\) is
\[
\kappa(\mathfrak{C}_\kappa, W_t) = \bigl|\{(X, Y) \in \mathfrak{C}_\kappa :
(X, Y) \text{ is actively conflicting at } t\}\bigr|.
\]
Domain-specific catalogs are written with descriptive subscripts; the
WM--belief instance is \(\mathfrak{C}_{\mathrm{wb}}(W_t, B)\)
(Proposition~\ref{prop:wm-belief-conflict-catalog}).
\end{definition}

\begin{remark}[Notation]
\label{rem:conflict-catalog-notation}
Conflict catalogs use Fraktur \(\mathfrak{C}\), while the control functional is \(\mathcal{K}\).
\end{remark}

\begin{definition}[Subgraph coherence]
\label{def:subgraph-coherence}
Given a designated subgraph \(Q \subseteq B\) and a conflict catalog
\(\mathfrak{C}_Q\) of region pairs whose entries lie in
\(V_Q \times V_Q\), the \emph{coherence} of \(Q\) in \(W_t\) is
\[
\mathcal{H}(Q, \mathfrak{C}_Q, W_t) =
\begin{cases}
1 & \text{if } \mathfrak{C}_Q = \varnothing, \\
1 - \dfrac{\kappa(\mathfrak{C}_Q, W_t)}{|\mathfrak{C}_Q|} & \text{otherwise}.
\end{cases}
\]
When \(\mathfrak{C}_Q = \varnothing\), coherence is \(1\) by convention
(vacuous coherence: no catalogued conflicts to engage).
Otherwise, \(\mathcal{H} = 1\) means no active conflicts within \(Q\);
\(\mathcal{H} = 0\) means all catalogued conflicts are simultaneously
active.
\end{definition}

\begin{remark}[Encoding gate]
\label{rem:encoding-gate}
Encoding proceeds only when no WM--belief conflict is actively engaged:
\(\kappa(\mathfrak{C}_{\mathrm{wb}}(W_t, B), W_t) = 0\)
(Proposition~\ref{prop:wm-belief-conflict-catalog} and
Definition~\ref{def:active-conflict-count}); otherwise the transient node
remains in \(W_t\) until revision, decay, or re-grounding clears the
conflict.
\end{remark}

\begin{definition}[Subgraph involvement]
\label{def:subgraph-involvement}
Given a designated subgraph \(Q = (V_Q, E_Q) \subseteq B\) and a
working-memory state \(W_t\), the \emph{involvement} of \(Q\) in \(W_t\) is
\[
\mathcal{I}(Q, W_t) = S_{\mathrm{sat}}(V_Q, W_t) = \frac{|\mathrm{active}(V_Q, W_t)|}{|V_Q|}.
\]
We say \(W_t\) is \emph{\(Q\)-involving} when \(\mathcal{I}(Q, W_t) > 0\),
and \emph{fully \(Q\)-involving} when \(\mathcal{I}(Q, W_t) = 1\).
\end{definition}

Domain-specific involvement measures instantiate
\(\mathcal{I}\) (Definition~\ref{def:subgraph-involvement})
with domain-appropriate subgraphs \(Q\); these are introduced
where each domain is defined.

In working-memory graphs, \(\alpha_t^{W}\) is the primary dynamic quantity. Edge weights \(w_t\) on \(E_{W_t}\) may be derived from endpoint activations, e.g.\ \(w_t(u,v) = \alpha_t^{W}(u)\,\alpha_t^{W}(v)\) for \(e=(u,v)\in E_{W_t}\).

\begin{definition}[Signed subgraph evaluation]
\label{def:signed-subgraph-evaluation}
Given a designated subgraph \(Q \subseteq B\) and a signing function
\(\sigma_Q : V_Q \rightarrow \mathbb{R}\) that assigns positive weights
to favorable nodes and negative weights to unfavorable ones, the
\emph{signed evaluation} of \(Q\) in \(W_t\) is
\[
\mathcal{E}(Q, \sigma_Q, W_t) = \sum_{v \in \mathrm{active}(V_Q, W_t)}
\sigma_Q(v)\,\alpha_t^{W}(v).
\]
Grounding scores \(\sigma_t\) (Definition~\ref{def:wm-belief-collation}) use a separate symbol.
\end{definition}

\begin{remark}[Affective valence as evaluative interpretation]
\label{rem:affective-valence-interpretation}
Given a designated subgraph \(Q \subseteq B\)---for example an attachment subgraph \(A_{\mathrm{att}}\)---affective valence is read as a signed evaluative interpretation over involvement \(\mathcal{I}(Q,W_t)\), subgraph coherence \(\mathcal{H}(Q,\mathfrak{C}_Q,W_t)\), and signed evaluation \(\mathcal{E}(Q,\sigma_Q,W_t)\) when a signing function \(\sigma_Q\) is supplied. Domain-specific conflict catalogs \(\mathfrak{C}_Q\) specialize the generic conflict machinery (Definition~\ref{def:conflict-catalog-constructor}); no separate valence operator family is introduced beyond combining maps over these quantities.
\end{remark}

\begin{remark}[Attachment goal and satisfaction]
\label{def:attachment-goal}
An \emph{attachment goal} is a designated subgraph \(A_{\mathrm{att}}=(V_{A_{\mathrm{att}}},E_{A_{\mathrm{att}}}) \subseteq B\) whose continued accessibility or working-memory presence is preferred. Given a retrievability predicate \(\mathrm{retrievable}(v,B,W_t)\), \emph{attachment satisfaction} is the involvement-style quantity
\[
S_{\mathrm{att}}(A_{\mathrm{att}},W_t)
=
\frac{\bigl|\{v \in V_{A_{\mathrm{att}}} : \mathrm{active}(v,W_t)\ \text{or}\ \mathrm{retrievable}(v,B,W_t)\}\bigr|}{|V_{A_{\mathrm{att}}}|},
\]
a specialization of Definitions~\ref{def:node-set-satisfaction} and~\ref{def:subgraph-involvement}.
\end{remark}

\begin{definition}[Working-memory transition operator]
\label{def:wm-transition}
A \emph{working-memory transition operator} is any function
\[
U : \mathcal{W} \times \mathcal{B} \rightarrow \mathcal{W},
\]
that maps a working-memory state \(W_t\) and belief graph \(B\) to a successor state \(W_{t+1} = U(W_t, B)\), updating both graph structure and the activation maps \(\alpha_t^B\) and \(\alpha_t^{W}\). Input-driven, retrieval-driven, decay, and integration operations are instances of \(U\).
\end{definition}

\begin{definition}[Subgraph salience bias]
\label{def:subgraph-salience-bias}
Given a designated subgraph \(Q \subseteq B\) and a salience coefficient \(\beta_Q \geq 1\), a \emph{subgraph salience bias} parametrizes the working-memory transition operator \(U\) (Definition~\ref{def:wm-transition}) by supplying a salience map \(\beta^{\mathrm{sal}}_Q : V_B \rightarrow \mathbb{R}_{\ge 1}\) with \(\beta^{\mathrm{sal}}_Q(v) = \beta_Q\) for \(v \in V_Q\) and \(\beta^{\mathrm{sal}}_Q(v) = 1\) otherwise (distinct from the scalar coefficient \(\beta_Q\)).
Under this parametrization, an \emph{activation update map} \(U_{\mathrm{act}}(\alpha_t^{W},\beta^{\mathrm{sal}}_Q)\) applies multiplicative gain on activation update and reduced decay on nodes in \(V_Q\) (distinct from the WM transition operator \(U\), Definition~\ref{def:wm-transition}):
\[
\alpha_{t+1}^{W}(v) = U_{\mathrm{act}}\bigl(\alpha_t^{W}, \beta^{\mathrm{sal}}_Q\bigr)(v) =
\begin{cases}
\beta_Q \cdot (\alpha_t^{W})^{\mathrm{base}}(v) & v \in V_Q, \\
(\alpha_t^{W})^{\mathrm{base}}(v) & v \notin V_Q,
\end{cases}
\]
where \((\alpha_t^{W})^{\mathrm{base}}\) is the WM activation profile produced by the unbiased transition step, and decay on \(v \in V_Q\) is scaled by \(1/\beta_Q\) relative to nodes outside \(V_Q\) of equivalent base activation.
Setting \(\beta_Q = 1\) recovers standard activation dynamics with no subgraph bias.
Domain-specific biases are parametrizations of this definition by choice of \(Q\) and \(\beta_Q\), governed by the single multiplicative-gain and decay-scaling rule specified above.
\end{definition}

\begin{remark}[Attachment as salience parametrization]
\label{def:attachment-salience-param}
Let \(A_{\mathrm{att}}=(V_{A_{\mathrm{att}}},E_{A_{\mathrm{att}}}) \subseteq B\) be a designated attachment subgraph. Attachment is modeled as a parametrization of \(U\) with salience coefficient \(\beta_{\mathrm{att}} \ge 1\) on \(V_{A_{\mathrm{att}}}\), or a node-wise map \(\beta^{\mathrm{sal}}_{\mathrm{att}} : V_B \to \mathbb{R}_{\ge 1}\) that amplifies retrieval and reduces decay on attached nodes. Attachment does not introduce a primitive beyond the salience-bias rule above; it is persistence- and retrieval-biased regulation over designated content, optionally sustained by control outputs from \(\mathcal{K}\).
\end{remark}

\begin{definition}[Parameter bundle]
\label{def:parameter-bundle}
An architecture \emph{parameter bundle} is a tuple
\[
\Theta = (\theta_W,\, \theta_{\mathrm{gnd}},\, \theta_{\mathrm{sim}},\, \theta_{\mathrm{dist}},\, \theta_{\mathrm{inv}},\, \theta_{\mathrm{coh}},\, \{\beta_Q\}_{Q \in \mathcal{Q}_{\mathrm{des}}}),
\]
where \(\theta_W\) governs WM membership (Definition~\ref{def:activation-function}), \(\theta_{\mathrm{gnd}}\) governs grounding correspondence in WM--belief collation (Definition~\ref{def:wm-belief-collation}), \(\theta_{\mathrm{sim}}\) governs graph-similarity predicates, \(\theta_{\mathrm{dist}}\) governs distraction thresholds on \(\Phi_{\mathrm{dist}}\), \(\theta_{\mathrm{inv}}\) governs involvement predicates on \(\mathcal{I}\), \(\theta_{\mathrm{coh}}\) governs coherence predicates on \(\mathcal{H}\), and \(\{\beta_Q\}_{Q \in \mathcal{Q}_{\mathrm{des}}}\) supplies salience coefficients for designated subgraphs \(Q \subseteq B\) under Definition~\ref{def:subgraph-salience-bias} (\(\mathcal{Q}_{\mathrm{des}}\) is the family of designated subgraphs, distinct from any single subgraph symbol \(Q\)).
Throughout the paper, threshold directions are fixed by quantity type: activation and similarity predicates use lower bounds; conflict and distraction predicates use upper bounds or nonzero tests; coherence predicates use lower bounds; and binary indicators are defined explicitly case by case. Application sections may assign threshold values, but they do not alter these comparison directions.
Application sections instantiate \(\Theta\) by assigning values to these slots.
\end{definition}

\begin{remark}[Development as parameter regime]
\label{rem:development-parameter-regime}
Developmental differences are modeled as regime changes over already defined architectural degrees of freedom, including working-memory capacity bounds \((C_V,C_E)\), admissible relation-type subsets of \(\mathcal{R}_B\), allowable payload complexity and recursive depth, salience and threshold parameters in \(\Theta\), and the class of belief-update applications permitted under \(\Delta_{\bullet}\). Developmental-stage theories should therefore be read as qualitative or quantitative regimes over these parameters rather than as additions to the ontology.
\end{remark}

\begin{definition}[Fragmentation]
\label{def:fragmentation}
Let \(W_t\) have connected components \(C_1, \dots, C_k\) (ignoring edge direction). The \emph{fragmentation component} of the graph property functional is
\[
\Phi_{\mathrm{frag}}(W_t) = k.
\]
We write \(F(W_t) = \Phi_{\mathrm{frag}}(W_t)\) and say that \(W_t\) is \emph{fragmented} if \(F(W_t) > 1\).
\end{definition}

\begin{definition}[Belief-update operator]
\label{def:belief-update-operator}
A \emph{belief-update operator} is any function
\[
\Delta_{\bullet} : \mathcal{B} \times \mathcal{W}^* \rightarrow \mathcal{B},
\]
that maps a belief graph \(B\) and a finite sequence of working-memory states \((W_{t-k}, \dots, W_t)\) to an updated belief graph \(B'\), using operations such as node and edge addition, deletion, retyping, and modification of representational payloads.
Two canonical application interfaces are:
\begin{align*}
\mathrm{ApplyMicroUpdate}(B, (W_{t-k},\dots,W_t), \mathfrak{C}_\kappa, \Theta)
&\mapsto B' \\
\mathrm{ApplyMacroUpdate}(B, (W_{t-k},\dots,W_t), \mathfrak{C}_\kappa, \Theta, \tau_{\mathrm{trans}})
&\mapsto B',
\end{align*}
where \(\mathfrak{C}_\kappa\) is a conflict catalog
(Definition~\ref{def:conflict-catalog-constructor}),
\(\Theta\) is a parameter bundle (Definition~\ref{def:parameter-bundle}),
and \(\tau_{\mathrm{trans}} \in \{\mathrm{strict}, \mathrm{permit}\}\) controls whether
ill-formed intermediate graphs are allowed during multi-step restructuring (distinct from the edge-type map \(\tau\) on base graphs, Definition~\ref{def:base-graph}).
\(\mathrm{ApplyMicroUpdate}\) modifies a proper subgraph while preserving global topology and requires \(\kappa(\mathfrak{C}_{\mathrm{wb}}(W_t,B),W_t)=0\) at commit (Remark~\ref{rem:encoding-gate}).
\(\mathrm{ApplyMacroUpdate}\) may reorganize entire mental models or designated subgraphs \(Q \subseteq B\); when \(\tau_{\mathrm{trans}}=\mathrm{permit}\), transient violations of \((\mathcal{F}_B,\mathcal{S}_B)\) are allowed provided the committed output is well-formed.
Specific instances are introduced where the corresponding domain is defined.
A belief-update application is \emph{micro} if it is realizable as \(\mathrm{ApplyMicroUpdate}\); it is \emph{macro} if it requires \(\mathrm{ApplyMacroUpdate}\) with nontrivial restructuring.
\end{definition}

\begin{remark}[Procedurally compiled subgraph]
\label{def:procedurally-compiled-subgraph}
A procedurally compiled subgraph is a designated belief subgraph \(S^{\mathrm{cmp}}=(V_{S^{\mathrm{cmp}}},E_{S^{\mathrm{cmp}}}) \subseteq B\) produced by repeated use, chunking, or schema consolidation, such that it can be retrieved into working memory with reduced explicit intermediate expansion. In skill-acquisition terms, this is a procedural specialization of recursive compression and chunk formation already permitted under \(\Delta_{\bullet}\).
\end{remark}

\begin{definition}[Trajectory functional]
\label{def:trajectory-functional}
A \emph{trajectory functional} is any mapping
\[
\Psi : \mathcal{W}^* \rightarrow \mathcal{Z},
\]
from finite sequences of working-memory graphs to a representation space \(\mathcal{Z}\) (e.g., control signals, temporally annotated graphs, or scalar summaries).
Concrete trajectory functionals are introduced in subsequent subsections.
\end{definition}

\begin{definition}[Trajectory label sets]
\label{def:trajectory-label-sets}
The canonical temporal and provenance label sets are
\[
\mathfrak{L}_{\mathrm{temp}}=\{\mathrm{static},\mathrm{dynamic},\mathrm{persisting},\mathrm{changing}\},
\qquad
\mathfrak{L}_{\mathrm{prov}}=\{\mathrm{input},\mathrm{retrieval},\mathrm{integration},\mathrm{decay},\mathrm{chunk}\}.
\]
A temporal trajectory functional is a mapping
\[
\mathcal{T} : \mathcal{W}^* \times 2^{V_B} \rightarrow \mathfrak{L}_{\mathrm{temp}}
\]
that assigns each designated subgraph \(X\) a label in \(\mathfrak{L}_{\mathrm{temp}}\) relative to a finite trajectory \((W_{t-k},\dots,W_t)\), satisfying time-consistency: labels depend only on the restriction of the trajectory to nodes in \(X\) and must be invariant under padding with identical copies of \(W_t\).
A provenance trajectory functional is a mapping
\[
\Pi : \mathcal{W}^* \times 2^{V_B} \rightarrow \mathfrak{L}_{\mathrm{prov}}
\]
that classifies the most recent transition-induced origin of subgraph content under the working-memory transition operator \(U\) (Definition~\ref{def:wm-transition}).
Both \(\mathcal{T}\) and \(\Pi\) are instances of the generic trajectory functional \(\Psi\) (Definition~\ref{def:trajectory-functional}) with representation space \(\mathcal{Z} = \mathfrak{L}_{\mathrm{temp}} \times 2^{V_B}\) or \(\mathfrak{L}_{\mathrm{prov}} \times 2^{V_B}\).
Architecture choices supply the decision rules that assign labels; the label sets and signatures are fixed by this definition.
\end{definition}

\noindent\textbf{Remark.}
Working-memory dynamics form a sequence \(\{W_t\}_{t \in T_{\mathrm{step}}}\) over discrete time steps \(T_{\mathrm{step}} = \{0, 1, 2, \dots\}\) (distinct from task specification index \(T\)), together with activation maps \(\alpha_t^B\) and \(\alpha_t^{W}\) and transition operator \(U\). Each step \(W_{t+1} = U(W_t, B)\) results from four operation classes: input-driven activation, retrieval-driven activation, decay and pruning, and integration. Perceptual input nodes (Definition~\ref{def:perceptual-input-node}) and encoding (Remark~\ref{rem:encoding-gate}) formalize the input-to-belief stages of this pipeline.

\begin{remark}[Belief-update admissibility]
\label{rem:belief-update-admissibility}
The encoding gate of Remark~\ref{rem:encoding-gate} is enforced on
\emph{commit} steps of \(\mathrm{ApplyMicroUpdate}\)
(Definition~\ref{def:belief-update-operator}).
\(\mathrm{ApplyMacroUpdate}\) with \(\tau_{\mathrm{trans}}=\mathrm{permit}\)
may traverse transient ill-formed intermediate graphs provided the
committed output restores well-formedness under \((\mathcal{F}_B,\mathcal{S}_B)\).
\end{remark}

\subsection{Self-image and self-related processing}
\label{sec:self}

The self-related constructs below are direct specializations of the generic toolkit introduced in Section~\ref{sec:generic-functionals}: self-salience is a node-set satisfaction measure, self-involvement is a subgraph involvement measure, self-evaluation is a signed subgraph evaluation, self-consistency is a subgraph coherence measure, and self-update is a belief-update operation on the designated self-image subgraph \(S_{\mathrm{self}}\).

\begin{definition}[Self-image subgraph]
\label{def:self-image-subgraph}
Let \(B = (V_B, E_B, \mathcal{R}_B, \tau_B, w_B)\) be the belief graph. A \emph{self-image subgraph} is a connected subgraph
\[
S_{\mathrm{self}} = (V_{\mathrm{self}}, E_{\mathrm{self}}) \subseteq B
\]
distinguished by an anchor node \(v_{\mathrm{self}} \in V_{\mathrm{self}}\) whose payload denotes the agent as self.

The node set \(V_{\mathrm{self}}\) contains nodes representing:
\begin{enumerate}
    \item bodily attributes and interoceptive states,
    \item autobiographical episodes and remembered events,
    \item trait-like self-descriptors,
    \item goals, intentions, commitments, and values,
    \item social roles and identity-linked relational nodes,
    \item meta-representations of the self, including beliefs about how the self is seen by others.
\end{enumerate}

The edge set \(E_{\mathrm{self}}\) may include causal, containment, temporal, associative, evidential, and spatial relations, restricted to the relation classes already admitted by Definition~\ref{def:edge-type-taxonomy}. The self-image subgraph is well-formed when it is connected, internally compatible under \((\mathcal{F}_B,\mathcal{S}_B)\), and anchored by either one designated self node or one designated self anchor cluster treated as a single identity unit.
\end{definition}

\begin{remark}[Self-image interpretation]
\label{rem:self-image-interpretation}
The self-image subgraph is not a special kind of graph outside the ontology; it is a designated subgraph of \(B\) selected by functional role rather than by a new formal primitive. Its content may change over time through encoding, consolidation, revision, suppression, or reweighting, but its identity is preserved by the anchor structure and by the continuity of self-referential links.
\end{remark}

\begin{definition}[Self-related salience]
\label{def:self-salience}
A subgraph \(Q \subseteq B\) is \emph{self-related} if \(V_Q \cap V_{\mathrm{self}} \neq \varnothing\) or if \(Q\) is linked to \(S_{\mathrm{self}}\) by one or more self-referential edges in \(E_B\).
The self-salience of \(Q\) at time \(t\) is an activation-weighted measure of how much self-related content in \(Q\) is currently present in working memory:
\[
S_{\mathrm{self-sat}}(Q, W_t) = S_{\mathrm{sat}}(V_Q \cap V_{\mathrm{self}}, W_t)
\]
when \(V_Q \cap V_{\mathrm{self}} \neq \varnothing\) (Definition~\ref{def:node-set-satisfaction}), and \(0\) otherwise.
The score is activation-weighted through the underlying node-set satisfaction measure.
\end{definition}

\begin{definition}[Self-involvement]
\label{def:self-involvement}
The \emph{self-involvement} of the current working-memory state \(W_t\) is
\[
\mathcal{I}_{\mathrm{self}}(W_t) = \mathcal{I}(S_{\mathrm{self}}, W_t)
= \frac{|\mathrm{active}(V_{\mathrm{self}}, W_t)|}{|V_{\mathrm{self}}|}
\]
(Definition~\ref{def:subgraph-involvement}).
We say that \(W_t\) is \emph{self-involving} when \(\mathcal{I}_{\mathrm{self}}(W_t) > 0\), and \emph{fully self-involving} when \(\mathcal{I}_{\mathrm{self}}(W_t) = 1\).
\end{definition}

\begin{definition}[Self-evaluation]
\label{def:self-evaluation}
Let \(\sigma_{\mathrm{eval}} : V_{\mathrm{self}} \rightarrow \mathbb{R}\) be a valuation assigning positive weight to desirable self-nodes and negative weight to undesirable self-nodes.
The \emph{self-evaluation} of \(W_t\) is
\[
\mathcal{E}_{\mathrm{self}}(W_t) =
\mathcal{E}(S_{\mathrm{self}}, \sigma_{\mathrm{eval}}, W_t) =
\sum_{v \in \mathrm{active}(V_{\mathrm{self}}, W_t)} \sigma_{\mathrm{eval}}(v)\,\alpha_t^{W}(v).
\]
(Definition~\ref{def:signed-subgraph-evaluation}.)
This quantity summarizes the affectively weighted activation of the self-image.
\end{definition}

\begin{definition}[Self-consistency]
\label{def:self-consistency}
Let
\[
\mathfrak{C}_{\mathrm{self}} =
\mathrm{BuildConflictCatalog}(\mathcal{F}_B, \mathcal{S}_B, \mathrm{scope}_{\mathrm{self}}, W_t, B, \Gamma_t),
\]
where \(\mathrm{scope}_{\mathrm{self}}\) retains region pairs internal to \(S_{\mathrm{self}}\) (Definition~\ref{def:conflict-catalog-constructor}).
The \emph{self-consistency} of \(S_{\mathrm{self}}\) in \(W_t\) is
\[
\mathcal{H}_{\mathrm{self}}(W_t)
=
\mathcal{H}(S_{\mathrm{self}}, \mathfrak{C}_{\mathrm{self}}, W_t).
\]
(Definition~\ref{def:subgraph-coherence}.)
A value of \(1\) means that no catalogued self-conflicts are actively engaged, while \(0\) means that all catalogued self-conflicts are active.
\end{definition}

\begin{remark}[Self-conflict source]
\label{rem:self-conflict-source}
Self-conflict may arise from incompatible self-descriptors, contradictory autobiographical claims, misaligned goals, or incompatible role commitments.
The self-consistency measure evaluates conflicts identified by the general incompatibility framework (Definition~\ref{def:incompatibility}) restricted to \(S_{\mathrm{self}}\).
\end{remark}

\begin{definition}[Self-update]
\label{def:self-update}
A \emph{self-update} is an instance of \(\mathrm{ApplyMicroUpdate}\) or \(\mathrm{ApplyMacroUpdate}\) (Definition~\ref{def:belief-update-operator}) whose application revises the self-image subgraph \(S_{\mathrm{self}} \subseteq B\) in response to self-relevant input.
The update may:
\begin{enumerate}
    \item add a new self-node,
    \item delete an obsolete self-node,
    \item strengthen or weaken a self-edge,
    \item retype a self-relation,
    \item resolve a self-conflict,
    \item compress a repeated pattern into a higher-level self-schema.
\end{enumerate}
We write \(\Delta_{\mathrm{self}}\) for a self-update instance.
\end{definition}

\begin{remark}[Self-update admissibility]
Micro-scale trait or edge revisions use \(\mathrm{ApplyMicroUpdate}\) and respect the encoding gate at commit. Macro-scale self-schema reorganization may use \(\mathrm{ApplyMacroUpdate}\) with \(\tau_{\mathrm{trans}}=\mathrm{permit}\) when transient ill-formed intermediate states are required. In all cases, the committed output must preserve graph well-formedness under \((\mathcal{F}_B,\mathcal{S}_B)\).
\end{remark}

\begin{definition}[Self-referential activation bias]
\label{def:self-referential-bias}
A \emph{self-referential activation bias} instantiates subgraph salience bias (Definition~\ref{def:subgraph-salience-bias}) for \(Q = S_{\mathrm{self}}\) with coefficient \(\beta_{\mathrm{self}}\) from the parameter bundle \(\Theta\) (Definition~\ref{def:parameter-bundle}), inducing salience map \(\beta^{\mathrm{sal}}_{S_{\mathrm{self}}}\):
\[
\alpha_{t+1}^{W}(v) = U_{\mathrm{act}}\bigl(\alpha_t^{W}, \beta^{\mathrm{sal}}_{S_{\mathrm{self}}}\bigr)(v),
\]
so that nodes \(v \in V_{\mathrm{self}}\) receive multiplicative gain and reduced decay under \(U\) as specified in Definition~\ref{def:subgraph-salience-bias}.
This bias increases the likelihood that self-related nodes enter working memory and remain active across short time intervals.
\end{definition}

\begin{remark}[Self-related processing loop]
\label{rem:self-processing-loop}
Self-related processing proceeds as a loop between activation, selection, evaluation, and update.
Self-related content enters working memory through salience bias, is evaluated by the current self-image, and may trigger update when inconsistency persists or when new evidence systematically improves a rival self-organization \cite{Moutoussis2014}.
\end{remark}

\begin{definition}[Narrative self-coherence]
\label{def:narrative-self-coherence}
For a finite working-memory trajectory \((W_{t_0}, W_{t_1}, \dots, W_{t_n}) \in \mathcal{W}^*\), narrative self-coherence instantiates involvement and coherence functionals (Definitions~\ref{def:subgraph-involvement} and~\ref{def:subgraph-coherence}) over the self-image subgraph:
\[
N_{\mathrm{self}}(W_{t_0}, \dots, W_{t_n}) =
\frac{1}{n+1} \sum_{i=0}^{n} \mathcal{I}_{\mathrm{self}}(W_{t_i}) \cdot \mathcal{H}_{\mathrm{self}}(W_{t_i}),
\]
where \(\mathcal{I}_{\mathrm{self}}(W_{t_i}) = \mathcal{I}(S_{\mathrm{self}}, W_{t_i})\) (Definition~\ref{def:self-involvement}).
This score is high when self-related content is both frequently active and internally consistent across time.
\end{definition}

\begin{remark}[Narrative interpretation]
\label{rem:narrative-interpretation}
Narrative self-coherence is a composite diagnostic rather than a primitive construct.
It combines the extent to which self-content is active with the degree to which that content remains conflict-minimized over time.
\end{remark}

\subsection{Awareness functional}
\label{sec:awareness}

Awareness is the capacity of the system to have access to any part of \(W_t\)---fragmented or not, partial or whole. The awareness functional \(\mathcal{A}\) maps each working-memory state to the subgraph currently accessed; the control functional \(\mathcal{K}\), defined below, inspects working-memory trajectories and emits control operations, operating on whatever portion of \(W_t\) is in awareness.

\begin{definition}[Awareness functional]
\label{def:awareness-functional}
An \emph{awareness functional} is a mapping
\[
\mathcal{A} : \mathcal{W} \rightarrow 2^{\mathcal{W}},
\]
where \(2^{\mathcal{W}}\) denotes the power set of subgraphs of working-memory graphs. For a working-memory state \(W_t\), the value \(\mathcal{A}(W_t)\) is a subgraph \(A_t \subseteq W_t\)---the portion of working memory that the system currently has access to. A node or subgraph \(X \subseteq W_t\) is \emph{in awareness} at \(t\) if \(X \subseteq \mathcal{A}(W_t)\).

The system is said to be \emph{aware} at \(t\) if \(\mathcal{A}(W_t) \neq \varnothing\), i.e., at least some content of \(W_t\) is accessed. No connectivity constraint is imposed: \(\mathcal{A}(W_t)\) may be a single node, a fragment, or any subgraph of \(W_t\).

Unlike trajectory functionals \(\Psi\) (Definition~\ref{def:trajectory-functional}), which map finite WM sequences to summary representations, \(\mathcal{A}\) is a state-level access selector: its input is a single \(W_t\), not a trajectory \((W_{t-k},\dots,W_t)\). Temporal and regulatory summaries over trajectories---including control via \(\mathcal{K}\) (Section~\ref{sec:control}) and transformed-awareness annotations via \(\mathcal{T}\) and \(\Pi\) (Section~\ref{sec:transformed-awareness})---are supplied by the \(\Psi\)-family; \(\mathcal{A}\) specifies which subgraph of \(W_t\) those summaries annotate at time \(t\).
\end{definition}

Awareness thus provides the basic access relation to \(W_t\); higher states such as transformed awareness and insightful perception are defined in the following subsection.

\begin{definition}[Total awareness]
\label{def:total-awareness}
A working-memory state \(W_t\) is in \emph{total awareness} at \(t\) if
\[
\mathcal{A}(W_t) = W_t,
\]
i.e., the awareness functional covers the entirety of the current working-memory graph, including all nodes and edges in the current working-memory graph. Total awareness is the special case of awareness in which no content of \(W_t\) is excluded from access.
\end{definition}

\subsection{Transformed awareness}
\label{sec:transformed-awareness}

The constructs in this subsection form a hierarchy of binary predicates over awareness and trajectory annotations, built from the awareness functional \(\mathcal{A}\), trajectory functionals \(\mathcal{T}\) and \(\Pi\), and the conflict count \(\kappa\).
\emph{Transformed awareness} is awareness \(\mathcal{A}(W_t)\) together with a temporal annotation on an accessed subgraph.
\emph{Temporal objectification} holds when that annotation is temporal; \emph{psychological time} holds when at least one accessed subgraph is temporally objectified.
\emph{Provenance objectification} and \emph{conflict objectification} are parallel predicates defined by provenance and conflict annotations respectively.
\emph{Insightful perception} is the disjunction of the objectification predicates (together with evaluative annotation); \emph{total insight} is insightful perception under total awareness.
These predicates are existentially quantified over finite recent trajectories but are evaluated as binary outcomes at the current time \(t\).

\begin{definition}[Transformed awareness]
\label{def:transformed-awareness}
Let \(W_t\) be a working-memory graph at time \(t\), and let \(\mathcal{A}(W_t) \subseteq W_t\) be the subgraph in awareness. The state \(W_t\) exhibits \emph{transformed awareness} if there exists \(k \geq 0\), a finite trajectory \((W_{t-k},\dots,W_t)\) on the interval \([t-k,t]\), a subgraph \(X \subseteq \mathcal{A}(W_t)\), and a temporal trajectory functional \(\mathcal{T}\) (Definition~\ref{def:trajectory-label-sets}) such that \(\mathcal{T}(W_{t-k},\dots,W_t, X)\) assigns to \(X\) a label in \(\mathfrak{L}_{\mathrm{temp}}\).
\end{definition}

\begin{definition}[Temporal objectification]
\label{def:temporal-objectification}
Let \(W_t\) be a working-memory state, let \(k \geq 0\), and let \((W_{t-k},\dots,W_t)\) be a finite trajectory on \([t-k,t]\). A subgraph \(X \subseteq W_t\) is \emph{temporally objectified} at time \(t\) if
\[
X \subseteq \mathcal{A}(W_t)
\]
and a temporal trajectory functional \(\mathcal{T}\) (Definition~\ref{def:trajectory-label-sets}) assigns to \(X\) a label in \(\mathfrak{L}_{\mathrm{temp}}\) relative to \((W_{t-k},\dots,W_t)\).
Thus, temporal objectification is a specialized form of transformed awareness.
\end{definition}

\begin{definition}[Psychological time]
\label{def:psychological-time}
Let \(W_t\) be a working-memory state and let \(k \geq 0\). A state \(W_t\) exhibits \emph{psychological time} on \([t-k,t]\) if there exists a subgraph \(X \subseteq \mathcal{A}(W_t)\) and a temporal trajectory functional \(\mathcal{T}\) (Definition~\ref{def:trajectory-label-sets}) such that \(\mathcal{T}(W_{t-k},\dots,W_t, X)\) assigns to \(X\) a label in \(\mathfrak{L}_{\mathrm{temp}}\).
Equivalently, psychological time is present exactly when at least one subgraph in awareness is temporally objectified.
\end{definition}

\begin{definition}[Provenance objectification]
\label{def:provenance-objectification}
Let \(W_t\) be a working-memory state, let \(k \geq 0\), and let \((W_{t-k},\dots,W_t)\) be a finite trajectory on \([t-k,t]\). A subgraph \(X \subseteq W_t\) is \emph{provenance-objectified} at time \(t\) if
\[
X \subseteq \mathcal{A}(W_t)
\]
and a provenance trajectory functional \(\Pi\) (Definition~\ref{def:trajectory-label-sets}) assigns to \(X\) a label in \(\mathfrak{L}_{\mathrm{prov}}\) relative to \((W_{t-k},\dots,W_t)\).
Thus, provenance objectification is a specialized form of transformed awareness.
\end{definition}

\begin{definition}[Conflict objectification]
\label{def:conflict-objectification}
Let \(W_t\) be a working-memory state and let \(\mathfrak{C}_\kappa\) be a domain-specific conflict catalog (Definition~\ref{def:active-conflict-count}) whose entries are defined under Definition~\ref{def:cross-memory-incompatibility}. A subgraph \(X \subseteq W_t\) is \emph{conflict-objectified} at time \(t\) if \(X \subseteq \mathcal{A}(W_t)\) and \(X\) is counted as actively conflicting by \(\kappa(\mathfrak{C}_\kappa,W_t)\).
\end{definition}

\begin{definition}[Conflict-aware transformed awareness]
\label{def:conflict-aware-transformed-awareness}
The state \(W_t\) exhibits \emph{conflict-aware transformed awareness} if there exists a subgraph \(X \subseteq W_t\) that is conflict-objectified at \(t\) under Definition~\ref{def:conflict-objectification}.
\end{definition}

\begin{definition}[Insightful perception]
\label{def:insightful-perception}
Let \(W_t\) be a working-memory state at time \(t\), let \(k \geq 0\), and let \((W_{t-k},\dots,W_t)\) be a finite trajectory on \([t-k,t]\). The state \(W_t\) exhibits \emph{insightful perception} at \(t\) if there exists a WM-active subgraph \(X \subseteq W_t\) (Definition~\ref{def:activation-function}) such that \(X \subseteq \mathcal{A}(W_t)\) and at least one of the following predicates holds:
\begin{enumerate}
    \item \(X\) is temporally objectified at \(t\) under Definition~\ref{def:temporal-objectification},
    \item \(X\) is conflict-objectified at \(t\) under Definition~\ref{def:conflict-objectification},
    \item there exist a designated subgraph \(Q \subseteq B\) and an architecturally designated signing function \(\sigma_Q : V_Q \to \mathbb{R}\) (e.g., \(\sigma_{\mathrm{eval}}\) when \(Q = S_{\mathrm{self}}\), Definition~\ref{def:self-evaluation}) such that \(X \subseteq Q\) and \(\mathcal{E}(Q,\sigma_Q,W_t) \neq 0\) with nonzero contribution from \(\mathrm{active}(V_X,W_t)\) (Definition~\ref{def:signed-subgraph-evaluation}),
    \item \(X\) is provenance-objectified at \(t\) under Definition~\ref{def:provenance-objectification}.
\end{enumerate}
\end{definition}

\begin{definition}[Total insight]
\label{def:total-insight}
Let \(W_t\) be a working-memory state at time \(t\), let \(k \geq 0\), and let \((W_{t-k},\dots,W_t)\) be a finite trajectory on \([t-k,t]\). The state \(W_t\) exhibits \emph{total insight} at \(t\) if
\[
\mathcal{A}(W_t)=W_t
\]
and at least one predicate in Definition~\ref{def:insightful-perception} holds with \(X=W_t\).
Equivalently, total insight is insightful perception on the whole working-memory graph under total awareness (Definition~\ref{def:total-awareness}).
\end{definition}

\begin{remark}
Definitions~\ref{def:transformed-awareness}--\ref{def:total-insight} annotate accessed subgraphs with temporal, provenance, conflict, and evaluative descriptors drawn from the formal architecture.
\end{remark}

\begin{remark}[Annotated transformed awareness]
\label{rem:annotated-transformed-awareness}
The higher-order annotations available within transformed awareness may include:
\begin{enumerate}
    \item temporal labels supplied by \(\mathcal{T}\) on \(\mathfrak{L}_{\mathrm{temp}}\),
    \item conflict status supplied by conflict catalogs and \(\kappa(\mathfrak{C}_\kappa,W_t)\),
    \item signed evaluations \(\mathcal{E}(Q,\sigma_Q,W_t)\) (Definition~\ref{def:signed-subgraph-evaluation}),
    \item provenance labels supplied by \(\Pi\) on \(\mathfrak{L}_{\mathrm{prov}}\) (Definition~\ref{def:trajectory-label-sets}),
    \item representational-format markers inherited from linguistic, perceptual, or graph payloads (Definition~\ref{def:representational-payload}).
\end{enumerate}
These annotations are descriptors defined over accessed subgraphs.
\end{remark}

\subsection{Control functional}
\label{sec:control}

The control functional is the regulation-level counterpart of the awareness functional. Where \(\mathcal{A}\) selects the subgraph currently accessed at time \(t\), the control functional maps recent working-memory trajectories to operations that alter subsequent processing.

\begin{definition}[Control functional]
\label{def:control-functional}
A \emph{control functional} is a trajectory functional
\[
\mathcal{K} : \mathcal{W}^* \rightarrow \mathfrak{O},
\]
where \(\mathfrak{O}\) is a space of control operations on working memory and retrieval dynamics.
\end{definition}

\begin{definition}[Control operation]
\label{def:control-operation}
A \emph{control operation} is any operation emitted by \(\mathcal{K}\) that modulates subsequent working-memory evolution under the transition operator \(U\) (Definition~\ref{def:wm-transition}). Typical instances include:
\begin{enumerate}
    \item attentional focusing on a designated subgraph \(Q\),
    \item retrieval amplification from belief graph regions linked to \(Q\),
    \item suppression or pruning of distractor subgraphs,
    \item persistence bias on designated active nodes,
    \item conflict-resolution routing when \(\kappa(\mathfrak{C}_\kappa,W_t)>0\) for a catalog \(\mathfrak{C}_\kappa\) from Definition~\ref{def:conflict-catalog-constructor},
    \item self-directed regulation when self-related quantities exceed thresholds in the parameter bundle \(\Theta\) (Definition~\ref{def:parameter-bundle}).
\end{enumerate}
The set of all admissible control operations is denoted \(\mathfrak{O}\), the codomain of \(\mathcal{K}\). A control operation \(c \in \mathfrak{O}\) is \emph{local} if it modifies a proper subgraph of \(W_t\) or a single mental model in \(B\); it is \emph{global} if it reorganizes the large-scale structure of \(W_t\) or \(B\).
\end{definition}

\begin{definition}[Control-sensitive trajectory]
\label{def:control-sensitive-trajectory}
A finite trajectory \((W_{t-k},\dots,W_t)\) is \emph{control-sensitive} if \(\mathcal{K}(W_{t-k},\dots,W_t)\) depends on at least one graph-property quantity defined over the trajectory, including but not limited to fragmentation \(F(W_t)\) (Definition~\ref{def:fragmentation}), self-involvement \(\mathcal{I}(S_{\mathrm{self}},W_t)\) (Definition~\ref{def:self-involvement}), self-consistency \(\mathcal{H}(S_{\mathrm{self}},\mathfrak{C}_{\mathrm{self}},W_t)\) (Definition~\ref{def:self-consistency}), signed evaluation \(\mathcal{E}(S_{\mathrm{self}},\sigma_{\mathrm{eval}},W_t)\) (Definition~\ref{def:self-evaluation}), or active conflict counts \(\kappa(\mathfrak{C}_\kappa,W_t)\) (Definition~\ref{def:active-conflict-count}).
\end{definition}

\begin{definition}[Awareness-conditioned control]
\label{def:awareness-conditioned-control}
A control functional \(\mathcal{K}\) is \emph{awareness-conditioned} if its output at time \(t\) depends only on subgraphs contained in \(\mathcal{A}(W_t)\) together with trajectory-derived annotations available within transformed awareness.
\end{definition}

\begin{proposition}[Total awareness as maximal control access]
\label{prop:total-awareness-control}
If \(\mathcal{A}(W_t)=W_t\), then any awareness-conditioned control functional has access to the full working-memory graph at time \(t\). Hence total awareness is the maximal-input case for awareness-conditioned control.
\end{proposition}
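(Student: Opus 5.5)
The plan is to unfold Definition~\ref{def:awareness-conditioned-control} and substitute the hypothesis \(\mathcal{A}(W_t)=W_t\). By that definition, an awareness-conditioned control functional \(\mathcal{K}\) has an output at time \(t\) that depends only on two kinds of input. The first is subgraphs \(X\) with \(X \subseteq \mathcal{A}(W_t)\). The second is trajectory-derived annotations available within transformed awareness, which by Definitions~\ref{def:transformed-awareness}--\ref{def:provenance-objectification} are again attached to subgraphs of \(\mathcal{A}(W_t)\). I will therefore define the \emph{admissible input family} at \(t\) as
\[
\mathfrak{I}(W_t) = \{X : X \subseteq \mathcal{A}(W_t)\},
\]
together with the annotations on its members. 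The key observation is that this family is monotone in \(\mathcal{A}(W_t)\) with respect to subgraph inclusion.

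\textbf{Step 1 (access).} Under total awareness (Definition~\ref{def:total-awareness}) we have \(\mathfrak{I}(W_t)=\{X : X\subseteq W_t\}\). In particular \(W_t\) itself is admissible, along with every node and edge of it. This shows that \(\mathcal{K}\) may depend on the whole working-memory graph, which is the first claim.

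\textbf{Step 2 (maximality).} By Definition~\ref{def:awareness-functional}, every awareness functional satisfies \(\mathcal{A}(W_t)\subseteq W_t\). Monotonicity then gives \(\mathfrak{I}_{\mathcal{A}}(W_t)\subseteq\mathfrak{I}_{\mathcal{A}'}(W_t)\) for any \(\mathcal{A}'\) with \(\mathcal{A}'(W_t)=W_t\). Hence the set of functions of \(W_t\) that an awareness-conditioned \(\mathcal{K}\) may realize is largest under total awareness, and the input set is bounded above by the full graph \(W_t\). I will also check that the annotation sets (labels from \(\mathcal{T}\) and \(\Pi\), the count \(\kappa\), and \(\mathcal{E}\)) over subgraphs of \(\mathcal{A}(W_t)\) only grow as \(\mathcal{A}(W_t)\) grows.

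The main obstacle is making "access" and "maximal-input" precise. The paper states dependence informally, so I will formalize it as measurability or factorization: \(\mathcal{K}\) factors through the restriction map \(W_t\mapsto \mathcal{A}(W_t)\) plus its annotations. Maximality then amounts to the fact that the restriction to all of \(W_t\) is the finest such factorization, since the identity map is the finest map available. A secondary subtlety is that annotations also depend on earlier states \(W_{t-k},\dots\). I will handle this by noting that Definition~\ref{def:trajectory-label-sets} makes labels depend only on the trajectory restricted to \(X\). Enlarging the set of admissible \(X\) therefore enlarges the available annotations monotonically, without requiring any constraint on past awareness.
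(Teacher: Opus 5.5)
Your proof is correct and follows the same route as the paper, which gives no separate proof and treats the claim as immediate from Definition~\ref{def:awareness-conditioned-control}, the inclusion \(\mathcal{A}(W_t)\subseteq W_t\) (Definition~\ref{def:awareness-functional}), and \(\mathcal{A}(W_t)=W_t\) (Definition~\ref{def:total-awareness}). Your factorization formalization and annotation-monotonicity check go beyond the paper but are harmless; one small correction is that the locality property you cite from Definition~\ref{def:trajectory-label-sets} is stated only for \(\mathcal{T}\), not for \(\Pi\), and you do not need it, because monotonicity follows just from enlarging the index set of admissible \(X\).
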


\begin{remark}
When \(\mathcal{A}(W_t)\subsetneq W_t\), awareness-conditioned control operates on a proper subgraph of the available working-memory content. This provides a formal account of regulation under partial access: control may fail not because relevant structure is absent from \(W_t\), but because it lies outside \(\mathcal{A}(W_t)\).
\end{remark}

\begin{remark}
Insightful perception and conflict-aware transformed awareness are diagnostically relevant to control because they enrich the annotations available on accessed subgraphs. In particular, temporally objectified and conflict-annotated subgraphs provide \(\mathcal{K}\) with trajectory-level information that is not available from static access alone.
\end{remark}

\begin{remark}[Toolkit instantiation roadmap]
\label{rem:toolkit-roadmap}
The generic toolkit (Section~\ref{sec:generic-functionals}) is instantiated progressively across the formal architecture: self-related processing in Section~\ref{sec:self}, awareness in Section~\ref{sec:awareness}, transformed awareness in Section~\ref{sec:transformed-awareness}, and control in Section~\ref{sec:control}. Task phenomena and ontological extensions are developed as further instantiations in subsequent sections.
\end{remark}

\section{Graph-Theoretic Characterization of Phenomena}
\label{sec:phenomena}

In this section, each phenomenon is introduced as either a scalar measure, a binary predicate, a trajectory-level diagnostic, or a mapping-level interpretation of the core toolkit from Section~\ref{sec:generic-functionals}.
Each construct introduced below has one primary mathematical status and should not switch category within the same definition.
Each construct reuses the activation maps \(\alpha_t^B\) and \(\alpha_t^{W}\), graph-property functionals \(\Phi\), incompatibility and conflict catalogs \(\mathfrak{C}_\kappa\), subgraph coherence, subgraph involvement, signed subgraph evaluation, trajectory functionals \(\Psi\), the working-memory transition operator \(U\), and belief-update operators \(\Delta_{\bullet}\).
Fragmentation, recognition, distraction, and confusion are direct \(\Phi\)-instantiations and predicates over working-memory graphs.
Cognitive load is a mapping-level translation of Sweller's Cognitive Load Theory \cite{Sweller1988,SwellerChandler1994,Sweller2010}.
Conceptual change is a belief-update process that restructures theory-bearing subgraphs of \(B\), potentially changing node labels, edge types, and compatibility relations.
Mental-state profiles below are trajectory-level diagnostics stated as propositions over configurations already defined.
The aim is to show how familiar cognitive phenomena arise as structured properties of working-memory graphs and their interaction with the belief graph \(B\).

\subsection{Fragmentation}

Throughout this section we write \(H_{\mathrm{coh}}\) for subgraph coherence \(\mathcal{H}\), \(I_{\mathrm{inv}}\) for subgraph involvement \(\mathcal{I}\), \(E_{\mathrm{eval}}\) for signed subgraph evaluation \(\mathcal{E}\), \(H_{\mathrm{ent}}\) for Shannon entropy over component masses (distinct from \(H_{\mathrm{coh}}\)), and \(L_{\mathrm{int}}, L_{\mathrm{ext}}, L_{\mathrm{ger}}\) for the cognitive-load measures in Section~\ref{sec:cognitive-load}.
Fragmentation \(F(W_t)\) is a scalar measure already defined in Definition~\ref{def:fragmentation}; the subsections below instantiate \(\Phi\) and conflict-catalog predicates over working-memory graphs.

\nestfig[Fragmented working memory with multiple weakly related components \(C_1, C_2, C_3\).]{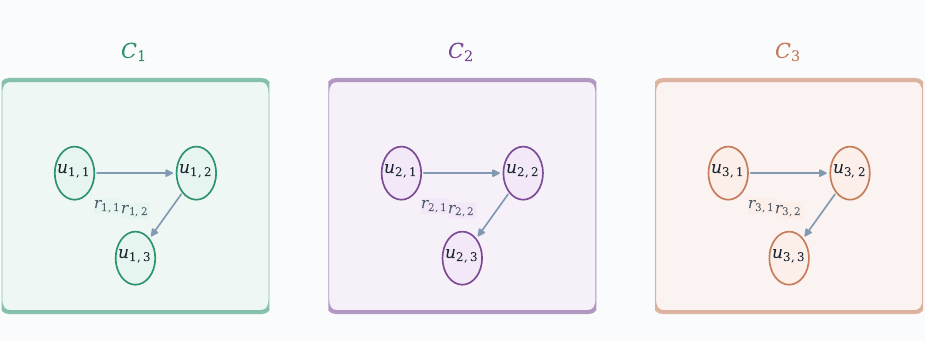}

Figure~\ref{fig:fig05_fragmentation} depicts a fragmented state with \(F(W_t) > 1\) (Definition~\ref{def:fragmentation}).

\subsection{Recognition as graph matching}

Recognition is a binary predicate over partial graph matching.
We write \(\mathrm{sim}_{\mathrm{match}}\) for match similarity, \(f_{\mathrm{match}}\) for the injective partial-match map, \(\theta_{\mathrm{sim}}\) for the recognition threshold, and \(\mathcal{M}_{\mathrm{lib}}\) for the candidate mental-model library (distinct from mental model \(M\) and from the model-library symbol \(\mathcal{M}\) when used elsewhere).

\begin{definition}[Graph similarity]
\label{def:graph-similarity}
Let \(G_{\mathrm{q}} = (V_{\mathrm{q}}, E_{\mathrm{q}})\) be a query graph and \(M = (V_M, E_M)\) a candidate graph with typed edges. A \emph{partial match} from \(G_{\mathrm{q}}\) to \(M\) is an injective mapping \(f_{\mathrm{match}} : V_{\mathrm{q}} \rightarrow V_M\) preserving a specified subset of edge types from the edge-type taxonomy (Definition~\ref{def:edge-type-taxonomy}). Let \(m_V = |f_{\mathrm{match}}(V_{\mathrm{q}})|\) and let \(m_E\) be the number of edges in \(E_{\mathrm{q}}\) whose images are present in \(E_M\) with matching types under \(f_{\mathrm{match}}\). The \emph{similarity measure} of \(G_{\mathrm{q}}\) to \(M\) under \(f_{\mathrm{match}}\) is
\[
\mathrm{sim}_{\mathrm{match}}(G_{\mathrm{q}}, M; f_{\mathrm{match}}) = \lambda_{\mathrm{match}} \frac{m_V}{|V_{\mathrm{q}}|} + (1-\lambda_{\mathrm{match}})\frac{m_E}{|E_{\mathrm{q}}|},
\]
for some \(\lambda_{\mathrm{match}} \in [0,1]\) in the parameter bundle \(\Theta\) (Definition~\ref{def:parameter-bundle}).
\end{definition}

\begin{definition}[Recognition event]
\label{def:recognition-event}
Recognition is a derived binary predicate over graph similarity (Definition~\ref{def:graph-similarity}) with threshold \(\theta_{\mathrm{sim}}\) from \(\Theta\).
Let \(W_t\) be the current working-memory graph and \(\mathcal{M}_{\mathrm{lib}}\) a library of candidate mental models in \(B\). A \emph{recognition event} occurs when there exists \(M \in \mathcal{M}_{\mathrm{lib}}\) and a partial match \(f_{\mathrm{match}}\) such that
\[
\mathrm{sim}_{\mathrm{match}}(W_t, M; f_{\mathrm{match}}) \geq \theta_{\mathrm{sim}}.
\]
A complementary \emph{failed recognition} case holds when \(X \subseteq W_t\) partially matches \(M \in \mathcal{M}_{\mathrm{lib}}\) under \(f_{\mathrm{match}}\) with \(\mathrm{sim}_{\mathrm{match}}(X, M; f_{\mathrm{match}}) \geq \theta_{\mathrm{sim}}\) but \(X \bowtie M\) in \(G^{\mathrm{col}}_t = \mathrm{Coll}(W_t, B, \Gamma_t)\) (Definitions~\ref{def:wm-belief-collation} and~\ref{def:cross-memory-incompatibility}): the episode matches structurally but is rejected by WM--belief incompatibility and contributes to \(\kappa(\mathfrak{C}_{\mathrm{wb}}(W_t, B), W_t)\) (Proposition~\ref{prop:wm-belief-conflict-catalog}).
\end{definition}

\nestfig[Recognition as partial graph matching between \(W_t\) and a stored mental model \(M\).]{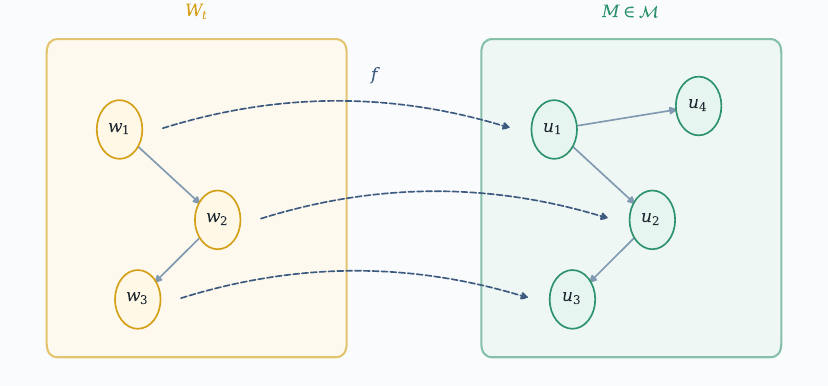}

Figure~\ref{fig:fig06_recognition} illustrates recognition as thresholded partial matching between \(W_t\) and a stored mental model \(M\).

\subsection{Distraction}

\begin{definition}[Distraction measure]
\label{def:distraction-measure}
Distraction is a derived scalar measure over \(\Phi\) (Definition~\ref{def:graph-property-functional}).
Let \(W_t\) have connected components \(C_1, \dots, C_k\) with \(k = F(W_t)\) (Definition~\ref{def:fragmentation}), and let \(p_i = \alpha_t^{W}(C_i) / \alpha_t^{W}(V_{W_t})\) be the normalized activation mass on component \(C_i\). The \emph{distraction measure} is
\[
\Phi_{\mathrm{dist}}(W_t) = H_{\mathrm{ent}}(\boldsymbol{p}) = -\sum_{i=1}^k p_i \log p_i,
\]
where \(H_{\mathrm{ent}}\) denotes Shannon entropy over the component mass vector \(\boldsymbol{p}=(p_1,\dots,p_k)\).
\end{definition}

\begin{definition}[Distracted predicate]
\label{def:distracted-predicate}
The \emph{distracted} predicate holds when \(F(W_t) > 1\) and \(\Phi_{\mathrm{dist}}(W_t) > \theta_{\mathrm{dist}}\) for the distraction threshold \(\theta_{\mathrm{dist}}\) in the parameter bundle \(\Theta\) (Definition~\ref{def:parameter-bundle}). Distraction is distinct from partial awareness: when \(W_t\) is fragmented, \(\mathcal{A}(W_t)\) may cover one component, several components, or any other subgraph of \(W_t\) without requiring a single connected access region.
\end{definition}

\subsection{Confusion}

Confusion is a binary predicate over three failure modes: disconnected task structure, active incompatibility inside the task-relevant subgraph, and WM--belief conflict.
Distraction (Definition~\ref{def:distraction-measure}) is a separate scalar measure based on fragmented activation distribution.

\begin{definition}[Confusion]
\label{def:confusion}
The \emph{confused} predicate is built from conflict catalogs (Definition~\ref{def:conflict-catalog-constructor}) and subgraph coherence \(H_{\mathrm{coh}}\).
Let \(R^{\mathrm{wm}} \subseteq V_{W_t}\) be the task-relevant WM node set for a given task representation, and let \(E_{R^{\mathrm{wm}}}\) be the edges among them.
Let \(G^{\mathrm{conf}}_{R^{\mathrm{wm}}} = (R^{\mathrm{wm}}, E_{R^{\mathrm{wm}}})\) denote the induced task-relevant subgraph (salience set \(R_t \subseteq V_B\) is a separate symbol).
Let
\[
\mathfrak{C}_{\mathrm{conf}}^{\mathrm{intra}}(R^{\mathrm{wm}}) =
\mathrm{BuildConflictCatalog}(\mathcal{F}_B, \mathcal{S}_B, \mathrm{scope}_{\mathrm{intra}}(R^{\mathrm{wm}}), W_t, B, \Gamma_t),
\]
where \(\mathrm{scope}_{\mathrm{intra}}(R^{\mathrm{wm}})\) retains intra-WM pairs \((X,Y)\) with \(V_X \cup V_Y \subseteq R^{\mathrm{wm}}\).
Let
\[
\mathfrak{C}_{\mathrm{conf}}^{\mathrm{wb}}(R^{\mathrm{wm}}) =
\mathrm{BuildConflictCatalog}(\mathcal{F}_B, \mathcal{S}_B, \mathrm{scope}_{\mathrm{wb}}(R^{\mathrm{wm}}), W_t, B, \Gamma_t),
\]
where \(\mathrm{scope}_{\mathrm{wb}}(R^{\mathrm{wm}})\) retains WM--belief pairs \((X,Y)\) with \(V_X \cap R^{\mathrm{wm}} \neq \varnothing\).
The \emph{confused} predicate relative to \(R^{\mathrm{wm}}\) is
\[
\mathrm{Confused}(W_t; R^{\mathrm{wm}}) \iff \mathrm{Disconnected}(R^{\mathrm{wm}}) \lor \mathrm{Incoherent}(R^{\mathrm{wm}}) \lor \mathrm{BeliefConflict}(R^{\mathrm{wm}}),
\]
where
\[
\mathrm{Disconnected}(R^{\mathrm{wm}}) \iff (R^{\mathrm{wm}}, E_{R^{\mathrm{wm}}}) \text{ is disconnected},
\]
\[
\mathrm{Incoherent}(R^{\mathrm{wm}}) \iff H_{\mathrm{coh}}(G^{\mathrm{conf}}_{R^{\mathrm{wm}}}, \mathfrak{C}_{\mathrm{conf}}^{\mathrm{intra}}(R^{\mathrm{wm}}), W_t) < \theta_{\mathrm{coh}},
\]
\[
\mathrm{BeliefConflict}(R^{\mathrm{wm}}) \iff \kappa(\mathfrak{C}_{\mathrm{conf}}^{\mathrm{wb}}(R^{\mathrm{wm}}), W_t) > 0,
\]
and \(\theta_{\mathrm{coh}}\) is the coherence threshold in \(\Theta\) (Definition~\ref{def:parameter-bundle}).
The corresponding \emph{confusion indicator} is the \(0/1\) quantity \(\Phi_{\mathrm{conf}}(W_t; R^{\mathrm{wm}}) = 1\) if and only if \(\mathrm{Confused}(W_t; R^{\mathrm{wm}})\).
The three disjuncts are \emph{under-connection} (\(\mathrm{Disconnected}\)), \emph{intra-WM incoherence} (\(\mathrm{Incoherent}\)), and \emph{cross-memory conflict} (\(\mathrm{BeliefConflict}\)).
\end{definition}

\subsection{Sweller's cognitive load in graph terms}
\label{sec:cognitive-load}

Cognitive load is modeled as a three-part decomposition into intrinsic, extraneous, and germane components, each expressed as a graph-theoretic functional on \(B\) and \(W_t\).
In this framework, intrinsic, extraneous, and germane load are represented by distinct graph-theoretic diagnostics over task structure, working-memory structure, and update-induced restructuring.
The load constructs below are mapping-level translations of Sweller's intrinsic, extraneous, and germane load distinction \cite{Sweller1988,SwellerChandler1994,Sweller2010}.
We write \(L_{\mathrm{load}}\) for the family \(\{L_{\mathrm{int}}, L_{\mathrm{ext}}(W_t), L_{\mathrm{ger}}\}\); each member is a distinct scalar measure over task schemas, working-memory graphs, or belief-graph updates.

\begin{definition}[Cognitive load measures]
\label{def:cognitive-load-components}
The three load measures are scalar measures defined as follows.
\begin{enumerate}
    \item \textbf{Intrinsic load.} Given a task that requires representing a minimal schema \(S^\ast = (V^\ast, E^\ast)\),
    \[
    \Phi_{\mathrm{load\mathrm{-}int}}(S^\ast) = L_{\mathrm{int}} = |V^\ast| + \lambda |E^\ast|,
    \]
    where \(\lambda \geq 0\) weights edge complexity.
    \item \textbf{Extraneous load.} For a working-memory state \(W_t\) during task execution, with task-relevant belief nodes \(R_t \subseteq V_B\) (Definition~\ref{def:salience-set}) and \(E^{\mathrm{sal}}_t \subseteq E_{W_t}\) the edges with both endpoints in \(\mathrm{active}(R_t, W_t)\),
    \[
    \Phi_{\mathrm{load\mathrm{-}ext}}(W_t) = L_{\mathrm{ext}}(W_t) = |V_{W_t} \setminus R_t| + \lambda |E_{W_t} \setminus E^{\mathrm{sal}}_t|,
    \]
    counting unnecessary nodes and edges, where \(E^{\mathrm{sal}}_t\) is distinct from confusion edge sets \(E_{R^{\mathrm{wm}}}\) (Definition~\ref{def:confusion}).
    \item \textbf{Germane load.} The germane load measure \(L_{\mathrm{ger}}\) quantifies coherence-preserving restructuring across an update \(B \to B'\), for example via \(\delta_{\mathrm{coh}}(B,B')\).
    For belief graphs \(B\) and \(B'\) produced during a sequence \((W_t, \dots, W_{t+n})\),
    \[
    \Phi_{\mathrm{load\mathrm{-}ger}}(B, B') = L_{\mathrm{ger}} = \delta_{\mathrm{coh}}(B, B'),
    \]
    where \(\delta_{\mathrm{coh}}\) quantifies the increase in structural coherence, e.g., via reductions in contradictions or increases in alignment between schemas and mental models.
\end{enumerate}
\end{definition}

\nestfigloadcaption[Intrinsic load \(L_{\mathrm{int}}=|V^\ast|+\lambda|E^\ast|\) for minimal task schemas \(S^\ast=(V^\ast,E^\ast)\) of increasing size and interconnectivity (left to right: \(|V^\ast|=2,|E^\ast|=1\); \(|V^\ast|=3,|E^\ast|=2\); \(|V^\ast|=4,|E^\ast|=5\)). Highlighted nodes and edges denote the task-relevant subgraph that must be coordinated in working memory.]{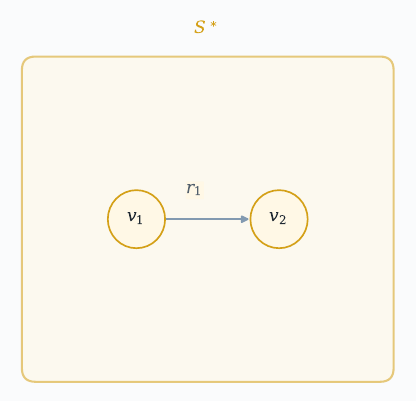}{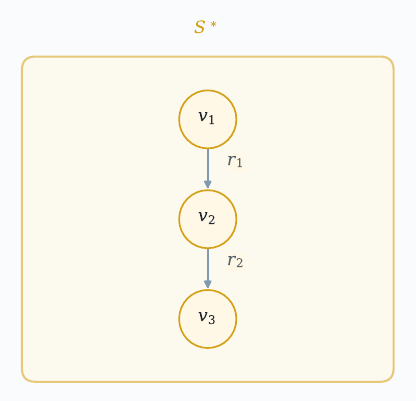}{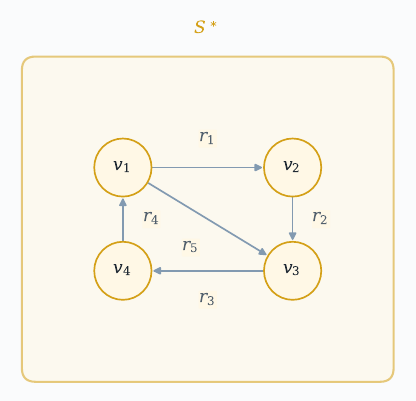}{fig06_intrinsic_load}

Figure~\ref{fig:fig06_intrinsic_load} illustrates intrinsic load over three minimal task schemas of increasing size and interconnectivity.

The intrinsic measure \(\Phi_{\mathrm{load\mathrm{-}int}}\) captures the inherent complexity of the minimal task schema \(S^\ast\): how many task-relevant elements and relations must be simultaneously maintained and coordinated in working memory.
Tasks that require many mutually constraining relations (e.g., multi-step algebra, multi-body physics problems) naturally induce higher intrinsic load because the corresponding subgraph is larger and more densely connected, even when instruction is optimal \cite{Sweller1988,SwellerChandler1994,Sweller2010}.

\begin{definition}[Effective element interactivity]
\label{def:effective-element-interactivity}
Let \(R_t \subseteq V_B\) be the task-relevant salience set (Definition~\ref{def:salience-set}) for a given task at time \(t\).
For a learner with belief graph \(B\), the \emph{effective element interactivity} scalar is
\[
E_{\mathrm{eff}}(R_t, B) = \min \bigl\{ k : R_t \subseteq \bigcup_{i=1}^{k} S^{\mathrm{ch}}_i,\ S^{\mathrm{ch}}_i \subseteq V_B\ \text{is a stored schema or chunk in } B \bigr\},
\]
i.e., the minimal number of stored chunks needed to cover the task-relevant elements.
\end{definition}

The quantity \(E_{\mathrm{eff}}(R_t, B)\) modulates how schema structure in \(B\) affects experienced intrinsic load: for a novice with few or poorly organized schemas, many task elements remain separate, so \(E_{\mathrm{eff}}\) is large; for an expert whose belief graph contains larger, well-integrated chunks that cover most of \(R_t\), \(E_{\mathrm{eff}}\) is smaller and the same task imposes lower intrinsic load \cite{Sweller1988,SwellerChandler1994,Sweller2010}.

\nestfigloadcaption[Extraneous load \(L_{\mathrm{ext}}(W_t)=|V_{W_t}\setminus R_t|+\lambda|E_{W_t}\setminus E^{\mathrm{sal}}_t|\) in working memory \(W_t\): green nodes lie in the task-relevant set \(R_t\); muted nodes and edges contribute avoidable complexity (left to right: \(|V_{W_t}\setminus R_t|=0\); \(=1\); \(=3\) with a separate extraneous subgraph).]{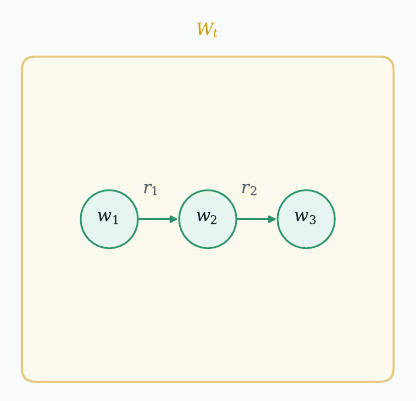}{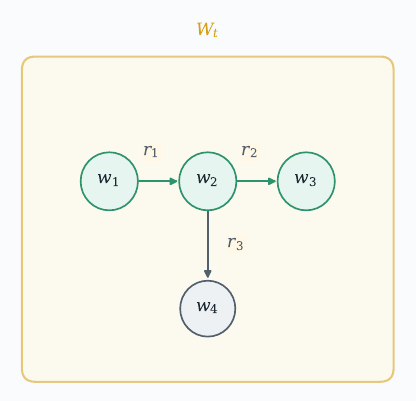}{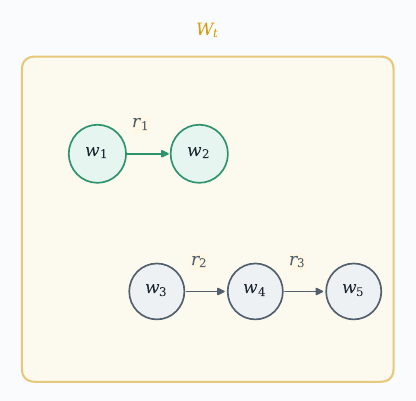}{fig07_extraneous_load}

Figure~\ref{fig:fig07_extraneous_load} shows extraneous load in three working-memory states with increasing avoidable complexity.

Extraneous load measures the portion of working-memory complexity that is not strictly required by the task as defined.
In graph terms, it corresponds to nodes and edges that are active in \(W_t\) but lie outside the ideal task-relevant subgraph, or that reflect unnecessarily convoluted representations of the same content (e.g., split attention, redundant but unintegrated information, confusing layouts).
Instructional designs that scatter related information across multiple components, use misleading examples, or introduce irrelevant details increase \(L_{\mathrm{ext}}(W_t)\) by activating and maintaining extra structure that competes with the task-relevant graph \cite{SwellerChandler1994,Sweller2010}.

\nestfigloadcaption[Germane load \(L_{\mathrm{ger}}=\delta_{\mathrm{coh}}(B,B')\): belief graph \(B\) updated to \(B'\) via productive restructuring (left to right: chunk formation by adding internal links among previously isolated nodes; integration via a new cross-link; alignment by bridging a mental-model subset \(M=\{v_1,v_2\}\) into the wider belief graph).]{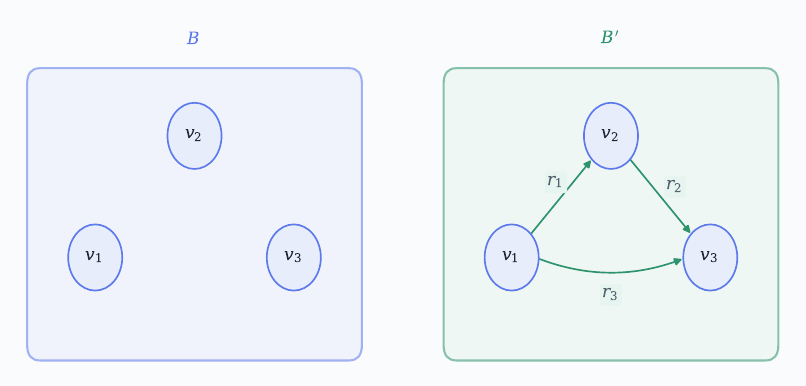}{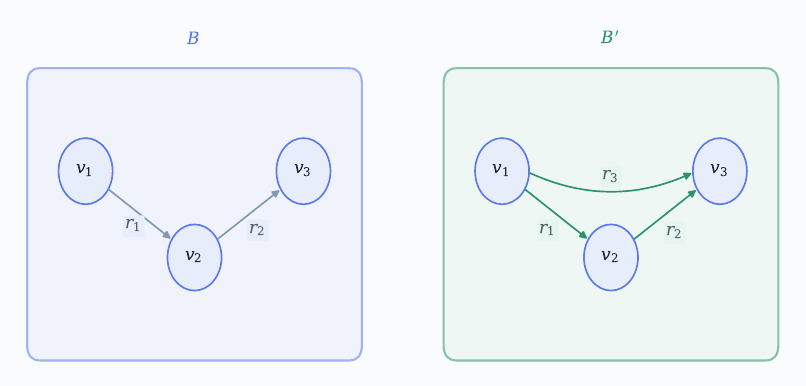}{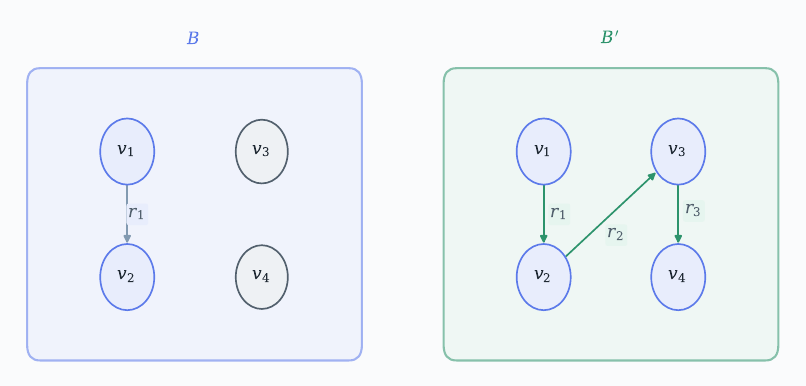}{fig08_germane_load}

Figure~\ref{fig:fig08_germane_load} shows germane load as three belief-graph updates \(B \to B'\) with increasing productive restructuring.

Germane load captures the portion of cognitive effort productively invested in schema construction and refinement rather than in coping with task complexity or poor presentation.
In the present formalism, germane load corresponds to graph operations that expand, reorganize, or integrate the belief graph in useful ways---for example, forming new chunks, adding explanatory causal links, or compressing multiple instances into a more general schema.
Thus, two states with similar total activation may differ in germane load: one may devote most of its resources to maintaining a large but static task subgraph, whereas another actively invokes belief-update operators \(\Delta_{\bullet}\) that improve future efficiency \cite{Sweller2010}.

Taken together, Definition~\ref{def:cognitive-load-components} recasts the standard Cognitive Load Theory distinction as parallel scalar measures: \(\Phi_{\mathrm{load\mathrm{-}int}}\) tracks inherent task-schema complexity, \(E_{\mathrm{eff}}(R_t,B)\) tracks how chunk structure in \(B\) modulates that complexity for a given learner, \(\Phi_{\mathrm{load\mathrm{-}ext}}\) tracks avoidable complexity in working memory, and \(\Phi_{\mathrm{load\mathrm{-}ger}}\) tracks productive belief-graph restructuring.
This makes it possible, in principle, to derive load-sensitive predictions from structural properties of \(W_t\) and from productive belief-graph restructuring within NEST.

\begin{proposition}[Trade-off constraint]
Assuming bounded working-memory capacity, increases in \(L_{\mathrm{ext}}\) reduce the available capacity for intrinsic and germane processing. Formally, for fixed WM capacity \(C_{\mathrm{WM}}\),
\[
L_{\mathrm{int}} + L_{\mathrm{ext}} + L_{\mathrm{ger}} \leq C_{\mathrm{WM}}.
\]
\end{proposition}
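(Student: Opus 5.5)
The plan is to read the inequality as a \emph{resource-accounting} statement: all three load terms are charged to one common budget, namely the node-and-edge content of \(W_t\) weighted as in Definition~\ref{def:cognitive-load-components}. I will fix
\[
C_{\mathrm{WM}} := C_V + \lambda C_E ,
\]
using the capacity bounds \(|V_{W_t}|\le C_V\) and \(|E_{W_t}|\le C_E\) from the definition of working-memory graphs and the same \(\lambda\ge 0\) that appears in \(L_{\mathrm{int}}\) and \(L_{\mathrm{ext}}\). With this choice it suffices to exhibit three pairwise-disjoint subsets of \(V_{W_t}\cup E_{W_t}\) whose weighted sizes bound \(L_{\mathrm{int}}\), \(L_{\mathrm{ext}}\) and \(L_{\mathrm{ger}}\). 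Summing then gives
\[
L_{\mathrm{int}}+L_{\mathrm{ext}}+L_{\mathrm{ger}} \le |V_{W_t}|+\lambda|E_{W_t}| \le C_{\mathrm{WM}},
\]
and the ``trade-off'' reading follows immediately: with the total fixed, any increase in \(L_{\mathrm{ext}}\) lowers the slack left for the other two terms.

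\textbf{Step 1 (intrinsic term).} While the task is being executed, the minimal schema \(S^\ast\) has to be held in working memory. So I assume, as the task-execution hypothesis, an injective realization of \(V^\ast\) into \(\mathrm{active}(R_t,W_t)\) and of \(E^\ast\) into \(E^{\mathrm{sal}}_t\). This yields \(L_{\mathrm{int}} \le |\mathrm{active}(R_t,W_t)| + \lambda |E^{\mathrm{sal}}_t|\).

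\textbf{Step 2 (extraneous term).} By definition \(L_{\mathrm{ext}}\) counts exactly \(V_{W_t}\setminus R_t\) and \(E_{W_t}\setminus E^{\mathrm{sal}}_t\). These sets are disjoint from the sets charged in Step 1, since \(\mathrm{active}(R_t,W_t)\subseteq R_t\) and the two edge sets are complements of each other.

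\textbf{Step 3 (germane term, the main obstacle).} Definition~\ref{def:cognitive-load-components} defines \(L_{\mathrm{ger}}=\delta_{\mathrm{coh}}(B,B')\) on belief graphs, not on \(W_t\), so as stated it is not tied to working-memory capacity at all. I would close this gap with an explicit charging hypothesis on \(\delta_{\mathrm{coh}}\). Any update \(\Delta_\bullet\) is computed from the trajectory \((W_t,\dots,W_{t+n})\) (Definition~\ref{def:belief-update-operator}), so I will require that each unit of coherence gain be witnessed by a distinct WM node or edge. Concretely, this means a chunk-forming link, an integration edge, or a grounding edge that is present in \(W_t\) and used by the update.

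The difficulty is disjointness. Such witnesses lie either in \(R_t\), where they compete with Step 1, or outside \(R_t\), where Step 2 already counts them. I would first try to split \(\mathrm{active}(R_t,W_t)\cup E^{\mathrm{sal}}_t\) into a schema-maintenance part and a restructuring surplus, and charge \(L_{\mathrm{ger}}\) to the surplus. This refines Step 1 to a bound on the maintenance part only. If that refinement fails, the fallback is to restate the proposition with \(L_{\mathrm{ext}}\) restricted to non-witness elements. Either way the proof is sound only under this accounting assumption, so I would state the assumption as part of the proposition's hypotheses rather than claim the inequality unconditionally.
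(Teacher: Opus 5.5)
Your approach is the same as the paper's, but you make explicit premises the paper leaves hidden. The paper's entire proof is the remark that the bound ``follows directly from modeling load measures as counts or weighted counts of nodes and edges within a capacity-limited working-memory graph.'' That is your resource-accounting reading, except that the paper \emph{stipulates} the accounting instead of deriving it.

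Your Steps 1--2 are correct under the realization hypothesis you name. The set \(\mathrm{active}(R_t,W_t)\subseteq R_t\) is disjoint from \(V_{W_t}\setminus R_t\), and \(E^{\mathrm{sal}}_t\) and \(E_{W_t}\setminus E^{\mathrm{sal}}_t\) partition \(E_{W_t}\). Hence \(L_{\mathrm{int}}+L_{\mathrm{ext}}\le C_V+\lambda C_E\) holds rigorously.

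Your diagnosis of Step 3 is also right, and it is a defect of the statement rather than of your argument. \(C_{\mathrm{WM}}\) is never defined. \(L_{\mathrm{int}}\) is a property of the task schema \(S^\ast\), not of \(W_t\). \(L_{\mathrm{ger}}=\delta_{\mathrm{coh}}(B,B')\) lives on belief graphs. So Definition~\ref{def:cognitive-load-components} alone cannot yield the inequality: nothing prevents \(|V^\ast|+\lambda|E^\ast|\) from exceeding any fixed capacity, which already violates the bound given \(L_{\mathrm{ext}}\ge 0\) and a trivial update \(B'=B\). The paper's one-line proof silently assumes the hypotheses you propose to state, so your conditional version is the right form of the result.

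One correction to your Step 3 bookkeeping. Since \(R_t\subseteq V_B\), every transient node of \(W_t\), and every edge with a transient endpoint, is already charged to \(L_{\mathrm{ext}}\). This includes:
\begin{itemize}
\item every grounding edge that sits in \(E_{W_t}\), since its source lies in \(V^{\mathrm{tr}}_t\) by Definition~\ref{def:wm-belief-collation};
\item any link to a chunk node \(v^\ast\) produced by \(U_\chi\) but not yet committed by \(\Delta_\chi\) (Definition~\ref{def:chunking-operator}).
\end{itemize}
So some of the witness types you list collide with Step 2 by construction. With \(L_{\mathrm{ext}}\) kept as defined, germane witnesses can only be active task-relevant belief nodes and edges of \(E^{\mathrm{sal}}_t\) lying outside the image of \(S^\ast\). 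Only your ``split'' option is compatible with the statement as written; your fallback changes the definition of \(L_{\mathrm{ext}}\).

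The cleanest repair is a single hypothesis. Assume there are disjoint sets \(A_{\mathrm{int}},A_{\mathrm{ger}}\subseteq\mathrm{active}(R_t,W_t)\cup E^{\mathrm{sal}}_t\) whose weighted sizes \(|A\cap V_{W_t}|+\lambda|A\cap E_{W_t}|\) are at least \(L_{\mathrm{int}}\) and \(L_{\mathrm{ger}}\), respectively. Your summation then gives the inequality with \(C_{\mathrm{WM}}=C_V+\lambda C_E\). This is what the paper's phrase ``counts \dots\ within a capacity-limited working-memory graph'' presupposes.
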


\begin{proof}
This follows directly from modeling load measures as counts or weighted counts of nodes and edges within a capacity-limited working-memory graph.
\end{proof}

Though simple, this formalization provides a basis for more refined theorems about how graph complexity interacts with performance and learning.

\subsection{Conceptual change as graph transformation}
\label{sec:conceptual-change}

\begin{definition}[Conceptual change]
\label{def:conceptual-change-operator}
\emph{Conceptual change} is a mapping-level construct: an instance of \(\mathrm{ApplyMacroUpdate}\) from the generic belief-update operator (Definition~\ref{def:belief-update-operator}) acting on the belief graph \(B\) in response to a sequence of working-memory states \((W_{t}, \dots, W_{t+n})\).
It restructures theory-bearing subgraphs of \(B\), potentially changing node labels, edge types, and compatibility relations.
In this application---written \(\Delta\) when the interpretation is conceptual change---\(\Delta\) may realize node addition or deletion, edge addition, deletion, or retyping, and reconfiguration of representational payloads, as already permitted under the generic belief-update framework.
The specialization lies in targeting theory-bearing subgraphs of \(B\) (including mental models and nested internal graphs \(G_v\)) during learning and theory revision \cite{Chi1992,diSessa2014}.
Conceptual change is distinguished from other belief-update applications such as self-update \(\Delta_{\mathrm{self}}\) (Section~\ref{sec:self}) by target subgraph; psychologically, it corresponds to theory revision and restructuring during learning.
Applications may be \emph{micro}---modifying a proper subgraph while preserving global topology---or \emph{macro}---reorganizing entire mental models or substantially reconfiguring a designated subgraph (Definition~\ref{def:belief-update-operator}).
Productive belief-graph restructuring tracked by germane load \(L_{\mathrm{ger}}\) (Section~\ref{sec:cognitive-load}) is realized under this interpretation as conceptual change.
\end{definition}

\subsection{Mental states as graph configurations}
\label{sec:mental-states}

Each mental state below is a trajectory-level or predicate-level diagnostic over working-memory graphs, belief graphs, awareness functional \(\mathcal{A}\), control functional \(\mathcal{K}\), conflict catalogs \(\mathfrak{C}_\kappa\), and belief-update operators \(\Delta_{\bullet}\) (Definition~\ref{def:belief-update-operator}).
The propositions identify recurring cognitive profiles as structured patterns within the formalism developed above; they characterize states rather than defining new primitives.

\begin{proposition}[Focused task engagement]
\label{prop:mental-state-focused}
A focused task state is typically characterized by a working-memory configuration in which fragmentation is low, typically \(F(W_t)=1\) (Definition~\ref{def:fragmentation}); the task-relevant salience set \(R_t\) (Definition~\ref{def:salience-set}) dominates \(\mathrm{active}(R_t,W_t)\) within \(W_t\); and activation mass \(\alpha_t^{W}\) is concentrated on one or a small number of task-relevant subgraphs.
Such states are additionally characterized by low extraneous load \(L_{\mathrm{ext}}(W_t)\) (Definition~\ref{def:cognitive-load-components}), low distraction measure \(\Phi_{\mathrm{dist}}(W_t)\) (Definition~\ref{def:distraction-measure}), and control outputs from \(\mathcal{K}\) (Definition~\ref{def:control-functional}) that preferentially stabilize task-relevant structure.
\end{proposition}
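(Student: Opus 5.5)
The proposition is a characterization, not a theorem with hypotheses to discharge. So the plan is to show that the listed features are mutually consistent and jointly follow from one structural idealization of ``focused''. I would do this by fixing an explicit reference configuration and deriving each clause from the definitions already in place.

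Concretely, take \(W_t\) whose active task-relevant portion \(\mathrm{active}(R_t,W_t)\) induces a connected subgraph. The idealized focused state is then characterized by two conditions: \(V_{W_t}\setminus R_t=\varnothing\), and every edge of \(W_t\) lies in \(E^{\mathrm{sal}}_t\). Each clause of the proposition is then read off in order:
\begin{enumerate}
\item Fragmentation: connectivity of the whole of \(W_t\) gives \(F(W_t)=1\) by Definition~\ref{def:fragmentation}.
\item Extraneous load: \(L_{\mathrm{ext}}(W_t)=0\) directly from Definition~\ref{def:cognitive-load-components}.
\item Distraction: with \(k=1\), the mass vector is \(\boldsymbol{p}=(1)\), so \(\Phi_{\mathrm{dist}}(W_t)=0\) by Definition~\ref{def:distraction-measure}. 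Hence the distracted predicate of Definition~\ref{def:distracted-predicate} fails.
\item Activation concentration: all activation mass \(\alpha_t^{W}(V_{W_t})\) lies on the task-relevant component.
\end{enumerate}

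The ``typically'' qualifiers are then handled by a continuity or perturbation remark. Suppose a small number of extraneous nodes carry total activation mass \(\varepsilon\) relative to the task component. Then \(L_{\mathrm{ext}}\) grows only by the corresponding count, and the entropy \(\Phi_{\mathrm{dist}}\) is bounded by a quantity tending to \(0\) as \(\varepsilon\to 0\), since the binary-entropy term \(-\varepsilon\log\varepsilon\to 0\). So the profile degrades gracefully rather than abruptly.

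The control clause is the remaining step. I would exhibit a control functional \(\mathcal{K}\) of the type in Definition~\ref{def:control-operation} that emits attentional focusing and persistence bias on \(R_t\). I would realize these as a subgraph salience bias with \(\beta_{R_t}>1\) (Definition~\ref{def:subgraph-salience-bias}). Under this bias, the nodes of \(R_t\) receive multiplicative gain and reduced decay, so the activation concentration on \(R_t\) is preserved or increased from one step to the next. Because \(\mathcal{K}\) reads \(F\) and \(\Phi_{\mathrm{dist}}\), the trajectory is control-sensitive in the sense of Definition~\ref{def:control-sensitive-trajectory}.

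The main obstacle is that ``typically'', ``low'', and ``dominates'' are not formally defined. The proposition can therefore only be established as consistency plus an idealized-case derivation, not as a strict implication. I would make this explicit by stating the idealized case as the precise content. The thresholds in \(\Theta\) (Definition~\ref{def:parameter-bundle}), namely \(\theta_{\mathrm{dist}}\) and the comparison direction for load, then supply the operational meaning of ``low''.
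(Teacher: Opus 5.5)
The paper gives no proof of this proposition. It is offered as a descriptive profile: the section says these propositions ``characterize states rather than defining new primitives'', and each clause is simply read off the named definitions. Your route is therefore genuinely different: you fix an idealized witness configuration and derive each clause from it. This buys two things the paper's interpretive stance does not. It verifies that the clauses are jointly satisfiable, and it shows which definitions force which clause. A single component gives \(\boldsymbol{p}=(1)\), so \(\Phi_{\mathrm{dist}}=0\) and the distracted predicate fails. The conditions \(V_{W_t}\subseteq R_t\) and \(E_{W_t}\subseteq E^{\mathrm{sal}}_t\) give \(L_{\mathrm{ext}}=0\). Your diagnosis that ``typically'', ``low'' and ``dominates'' have no formal content, so that only an idealized case can be derived, matches the paper's own framing.

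Two local steps need repair.

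(i) Your two conditions, plus connectivity of the subgraph on \(\mathrm{active}(R_t,W_t)\), do not yield connectivity of \(W_t\). Definition~\ref{def:activation-function} does not force every node of \(V_{W_t}\) to clear \(\theta_W\); it defines \(V_{W_t}^{B}\) as a filtered subset of \(V_B\cap V_{W_t}\). A sub-threshold node \(v\in R_t\cap V_{W_t}\) still satisfies \(V_{W_t}\setminus R_t=\varnothing\). But every edge of \(E^{\mathrm{sal}}_t\) has both endpoints in \(\mathrm{active}(R_t,W_t)\), so \(v\) is isolated. Then \(F(W_t)\ge 2\), and \(\Phi_{\mathrm{dist}}(W_t)>0\) if \(\alpha_t^{W}(v)>0\). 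Impose instead \(V_{W_t}=\mathrm{active}(R_t,W_t)\neq\varnothing\), with positive total activation so that \(\boldsymbol{p}\) is defined.

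(ii) The salience bias of Definition~\ref{def:subgraph-salience-bias} does not make the activation share on \(R_t\) ``preserved or increased from one step to the next''. Let \(a\) and \(b\) be the masses on and off \(R_t\) after the unbiased step. The bias only gives \(\beta a/(\beta a+b)\ge a/(a+b)\): the biased step never does worse than the unbiased one. Input-driven activation under \(U\) can still lower the share relative to time \(t\). In your idealized state, where every WM node lies in \(R_t\), the gain is uniform and changes nothing. You also need to take \(Q\) to be the subgraph of \(B\) induced on \(R_t\), registered in \(\mathcal{Q}_{\mathrm{des}}\), so that \(\beta_Q\) exists in \(\Theta\). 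The weaker, correct claim suffices, since the proposition only asks that \(\mathcal{K}\)'s outputs be of a stabilizing kind.

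In the perturbation remark, the entropy bound is \(h(\varepsilon)+\varepsilon\log m\) for \(m\le C_V\) detached extraneous components, with \(h\) the binary entropy; this still vanishes as \(\varepsilon\to 0\). Note also that \(F\) and \(L_{\mathrm{ext}}\) are counts, so they degrade discretely rather than gracefully.
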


\begin{proposition}[Mind-wandering]
\label{prop:mental-state-mind-wandering}
Mind-wandering is typically characterized by a configuration in which \(W_t\) contains multiple weakly integrated or weakly related components, so that fragmentation \(F(W_t)\) is elevated (Figure~\ref{fig:fig05_fragmentation}) and overlap between \(V_{W_t}\) and \(\mathrm{active}(R_t,W_t)\) is reduced for the task-relevant salience set \(R_t\).
Activation mass is distributed across components, yielding increased distraction as measured by \(\Phi_{\mathrm{dist}}(W_t)\), while \(\mathcal{A}(W_t)\) and \(\mathcal{K}\) may remain confined to only a proper subgraph of currently active content.
\end{proposition}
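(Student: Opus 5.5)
The Mind-wandering proposition is a characterization, so I will prove a formal core: a precise set of hypotheses encoding ``mind-wandering'' that entails the listed diagnostic consequences. I take a mind-wandering configuration to be a state \(W_t\) with \(k = F(W_t) \ge 2\) components \(C_1,\dots,C_k\) (Definition~\ref{def:fragmentation}), with no single component carrying activation mass above some fraction \(1-\epsilon\), and with at least one component lying outside the task-relevant set, i.e.\ \(V_{C_j} \cap R_t = \varnothing\) for some \(j\). The plan then has three steps, taken in turn.

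\emph{Elevated fragmentation and reduced task overlap.} Fragmentation is immediate from \(k\ge 2\). For the overlap claim, I note that \(\mathrm{active}(R_t,W_t) \subseteq V_{W_t} \cap R_t\). Therefore any component disjoint from \(R_t\) contributes nodes to \(V_{W_t}\setminus \mathrm{active}(R_t,W_t)\). It follows that the ratio \(|\mathrm{active}(R_t,W_t)|/|V_{W_t}|\) is strictly below \(1\), and it decreases as such components grow. The same nodes raise \(L_{\mathrm{ext}}(W_t)\) (Definition~\ref{def:cognitive-load-components}), which I will note as a corollary contrasting this state with Proposition~\ref{prop:mental-state-focused}.

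\emph{Increased distraction.} Here I use the component mass vector \(\boldsymbol p\) of Definition~\ref{def:distraction-measure}. I need a lower bound on \(H_{\mathrm{ent}}(\boldsymbol p)\) when no \(p_i\) exceeds \(1-\epsilon\). The plan is to use \(H_{\mathrm{ent}}(\boldsymbol p) \ge -\log \max_i p_i \ge -\log(1-\epsilon) > 0\), which follows from \(\sum_i p_i \log(1/p_i) \ge \log(1/\max_i p_i)\). By contrast, a focused state with \(F(W_t)=1\) has \(H_{\mathrm{ent}}=0\). This yields the comparative claim ``increased distraction.'' If \(\theta_{\mathrm{dist}} < -\log(1-\epsilon)\), it also yields the Distracted predicate (Definition~\ref{def:distracted-predicate}).

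\emph{Confinement of \(\mathcal{A}\) and \(\mathcal{K}\).} This is a possibility claim, so I will exhibit a witness. I choose \(\mathcal{A}(W_t)=C_1\). Since \(k\ge2\), this is a proper subgraph, and it is admissible because Definition~\ref{def:awareness-functional} imposes no connectivity or coverage constraint. I then take any awareness-conditioned \(\mathcal{K}\) (Definition~\ref{def:awareness-conditioned-control}). Its output depends only on \(C_1\), so by the remark following Proposition~\ref{prop:total-awareness-control} it operates on a proper subgraph.

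The main obstacle is the first step: the word ``typically'' and the informal terms ``weakly integrated'' and ``reduced'' must be pinned to explicit hypotheses. I will state the \(\epsilon\)-mass condition and the \(R_t\)-disjoint component condition as the definition of the profile, and present the proposition as a consequence of these assumptions rather than an unconditional theorem.
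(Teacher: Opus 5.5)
The paper gives no proof here. Like the other profiles in Section~\ref{sec:mental-states}, the proposition is offered as a characterization of a state in terms of already-defined quantities (``they characterize states rather than defining new primitives''), so its content is essentially definitional and the hedge ``typically'' is never formalized. You take a different route by recasting it as a conditional statement under explicit hypotheses: $F(W_t)=k\ge 2$, a spread condition $\max_i p_i\le 1-\epsilon$ with $\epsilon>0$, and a component disjoint from $R_t$. Your steps are sound:
\begin{itemize}
\item $\mathrm{active}(R_t,W_t)\subseteq R_t\cap V_{W_t}$ yields $|\mathrm{active}(R_t,W_t)|/|V_{W_t}|<1$ and the $L_{\mathrm{ext}}$ corollary;
\item the bound $H_{\mathrm{ent}}(\boldsymbol p)\ge-\log\max_i p_i$ is the standard min-entropy inequality, though you should also assume $\alpha_t^{W}(V_{W_t})>0$ so that $\boldsymbol p$ is defined;
\item $\mathcal{A}(W_t)=C_1$ is an admissible witness by Definition~\ref{def:awareness-functional}.
\end{itemize}
What your version buys is that it isolates the one non-trivial link that the paper's ``so that'' and ``yielding'' pass over. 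Elevated fragmentation does not by itself raise $\Phi_{\mathrm{dist}}$: with $k\ge 2$ but almost all mass on one component, $H_{\mathrm{ent}}$ is close to $0$. So the spread hypothesis is indispensable, and the entropy bound is exactly what connects it to the Distracted predicate. Your version also forces a reading the paper leaves open, namely that ``weakly integrated components'' are connected components of $W_t$; clusters joined by a few weak edges would be invisible to both $F(W_t)$ and $\Phi_{\mathrm{dist}}(W_t)$. In exchange, the paper's informal treatment stays agnostic about $\epsilon$, thresholds, and what counts as integration, at the price of not being checkable as stated.
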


\begin{proposition}[Rumination and perseveration]
\label{prop:mental-state-rumination}
Rumination is a trajectory-level phenomenon characterized by repeated reactivation of overlapping subgraphs with little productive belief-graph restructuring.
It may be further characterized by a sequence \(\{W_t\}\) in which successive working-memory states exhibit high mutual graph similarity (Definition~\ref{def:graph-similarity}) under the update operators \(\Delta_{\bullet}\); perseveration is the corresponding persistence of a narrow region of activation or control despite weak progress toward task completion.
\end{proposition}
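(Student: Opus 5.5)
The proposition on rumination and perseveration is a characterization statement. I will therefore not derive it from axioms but show that its two clauses are well-typed and jointly realizable in the formalism already defined. To do that, I will construct a witnessing trajectory and show that the stated diagnostics take the claimed values on it.

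First, I fix a trajectory \((W_{t_0},\dots,W_{t_n})\) with \(W_{t_{i+1}} = U(W_{t_i},B)\) (Definition~\ref{def:wm-transition}). I parametrize \(U\) by a subgraph salience bias with \(\beta_Q > 1\) on a designated region \(Q \subseteq B\) (Definition~\ref{def:subgraph-salience-bias}). By the multiplicative gain and the reduced decay \(1/\beta_Q\), nodes of \(V_Q\) that cross \(\theta_W\) remain WM-active, so \(\mathrm{active}(V_Q,W_{t_i})\) is nonempty and largely stable across steps. This is the ``repeated reactivation of overlapping subgraphs'' clause.

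Next, I make the similarity clause precise. I take \(f_{\mathrm{match}}\) to be the identity on \(V_{W_{t_i}} \cap V_{W_{t_{i+1}}}\). Then Definition~\ref{def:graph-similarity} gives
\[
\mathrm{sim}_{\mathrm{match}}(W_{t_i},W_{t_{i+1}};f_{\mathrm{match}}) \ge \lambda_{\mathrm{match}}\frac{|V_{W_{t_i}}\cap V_{W_{t_{i+1}}}|}{|V_{W_{t_i}}|}+(1-\lambda_{\mathrm{match}})\frac{m_E}{|E_{W_{t_i}}|}.
\]
When the overlap is dominated by \(V_Q\), this is bounded below by a quantity close to \(1\), and hence exceeds \(\theta_{\mathrm{sim}}\).

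For the claim of ``little productive restructuring,'' I instantiate \(\Delta_\bullet\) on this trajectory so that every commit is a micro update with no net coherence gain. This gives \(L_{\mathrm{ger}} = \delta_{\mathrm{coh}}(B,B') \approx 0\) (Definition~\ref{def:cognitive-load-components}). One route is to keep \(\kappa(\mathfrak{C}_{\mathrm{wb}}(W_t,B),W_t)>0\) throughout, so that the encoding gate of Remark~\ref{rem:encoding-gate} blocks ApplyMicroUpdate commits. Perseveration then follows as the same construction with \(\mathcal{K}\) repeatedly emitting persistence bias on \(Q\) (Definition~\ref{def:control-operation}) while task satisfaction \(S_{\mathrm{sat}}(R_t,W_t)\) stays below the target.

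The main obstacle is that the statement uses qualitative terms (``high,'' ``little,'' ``weak progress''), and the similarity is stated ``under the update operators \(\Delta_\bullet\),'' which mixes WM and belief dynamics. I will resolve this by reading the proposition as an existence-plus-consistency claim relative to thresholds in \(\Theta\): some admissible parameter choice realizes the profile, and no earlier constraint excludes it. Well-formedness is the main constraint to check. It holds because the construction adds no new edges, so \((\mathcal{F}_B,\mathcal{S}_B)\) is preserved.
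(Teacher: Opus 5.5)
The paper gives no proof of this proposition. Like the other statements in Section~\ref{sec:mental-states}, it is a stipulative characterization: the section says these propositions ``characterize states rather than defining new primitives.'' Its only obligation is that each ingredient (overlapping reactivation, \(\mathrm{sim}_{\mathrm{match}}\), restructuring under \(\Delta_{\bullet}\), persistence of activation or control) is an already-defined object, which your opening sentence recognizes. Recasting it as an existence-plus-consistency claim is a different and stronger task. It can buy a non-vacuity check, but never an entailment between ``rumination'' and the configuration, since rumination has no formal content before this statement. As written, your witness has genuine gaps.

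First, the salience bias of Definition~\ref{def:subgraph-salience-bias} only rescales what the unbiased step of \(U\) produces: \(\alpha_{t+1}^{W}(v)=\beta_Q\,(\alpha_t^{W})^{\mathrm{base}}(v)\). With per-step base decay factor \(1-d\), nodes stay above \(\theta_W\) only if \(\beta_Q(1-d)\ge 1\). If the base step prunes \(v\) (pruning is one of the four operation classes of \(U\)), there is nothing to amplify. So persistence of \(\mathrm{active}(V_Q,W_{t_i})\) is an added assumption on \(U\), not a consequence of \(\beta_Q>1\).

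Second, take \(f_{\mathrm{match}}\) to be the identity on \(I_i=V_{W_{t_i}}\cap V_{W_{t_{i+1}}}\). The node term \(|I_i|/|V_{W_{t_i}}|\) is near \(1\) only if almost all of \(W_{t_i}\) survives; whether the survivors lie in \(V_Q\) is irrelevant. The edge term \(m_E/|E_{W_{t_i}}|\) is not controlled at all, because the bias acts on activations, not on edges or their types. Note also that Definition~\ref{def:graph-similarity} types \(f_{\mathrm{match}}\) as a total injection \(V_{\mathrm{q}}\to V_M\). Under that reading \(m_V=|V_{\mathrm{q}}|\) for every admissible map, so all the content of ``high similarity'' sits in exactly the edge term you left open.

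Third, keeping \(\kappa(\mathfrak{C}_{\mathrm{wb}}(W_t,B),W_t)>0\) blocks only \(\mathrm{ApplyMicroUpdate}\) commits (Remark~\ref{rem:encoding-gate}). \(\mathrm{ApplyMacroUpdate}\) is not gated, so macro restructuring with nonzero \(L_{\mathrm{ger}}\) remains possible unless you exclude it separately. You also never exhibit the active conflict. For \(W_t\sqsubseteq B\) the test reduces to \(B\) itself (Remark~\ref{rem:belief-anchored-wm-incompatibility}), so the natural source of a conflict is transient WM content outside \(B\), which your construction never introduces.

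Fourth, \(S_{\mathrm{sat}}(R_t,W_t)\) measures activation coverage of the task set, not progress. Persistence bias on a \(Q\) overlapping \(R_t\) would actually raise it. ``Weak progress'' should be read through \(S_{\mathrm{goal}}(G_t,W_t)\) (Definition~\ref{def:goal-satisfaction}) or \(\neg\Psi^{\mathrm{task}}_{\mathrm{succ}}\).

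A stationary witness removes all four problems at once:
\begin{itemize}
\item let \(U\) leave a fixed graph \(W\) with \(V_W\subseteq V_Q\) and \(E_W\neq\varnothing\) structurally unchanged, so only activations move;
\item let \(\Delta_{\bullet}\) act as the identity;
\item keep the goal nodes unrealized.
\end{itemize}
Then the identity map gives \(\mathrm{sim}_{\mathrm{match}}=1\) at every step. \(B'=B\) gives \(L_{\mathrm{ger}}=0\) and trivially preserves well-formedness. \(S_{\mathrm{goal}}\) stays low while \(\mathcal{K}\) emits persistence bias on \(Q\).
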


\begin{proposition}[Insight and restructuring]
\label{prop:mental-state-insight}
\emph{Insight} is a trajectory-level diagnostic with two formal levels. At the working-memory level, insightful perception is the binary predicate of Definition~\ref{def:insightful-perception} (awareness of a WM-active subgraph together with temporal, conflict, evaluative, or provenance annotation); at the belief-graph level, restructuring corresponds to a comparatively large-scale application of the generic belief-update machinery---in particular, a macro-scale instance of \(\Delta\) interpreted as conceptual change (Definition~\ref{def:conceptual-change-operator})---yielding a more coherent task-relevant organization than in preceding states.
\end{proposition}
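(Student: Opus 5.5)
The proposition on insight and restructuring is a characterization rather than a quantitative claim. I would prove it by showing that each of its two levels is well defined by machinery already in place, and that the belief-level clause can actually be witnessed.

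\emph{Working-memory level.} Here I would simply unfold Definition~\ref{def:insightful-perception}. The predicate needs four ingredients: a WM-active subgraph $X\subseteq W_t$ (Definition~\ref{def:activation-function}); the inclusion $X\subseteq\mathcal{A}(W_t)$ (Definition~\ref{def:awareness-functional}); and at least one of the temporal, conflict, evaluative, or provenance annotations (Definitions~\ref{def:temporal-objectification}, \ref{def:conflict-objectification}, \ref{def:signed-subgraph-evaluation}, \ref{def:provenance-objectification}). Each ingredient is evaluated on the single state $W_t$ or on a finite trajectory $(W_{t-k},\dots,W_t)$. The predicate is therefore a binary outcome at time $t$. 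The claimed WM-level characterization then holds by definition, and no new primitive is introduced.

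\emph{Belief-graph level.} I would identify ``restructuring'' with an application of $\mathrm{ApplyMacroUpdate}$ (Definition~\ref{def:belief-update-operator}) that targets theory-bearing subgraphs, that is, conceptual change (Definition~\ref{def:conceptual-change-operator}). The step that carries content is to make ``more coherent task-relevant organization'' precise. I would do this with a two-part comparison. Fix the task-relevant salience set $R_t$, and take a scope-restricted catalog $\mathfrak{C}_R=\mathrm{BuildConflictCatalog}(\mathcal{F}_B,\mathcal{S}_B,\mathrm{scope}_{\mathrm{intra}}(R_t),\dots)$. The first part is
\[
\mathcal{H}(Q',\mathfrak{C}'_R,W_{t+n}) \ge \mathcal{H}(Q,\mathfrak{C}_R,W_{t})
\]
for the images $Q,Q'$ of the task subgraph before and after the update. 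The second part is $\delta_{\mathrm{coh}}(B,B')>0$, so that the update also registers as germane load (Definition~\ref{def:cognitive-load-components}). I would then note that the macro update commits a well-formed $B'$ under $(\mathcal{F}_B,\mathcal{S}_B)$ (Remark~\ref{rem:belief-update-admissibility}). If the update resolves catalogued conflicts, for example by deleting, retyping, or superseding one side of a conflicting pair, then $\kappa$ drops and coherence rises.

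\emph{Main obstacle.} The hardest step is that coherence is normalized by $|\mathfrak{C}_Q|$. Restructuring can change the catalog itself, so $\mathcal{H}$ need not be monotone: removing conflicts can shrink the denominator as well. I would first handle this by measuring coherence relative to the pre-update catalog restricted to surviving regions. If that is not enough, I would fall back on the unnormalized count $\kappa$ together with $\delta_{\mathrm{coh}}$. On either reading, the ``large-scale'' clause is the definitional condition that the update needs $\mathrm{ApplyMacroUpdate}$ with nontrivial restructuring. The proposition is then an interpretive consolidation of these definitions, stated as a typical rather than necessary profile, just like the preceding mental-state propositions.
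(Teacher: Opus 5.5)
Your overall reading matches the paper's. The paper gives no separate argument for this proposition: per the preamble to Section~\ref{sec:mental-states}, it is a characterization assembled from already-defined objects. So your unfolding of Definition~\ref{def:insightful-perception} at the working-memory level is exactly what the statement rests on. The same holds for your identification of restructuring with a macro-scale $\mathrm{ApplyMacroUpdate}$ targeting theory-bearing subgraphs (Definition~\ref{def:conceptual-change-operator}).

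Where you go beyond the paper is in trying to \emph{derive} the clause ``yielding a more coherent task-relevant organization.'' In the paper that clause is part of the characterization: it singles out which macro updates count as insight-level restructuring. Nothing in Definition~\ref{def:belief-update-operator} or Definition~\ref{def:conceptual-change-operator} forces it. The only constraint on a committed macro update is well-formedness of $B'$ under $(\mathcal{F}_B,\mathcal{S}_B)$ (Remark~\ref{rem:belief-update-admissibility}). That constraint concerns $B'$ alone and says nothing about the WM-indexed quantities $\kappa$ and $\mathcal{H}$. Likewise, $\delta_{\mathrm{coh}}$ is only specified by example in Definition~\ref{def:cognitive-load-components}, so $\delta_{\mathrm{coh}}(B,B')>0$ is a modelling choice, not a consequence.

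If you want your comparison as a checkable criterion, state it as a stipulated condition and fix two technical points.
\begin{itemize}
\item Restricting the pre-update catalog to surviving regions does not cure the normalization problem. Take a catalog with one active and one inactive entry, and an update that resolves only the inactive one by deleting one of its regions. Under your restriction $\mathcal{H}$ falls from $1/2$ to $0$. The fix is to hold the pre-update denominator fixed and count resolved entries as inactive.
\item A pre-update catalog misses conflicts that retyping or superseding creates between current WM content and $B'$. Well-formedness of $B'$ does not exclude these, so the post-update catalog must be checked as well.
\end{itemize}
Two smaller points. First, $\mathrm{scope}_{\mathrm{intra}}$ is defined for a WM node set $R^{\mathrm{wm}}\subseteq V_{W_t}$ (Definition~\ref{def:confusion}), not for $R_t\subseteq V_B$. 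Second, ``more coherent'' calls for a strict inequality rather than $\ge$.
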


\begin{proposition}[Doubt and overconfidence]
\label{prop:mental-state-confidence}
Confidence-related states are binary predicates relating graph structure to control. \emph{Doubt} holds when conflict, fragmentation, distributed activation mass, or reduced subgraph coherence \(H_{\mathrm{coh}}\) are present and are reflected in regulation or hesitation by \(\mathcal{K}\) \cite{NelsonNarens1990,Koriat2007}; \emph{overconfidence} holds when control policies or evaluative outputs remain strong despite structural indicators of instability, contradiction, or poor integration within \(W_t\) or between \(W_t\) and \(B\).
\end{proposition}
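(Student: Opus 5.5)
Proposition~\ref{prop:mental-state-confidence} is a characterization rather than a quantitative claim, so the proof consists of exhibiting both predicates as well-typed binary predicates over quantities already defined, and then checking that they are coherent: each is satisfiable, and neither collapses into the other. First I fix a finite trajectory \((W_{t-k},\dots,W_t)\), a task-relevant set \(R^{\mathrm{wm}}\), a catalog \(\mathfrak{C}_\kappa\) built by Definition~\ref{def:conflict-catalog-constructor}, and the thresholds in \(\Theta\) (Definition~\ref{def:parameter-bundle}).

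Next I define the \emph{structural instability indicator} \(\mathrm{Inst}(W_t)\) as the disjunction of the following conditions:
\[
\kappa(\mathfrak{C}_\kappa,W_t)>0,\qquad F(W_t)>1,\qquad \Phi_{\mathrm{dist}}(W_t)>\theta_{\mathrm{dist}},\qquad H_{\mathrm{coh}}<\theta_{\mathrm{coh}}.
\]
By the fixed threshold directions in Definition~\ref{def:parameter-bundle}, this is well-defined and \(0/1\)-valued. I then split \(\mathfrak{O}\) into \emph{hesitation} operations and the rest. Hesitation operations are items 3 and 5 of Definition~\ref{def:control-operation} (suppression and conflict-resolution routing), together with withheld commits under the encoding gate (Remark~\ref{rem:encoding-gate}). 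This split gives an indicator \(\mathrm{Hes}(\mathcal{K}(W_{t-k},\dots,W_t))\).

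With these in place, doubt is \(\mathrm{Inst}\wedge\mathrm{Hes}\), and overconfidence is \(\mathrm{Inst}\wedge\neg\mathrm{Hes}\) together with strong evaluative output, for example \(|\mathcal{E}(Q,\sigma_Q,W_t)|\) above a threshold. Requiring that \(\mathrm{Inst}\) be ``reflected in'' \(\mathcal{K}\) means \(\mathcal{K}\) is control-sensitive (Definition~\ref{def:control-sensitive-trajectory}), and I would state this as a hypothesis. Since \(\mathrm{Hes}\) and \(\neg\mathrm{Hes}\) are complementary, the two predicates are mutually exclusive given \(\mathrm{Inst}\). This establishes the ``binary predicate relating structure to control'' status claimed in the statement.

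For non-vacuity, I construct small witnesses. For the instability side, take \(W_t\) with two components, so \(F(W_t)=2\) and \(\mathrm{Inst}\) holds. Pair it with one \(\mathcal{K}\) that emits suppression, giving doubt, and another \(\mathcal{K}\) that emits only persistence bias, giving overconfidence. Both choices are admissible because \(\mathcal{K}\) is an arbitrary trajectory functional. Under awareness-conditioned control (Definition~\ref{def:awareness-conditioned-control}), overconfidence additionally arises whenever the instability witness lies outside \(\mathcal{A}(W_t)\), which follows from the remark after Proposition~\ref{prop:total-awareness-control}.

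The main obstacle is that the statement's words ``reflected in'' and ``remain strong'' are informal. The step most open to objection is therefore the choice of the hesitation/strength partition of \(\mathfrak{O}\) and of the evaluative threshold. I would present these choices explicitly as parametrizations within \(\Theta\), and argue that any monotone choice yields the same qualitative predicates.
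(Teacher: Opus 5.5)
The paper gives no proof of this proposition. Like the other profiles in Section~\ref{sec:mental-states}, it is offered as a characterization (the propositions ``characterize states rather than defining new primitives''). Its only checkable content is therefore that doubt and overconfidence can be expressed from \(\kappa\), \(F\), \(\Phi_{\mathrm{dist}}\), \(H_{\mathrm{coh}}\), \(\mathcal{K}\) and \(\mathcal{E}\) without new primitives. Your \(\mathrm{Inst}\) indicator handles the structural side of the doubt clause correctly, and making the predicates explicit is a reasonable way to go further. The gap is that the control side is encoded along the wrong axis. ``Reflected in \(\mathcal{K}\)'' and ``remain strong despite'' are dependence relations between the instability quantities and \(\mathcal{K}\)'s output. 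They are not properties of which element of \(\mathfrak{O}\) happens to be emitted.

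Your static split of \(\mathfrak{O}\) therefore misclassifies in both directions. Persistence bias, your overconfidence witness, is listed among the recovery operations \(\mathcal{K}\) emits when a failure state such as \(F(W_t)>1\) is detected. Emitted for that reason, it is regulation reflecting instability, i.e.\ doubt. Conversely, take a \(\mathcal{K}\) whose suppression is triggered only by \(\mathcal{E}_{\mathrm{self}}\). It is control-sensitive under Definition~\ref{def:control-sensitive-trajectory}, which only asks for dependence on \emph{some} graph quantity. It satisfies your \(\mathrm{Inst}\wedge\mathrm{Hes}\) on any trajectory where it suppresses while \(F(W_t)>1\) happens to hold, without reflecting the fragmentation. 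So the control-sensitivity hypothesis does not repair the encoding.

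Including withheld commits in \(\mathrm{Hes}\) is also ill-typed. The encoding gate (Remark~\ref{rem:encoding-gate}) is a commit condition on \(\mathrm{ApplyMicroUpdate}\), not an element of \(\mathfrak{O}\), so it is not a function of \(\mathcal{K}(W_{t-k},\dots,W_t)\). Worse, it makes ``hesitation'' automatic whenever \(\kappa(\mathfrak{C}_{\mathrm{wb}}(W_t,B),W_t)>0\) and a commit is pending. In exactly those cases it excludes the WM--belief-contradiction form of overconfidence that the statement allows.

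Your overconfidence predicate is also narrower than the statement, which says ``control policies \emph{or} evaluative outputs remain strong.'' You require \(\neg\mathrm{Hes}\) \emph{and} a large \(|\mathcal{E}|\). The mutual exclusivity you derive is an artifact of that conjunction, and the proposition does not assert it; being binary predicates does not require it. Under the stated disjunction, a trajectory on which \(\mathcal{K}\) regulates in response to instability while \(\mathcal{E}(Q,\sigma_Q,W_t)\) stays high is both doubtful and overconfident. The overconfidence clause also names ``poor integration \dots\ between \(W_t\) and \(B\),'' e.g.\ transient nodes with no pair in \(\Gamma_t^{\geq\theta_{\mathrm{gnd}}}\), which \(\mathrm{Inst}\) does not detect. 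Your witness never checks the evaluative conjunct you imposed. The evaluative threshold is a new slot rather than an instantiation of \(\Theta\).

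The awareness claim is too strong as well. If \(\mathcal{K}\) is awareness-conditioned and the instability lies outside \(\mathcal{A}(W_t)\), you may conclude only that \(\mathcal{K}\)'s output cannot depend on it, hence that it cannot ground doubt. Nothing forces \(\neg\mathrm{Hes}\) or a strong \(\mathcal{E}\); the remark after Proposition~\ref{prop:total-awareness-control} says control \emph{may} fail, not that it does.

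A cleaner encoding uses the trajectory, since Section~\ref{sec:mental-states} calls these diagnostics trajectory-level. Fix an onset \(t-k<s\le t\) at which the (enlarged) instability indicator switches on. Doubt then means \(\mathcal{K}(W_{t-k},\dots,W_s)\neq\mathcal{K}(W_{t-k},\dots,W_{s-1})\). Overconfidence means that \(\mathcal{K}\)'s output is unchanged across \(s\), or that \(|\mathcal{E}|\) does not decrease across \(s\). This states ``reflected in'' and ``despite'' as dependence, uses only defined objects, and lets the two predicates co-occur as the wording permits.
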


\begin{proposition}[Attachment-maintenance state]
\label{prop:attachment-maintenance-state}
A working-memory state \(W_t\) exhibits attachment-maintenance when nodes from a designated attachment subgraph \(A_{\mathrm{att}} \subseteq B\) remain preferentially active or retrievable across time, due to attachment-specific salience parametrization (Remark~\ref{def:attachment-salience-param}) and corresponding control operations emitted by \(\mathcal{K}\).
\end{proposition}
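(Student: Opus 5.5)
The proposition is a characterization rather than a quantitative claim, so the proof consists in exhibiting that the described configuration is realizable, and well-typed, inside the formal machinery already defined. The plan is to unfold each clause of the statement into an existing construct and check that the composition is admissible.

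\textbf{Step 1: the salience side.} Fix a designated attachment subgraph \(A_{\mathrm{att}} \subseteq B\), as in Remark~\ref{def:attachment-goal}. Take the attachment salience parametrization of Remark~\ref{def:attachment-salience-param}, which is the instance of Definition~\ref{def:subgraph-salience-bias} with \(Q = A_{\mathrm{att}}\) and \(\beta_{\mathrm{att}} \ge 1\). For \(v \in V_{A_{\mathrm{att}}}\), the update gives \(\alpha_{t+1}^{W}(v) = \beta_{\mathrm{att}}\,(\alpha_t^{W})^{\mathrm{base}}(v)\), with decay scaled by \(1/\beta_{\mathrm{att}}\). For an attached node and an unattached node of equal base activation, the attached node therefore has activation at least as large after each step. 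Iterating \(W_{t+1} = U(W_t,B)\) over a finite trajectory shows that attached nodes stay at or above the threshold \(\theta_W\) for at least as many steps as comparable unattached nodes. This is what "preferentially active" means, read through \(\mathrm{active}(v,W_t)\) (Definition~\ref{def:activation-function}).

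\textbf{Step 2: the control side and the retrievable disjunct.} By Definition~\ref{def:control-operation}, items 1, 2 and 4 (focusing, retrieval amplification and persistence bias on \(Q = A_{\mathrm{att}}\)), these operations are admissible outputs of \(\mathcal{K}\). Because they modulate \(U\) in the same direction as the salience bias, they preserve the monotonicity from Step 1 and extend it to the disjunct \(\mathrm{retrievable}(v,B,W_t)\). It follows that the attachment satisfaction \(S_{\mathrm{att}}(A_{\mathrm{att}},W_t)\) of Remark~\ref{def:attachment-goal} is nondecreasing relative to the unbiased (\(\beta_{\mathrm{att}}=1\), no-control) baseline along the trajectory. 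This gives a trajectory-level statement of "remain preferentially active or retrievable across time."

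\textbf{Step 3: no new primitives.} Every ingredient used is \(U\), \(U_{\mathrm{act}}\), \(\mathcal{K}\), \(\mathcal{I}\) or \(S_{\mathrm{att}}\), all already defined. The proposition therefore characterizes a state within the ontology, consistent with the framing at the start of Section~\ref{sec:mental-states}.

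\textbf{Main obstacle.} The difficulty lies in Step 1. The base profile \((\alpha_t^{W})^{\mathrm{base}}\) is architecture-dependent, so "preferential" can only be stated comparatively, against the \(\beta_{\mathrm{att}} = 1\) counterfactual run from the same initial state. The first approach would be an induction on trajectory length using this pointwise comparison. This requires assuming that the base transition step is monotone in prior activation, which I would state explicitly as a hypothesis on \(U\).
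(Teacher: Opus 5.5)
The paper gives no proof of this proposition. It is stipulative (``\(W_t\) exhibits attachment-maintenance when \dots''), and its only support is the framing at the start of Section~\ref{sec:mental-states}: these propositions characterize configurations assembled from already-defined objects. Here those objects are \(A_{\mathrm{att}}\), the \(\beta_{\mathrm{att}}\) parametrization of Remark~\ref{def:attachment-salience-param}, and the control operations of Definition~\ref{def:control-operation}. Your Step~3 is exactly this, and it is all the statement requires.

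Steps~1--2 try to \emph{derive} what the paper only stipulates, namely that the bias and control actually produce preferential persistence. As claims about an arbitrary \(U\), \(\mathcal{K}\), and \(\mathrm{retrievable}\), they do not follow from the definitions.

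\textbf{First}, pairing an attached node with a ``comparable'' unattached node cannot be iterated. The profile \((\alpha_t^{W})^{\mathrm{base}}\) depends on the whole state and on node identity, so equal base activation at one step says nothing about the next. The only coherent comparison is your counterfactual \(\beta_{\mathrm{att}}=1\) run. That comparison needs \(U\) to be order-preserving on whole working-memory states, including node sets and capacity pruning, not just pointwise in activation. Definition~\ref{def:wm-transition} allows any function. For example, a refractory rule that resets any node whose activation exceeds a ceiling makes a boosted node drop out sooner than its unboosted counterpart. And with \(C_V\) tight, one boosted attached node can crowd another attached node out of \(W_t\). You flag a version of this issue; you do not flag the next two.

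\textbf{Second}, ``modulate \(U\) in the same direction'' has no formal content. Definition~\ref{def:control-operation} names operations only informally. Since \(\mathcal{K}\) reads the (now different) trajectory, nothing prevents it from emitting suppression or conflict-resolution routing that touches \(A_{\mathrm{att}}\).

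\textbf{Third}, \(\mathrm{retrievable}(v,B,W_t)\) in Remark~\ref{def:attachment-goal} is an arbitrary given predicate. Nothing makes the retrievable disjunct of \(S_{\mathrm{att}}\) monotone, so your comparison of \(S_{\mathrm{att}}\) against the baseline is unsupported.

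Since you announce a realizability argument, the clean repair is to \emph{choose} an instance: a monotone base step, a \(\mathcal{K}\) emitting only focusing, retrieval amplification, and persistence bias on \(A_{\mathrm{att}}\), and a \(\mathrm{retrievable}\) predicate monotone in activation. With those choices, Steps~1--2 become a valid existence demonstration. Otherwise, drop them and let Step~3 stand alone, as the paper does.
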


\begin{proposition}[Fluent skill state]
\label{prop:fluent-skill-state}
A working-memory state exhibits relatively fluent or automatized skill when one or more procedurally compiled subgraphs (Remark~\ref{def:procedurally-compiled-subgraph}) are active and task performance proceeds with comparatively low explicit intermediate expansion, low fragmentation, and reduced dependence on \(\mathcal{K}\)-mediated regulation.
\end{proposition}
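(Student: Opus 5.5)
The statement is a characterization proposition, so the proof consists of showing that each clause is realizable within the formalism. Concretely, I will exhibit one working-memory configuration, together with a choice of operator parameters, that satisfies every listed property. I will also show that these properties follow from definitions already in place; no new primitive is needed.

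\textbf{Step 1: the compiled subgraph and its retrieval.} Fix a procedurally compiled subgraph \(S^{\mathrm{cmp}} \subseteq B\) in the sense of Remark~\ref{def:procedurally-compiled-subgraph}. By that remark, it arose under \(\Delta_{\bullet}\) through chunk formation, so its internal structure sits as a graph payload \(G_v\) of a single recursive node \(v\) (Definition~\ref{def:recursive-node}). Retrieving \(v\) into \(W_t\) under \(U\) (Definition~\ref{def:wm-transition}) therefore adds one active node, instead of the \(|V_{G_v}|\) nodes that explicit expansion would require. I will take \emph{low explicit intermediate expansion} to mean that \(|V_{W_t}|\) and \(|E_{W_t}|\) stay close to the chunk-level size. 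This ties the property to effective element interactivity: \(E_{\mathrm{eff}}(R_t,B)\) is small because \(S^{\mathrm{cmp}}\) covers most of \(R_t\) (Definition~\ref{def:effective-element-interactivity}).

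\textbf{Step 2: low fragmentation and connectedness.} Assume the task-relevant content is covered by the compiled node or nodes, linked to one another by containment or \(r_{\mathrm{next}}\) edges (Definition~\ref{def:edge-type-taxonomy}). Then these nodes lie in a single connected component. Provided no distractor components are active, \(F(W_t)=1\) (Definition~\ref{def:fragmentation}), and \(\Phi_{\mathrm{dist}}(W_t)=0\) (Definition~\ref{def:distraction-measure}). This matches the focused-engagement profile of Proposition~\ref{prop:mental-state-focused}.

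\textbf{Step 3: reduced dependence on \(\mathcal{K}\).} This is the step I expect to be the main obstacle, because \emph{dependence} is never defined formally. I will make it precise by counting the control-sensitive quantities of Definition~\ref{def:control-sensitive-trajectory} that cross their thresholds in \(\Theta\):
\begin{itemize}
\item fragmentation is at its minimum value \(F(W_t)=1\);
\item active conflict is zero, since the compiled subgraph was committed under a well-formedness check (Remark~\ref{rem:belief-update-admissibility}), giving \(\kappa=0\) on intra-region catalogs;
\item coherence is therefore \(1\) (Definition~\ref{def:subgraph-coherence}).
\end{itemize}
None of the triggers for the conflict-routing or suppression operations of Definition~\ref{def:control-operation} then fires. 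Consequently the set of non-identity operations emitted by \(\mathcal{K}\) is no larger than for a non-compiled execution of the same task.

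\textbf{Step 4: the comparison.} Making the comparison in Step 3 rigorous requires a monotonicity assumption on \(\mathcal{K}\), and I will state it explicitly as a hypothesis: \(\mathcal{K}\) emits nontrivial operations only when a threshold predicate is violated. With that assumption the argument closes by comparison with the expanded representation. Without it, the proposition remains a characterization rather than a derivation.
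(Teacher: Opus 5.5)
The paper gives no proof of this proposition, and the statement does not call for one. Like the other profiles in Section~\ref{sec:mental-states}, which the paper says ``characterize states rather than defining new primitives,'' it is a stipulative characterization. ``Fluent or automatized skill'' has no independent formal definition. The proposition attaches that label to any state satisfying the conjunction of the listed conditions: an active procedurally compiled subgraph, low explicit intermediate expansion, low fragmentation, and reduced $\mathcal{K}$-dependence. The only checkable content is that each condition is phrased in vocabulary already available (Remark~\ref{def:procedurally-compiled-subgraph}, Definitions~\ref{def:fragmentation} and~\ref{def:control-functional}). The last sentence of your Step~4 lands on exactly this reading. Steps~1 and~3, however, do something different and unsound: they treat low expansion and reduced $\mathcal{K}$-dependence as \emph{consequences} of having a compiled subgraph active. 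The statement takes them as hypotheses, and the formalism does not supply that entailment.

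\textbf{Step~1.} Remark~\ref{def:procedurally-compiled-subgraph} makes $S^{\mathrm{cmp}}$ a designated belief subgraph produced by ``repeated use, chunking, or schema consolidation.'' Nothing forces it to be the graph payload of a single recursive node. The remark also says only that it \emph{can} be retrieved with reduced expansion, which is why the proposition lists low expansion as a separate condition.

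\textbf{Step~2.} This step is fine, but $F(W_t)=1$ there rests on your added assumption that no distractor components are active, i.e.\ on the statement's own hypothesis.

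\textbf{Step~3.} Well-formedness of $B$ at commit constrains only $B$. The catalogs that can drive conflict routing, such as $\mathfrak{C}_{\mathrm{wb}}(W_t,B)$ and intra-WM catalogs, involve transient nodes and grounding edges in the collation graph (Definitions~\ref{def:cross-memory-incompatibility} and~\ref{def:active-conflict-count}). WM content that conflicts with stored structure it otherwise matches is exactly the failed-recognition case of Definition~\ref{def:recognition-event}. So $\kappa=0$ on the relevant catalogs is an extra assumption, not a consequence of compilation, unless you also assume belief-anchored working memory $W_t \sqsubseteq B$ (Remark~\ref{rem:belief-anchored-wm-incompatibility}).

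\textbf{Steps~3--4.} More basically, $\mathcal{K}$ is an arbitrary map $\mathcal{W}^* \to \mathfrak{O}$. Definition~\ref{def:control-operation} only lists typical operations, mentioning occasions such as $\kappa>0$ for some of them, without constraining when $\mathcal{K}$ emits them. So ``none of the triggers fires'' has no formal content. Moreover, ``a non-compiled execution of the same task'' is not a defined object, so the comparison in Steps~3--4 cannot even be stated. Even under your monotonicity hypothesis you would have to check every threshold predicate in $\Theta$, including self-related and involvement thresholds, not only $F$, $\kappa$ and $\mathcal{H}$. A strict ``reduced'' would further require the expanded execution to violate one of them.

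If you want a non-vacuity witness, it is immediate and needs no extra hypothesis. Take a connected $W_t$ containing an active compiled subgraph and, since $\mathcal{K}$ is unconstrained, choose one that emits no nontrivial operation on that trajectory.
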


The characterizations above are not intended as a complete taxonomy, but they illustrate how diverse mental profiles may be expressed as derived graph configurations and trajectory patterns within the NEST architecture.

The characterizations above complete the phenomena layer without extending the foundational ontology. The next section specifies a reusable task-instantiation schema showing how these primitives may be organized into task-specific context graphs for expert reasoning, planning, and uncertainty-sensitive decision making.

\section{Task Instantiation Schema}
\label{sec:task-instantiation}

Task instantiation is expressed in terms of the belief graph \(B\), working-memory graph \(W_t\), salience sets \(R_t\) (Definition~\ref{def:salience-set}), grounding correspondences and the WM--belief collation graph \(G^{\mathrm{col}}_t = \mathrm{Coll}(W_t, B, \Gamma_t)\) (Definition~\ref{def:wm-belief-collation}), incompatibility constraints \((\mathcal{F}_B, \mathcal{S}_B)\) and cross-memory incompatibility (Definition~\ref{def:cross-memory-incompatibility}), conflict catalogs and active conflict count \(\kappa(\mathfrak{C}_\kappa, W_t)\) (Definition~\ref{def:active-conflict-count}), awareness subgraphs \(\mathcal{A}(W_t)\) (Definition~\ref{def:awareness-functional}), graph similarity \(\mathrm{sim}_{\mathrm{match}}\) (Definition~\ref{def:graph-similarity}), and control operations \(c \in \mathfrak{O}\) emitted by \(\mathcal{K}\) (Definition~\ref{def:control-functional}). A domain or field is represented as a mental model or designated region of \(B\); an expert is the bearer of a possibly fragmented subgraph of that region; and a question is a query subgraph whose resolution depends on similarity, grounding, and conflict relations over structures already defined in the formal core.

\subsection{Task-relative domain structure}

A task is always relative to some domain structure. That structure is given by a designated subgraph of \(B\), or by a connected mental model \(M \subseteq B\) (Definition~\ref{def:mental-model}). When a task concerns a broader field with multiple branches, the relevant domain may be represented by a family of overlapping mental models \(\mathcal{M}_{\mathrm{fld}} = \{M_1,\ldots,M_k\}\) together with their induced union subgraph inside \(B\); each \(M_i\) is already a subgraph of \(B\), and overlap among mental models is already permitted (Definition~\ref{def:mental-model}).
A task selects a relevant subgraph of the domain structure and constrains which graph operations, update paths, and answer statuses are admissible.

The domain structure supplies the background graph against which the current task is interpreted. Candidate actions, consequences, constraints, hypotheses, and failure modes are meaningful only relative to such a designated belief-graph region, because edge types, compatibility structure, and stored schemas needed for evaluation are inherited from \(B\). In this sense the full field knowledge resides in \(B\); task instantiation selects which region of \(B\) is relevant and which portion of that region the present agent can access.

\subsection{Field and expert-accessible subgraphs}

\begin{definition}[Field and expert-accessible subgraph]
\label{def:field-expert-subgraph}
Let \(B\) be the belief graph. A \emph{field} (or domain) is a designated connected mental model
\[
M = (V_M, E_M) \subseteq B
\]
(Definition~\ref{def:mental-model}). A particular expert's accessible knowledge is represented by a possibly fragmented subgraph
\[
M_{\mathrm{e}} = (V_{\mathrm{e}}, E_{\mathrm{e}}) \subseteq M,
\]
where \(M_{\mathrm{e}}\) captures the expert's internally available portion of the field. When \(M_{\mathrm{e}}\) is fragmented, write \(M_{\mathrm{e},1},\ldots,M_{\mathrm{e},k}\) for its connected components (ignoring edge direction, as in Definition~\ref{def:fragmentation}).
\end{definition}

Expert knowledge may be distributed across multiple weakly connected or disconnected subgraphs; coherence is task-relative, not a prerequisite for expertise.

\subsection{Questions as query subgraphs}

\begin{definition}[Question as query subgraph]
\label{def:question-query-subgraph}
A \emph{question} is a query subgraph
\[
G_{\mathrm{q}} = (V_{\mathrm{q}}, E_{\mathrm{q}}) \subseteq W_t
\]
active in working memory at time \(t\) (Definition~\ref{def:graph-similarity}). The subgraph \(G_{\mathrm{q}}\) may encode an incomplete relation, a candidate proposition, a required successor state, a predicted outcome, or a true/false alternative whose status is to be evaluated relative to the expert-accessible portion of the domain. Its resolution is evaluated through grounding correspondence \(\Gamma_t\) and \(G^{\mathrm{col}}_t\), partial matches \(f_{\mathrm{match}} : V_{\mathrm{q}} \to V_{M_{\mathrm{e}}}\), similarity scores \(\mathrm{sim}_{\mathrm{match}}(G_{\mathrm{q}}, M_{\mathrm{e}}; f_{\mathrm{match}})\), cross-memory incompatibility \(X \bowtie Y\) in \(G^{\mathrm{col}}_t\), active conflict count \(\kappa(\mathfrak{C}_{\mathrm{wb}}(W_t, B), W_t)\), and control outputs from \(\mathcal{K}\) that stabilize or revise the active query neighborhood in \(W_t\).
\end{definition}

A question is a structured probe into the domain-relevant belief region already represented in \(B\), not an isolated sentence. When the query is part of the active task neighborhood, \(G_{\mathrm{q}} \subseteq C_t \subseteq W_t\) for the context graph \(C_t\) defined below.

\subsection{Task context graphs}

\begin{definition}[Context graph]
\label{def:context-graph}
Let \(W_t\) be a working-memory graph at time \(t\), let \(R_t\) be a task-relevant salience set, let \(\mathcal{A}(W_t)\) be the subgraph currently in awareness, and let \(\mathcal{K}\) emit control operations \(c \in \mathfrak{O}\) that select and stabilize working-memory structure. A \emph{context graph} at time \(t\), denoted \(C_t\), is the active task-relevant subgraph of \(W_t\) selected under salience, awareness, and control:
\[
C_t = (V_{C_t}, E_{C_t}) \subseteq W_t,
\]
such that
\[
V_{C_t} \subseteq R_t \cup V_{\mathcal{A}(W_t)}.
\]
When useful, \(C_t\) may be defined as the maximal subgraph of \(W_t\) satisfying a task context predicate \(\Phi^{\mathrm{ctx}}_T\) (distinct from graph-property functional \(\Phi\), Definition~\ref{def:graph-property-functional}):
\[
C_t = \max\{X \subseteq W_t : \Phi^{\mathrm{ctx}}_T(X,t)=1\}.
\]
The evolution of \(C_t\) is governed by the working-memory transition operator \(U\), modulated by salience, awareness, and control; its relation to stored knowledge is mediated through \(\Gamma_t\) and \(G^{\mathrm{col}}_t\).
\end{definition}

\begin{remark}[Joint context as context-graph specialization]
\label{rem:joint-context-specialization}
Joint attention and shared intentionality do not require a new primitive graph type. They are represented by a socially indexed specialization of the context graph \(C_t\), in which the active context contains both self-related and other-agent-related nodes linked to a common target, event, or goal neighborhood. When a symbol is needed, write \(C_t^{\mathrm{joint}} \subseteq C_t\).
\end{remark}

\nestfigwithlabel[Context graph \(C_t\) with action branches, outcomes, and failure-mode nodes.]{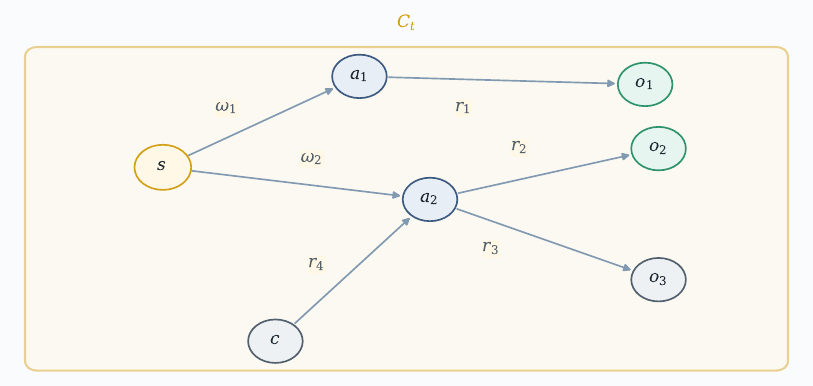}{fig09_context_graph}

Figure~\ref{fig:fig09_context_graph} illustrates a context graph \(C_t \subseteq W_t\) with candidate action branches, predicted outcomes, and failure-mode nodes realized as designated subgraphs within the base ontology. The context graph may include candidate action nodes, predicted outcome nodes, constraint nodes, goal nodes, and anticipated failure-mode nodes, all represented with the node and edge resources already available in the formalism.

\subsection{Answerability}

At time \(t\), the agent evaluates \(G_{\mathrm{q}}\) relative to \(M_{\mathrm{e}} \subseteq M \subseteq B\) using only operators and predicates already defined in the formal core.

\begin{definition}[Answerability]
\label{def:answerability}
Relative to expert-accessible subgraph \(M_{\mathrm{e}}\) and query subgraph \(G_{\mathrm{q}} \subseteq W_t\), the \emph{answer status} of \(G_{\mathrm{q}}\) is determined as follows:
\begin{enumerate}
    \item \textbf{Supported} when partial matches \(f_{\mathrm{match}} : V_{\mathrm{q}} \to V_{M_{\mathrm{e}}}\) yield high \(\mathrm{sim}_{\mathrm{match}}(G_{\mathrm{q}}, M_{\mathrm{e}}; f_{\mathrm{match}})\) (Definition~\ref{def:graph-similarity}), grounding scores \(\sigma_t(v,u)\) exceed \(\theta_{\mathrm{gnd}}\) in \(G^{\mathrm{col}}_t\) (Definition~\ref{def:wm-belief-collation}), and collation remains well-formed under \((\mathcal{F}_B, \mathcal{S}_B)\).

    \item \textbf{Contradicted} when \(X \subseteq G_{\mathrm{q}}\) and \(Y \subseteq M_{\mathrm{e}}\) satisfy \(X \bowtie Y\) under Definition~\ref{def:cross-memory-incompatibility} in \(G^{\mathrm{col}}_t\) (cf.\ failed recognition in Definition~\ref{def:recognition-event}), contributing to \(\kappa(\mathfrak{C}_{\mathrm{wb}}(W_t, B), W_t)\) (Proposition~\ref{prop:wm-belief-conflict-catalog}).

    \item \textbf{Unresolved} when the solution path requires nodes or edges in \(M \setminus M_{\mathrm{e}}\), spans a gap between disconnected components \(M_{\mathrm{e},i}\) and \(M_{\mathrm{e},j}\) of \(M_{\mathrm{e}}\), or remains undecided because grounding, coverage, or active conflict \(\kappa(\mathfrak{C}_\kappa, W_t)>0\) leaves the query open despite retrieval and control operations from \(\mathcal{K}\).
\end{enumerate}
\end{definition}

\begin{proposition}[Subset-based expertise]
\label{prop:subset-expertise}
Let \(M \subseteq B\) be a field mental model and \(M_{\mathrm{e}} \subseteq M\) the expert-accessible subgraph, with connected components \(M_{\mathrm{e},1},\ldots,M_{\mathrm{e},k}\) when \(M_{\mathrm{e}}\) is fragmented. Questions whose resolution paths lie within a single component \(M_{\mathrm{e},i}\) are within direct competence on that region; questions requiring structure in \(M \setminus M_{\mathrm{e}}\), or paths that bridge gaps between components of \(M_{\mathrm{e}}\), are outside direct competence. Expertise is access to a subgraph of a designated field region---possibly distributed across several coherent components---rather than possession of all of \(B\).
\end{proposition}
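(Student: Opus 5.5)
The statement is essentially a classification claim, and I would prove it by unfolding the definitions. The plan is to show that the three answer statuses of Definition~\ref{def:answerability} line up with the three structural positions a resolution path can occupy relative to \(M_{\mathrm{e}}\). First I would make ``resolution path'' precise. It is the image under a partial match \(f_{\mathrm{match}} : V_{\mathrm{q}} \to V_M\) (Definition~\ref{def:graph-similarity}) of the query subgraph \(G_{\mathrm{q}}\), together with the edges of \(M\) that connect the matched nodes and thereby witness the answer.

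Given this, every resolution path \(\pi\) falls into exactly one of three exhaustive cases:
\begin{enumerate}
\item \(\pi\) lies in \(M_{\mathrm{e}}\), and its underlying undirected graph is connected.
\item \(\pi\) lies in \(M_{\mathrm{e}}\) but is not connected inside it.
\item \(\pi\) uses some node or edge of \(M \setminus M_{\mathrm{e}}\).
\end{enumerate}

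For case (1), a connected subgraph contained in \(M_{\mathrm{e}}\) must lie inside a single connected component \(M_{\mathrm{e},i}\), since components are maximal connected pieces (Definition~\ref{def:field-expert-subgraph}, ignoring direction as in Definition~\ref{def:fragmentation}). The match can then be realized as \(f_{\mathrm{match}} : V_{\mathrm{q}} \to V_{M_{\mathrm{e}}}\), so the only remaining obstacles to \emph{Supported} status are grounding thresholds and collation well-formedness. Neither of these requires access beyond \(M_{\mathrm{e},i}\). I would read ``direct competence'' as exactly this: the answer is decidable from accessible structure alone, whether the outcome is Supported or Contradicted.

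For cases (2) and (3), I would check that the path meets the third clause of Definition~\ref{def:answerability} literally. Case (3) matches ``requires nodes or edges in \(M \setminus M_{\mathrm{e}}\).'' Case (2) matches ``spans a gap between disconnected components.'' The status is therefore \emph{Unresolved}, which is what ``outside direct competence'' means. The final sentence of the proposition then follows because \(M_{\mathrm{e}} \subseteq M \subseteq B\), and competence has been characterized entirely in terms of the components \(M_{\mathrm{e},i}\), never in terms of \(B\).

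The main obstacle is that the statement is definitional and informal. ``Resolution path'' and ``direct competence'' are not formally defined anywhere earlier, so the real work is choosing definitions that make the claim a theorem without making it vacuous. A further subtlety is that the Unresolved clause also fires when \(\kappa>0\) or grounding fails, even for in-component paths. I would therefore state the result as a sufficient condition for being outside competence, plus the claim that in-component paths are never Unresolved \emph{for structural reasons}, rather than as a full biconditional with the answer statuses.
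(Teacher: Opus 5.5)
The paper gives no proof of this proposition. It reads as an immediate restatement of two earlier definitions: the first two disjuncts of the Unresolved clause of Definition~\ref{def:answerability} are essentially verbatim the proposition's ``outside direct competence'' conditions, and the components come from Definition~\ref{def:field-expert-subgraph}. Unfolding definitions is therefore the right strategy, but your trichotomy splits on the wrong property. Your lemma (a connected subgraph of \(M_{\mathrm{e}}\) lies in one component) is correct. Case (2), however, needs its converse, namely that a disconnected \(\pi \subseteq M_{\mathrm{e}}\) meets two components, and that is false. Suppose \(\pi\) is \(f_{\mathrm{match}}(G_{\mathrm{q}})\) plus only the edges of \(M\) among the matched nodes. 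Then two non-adjacent query nodes sent to non-adjacent nodes \(a,b\) of the same component \(M_{\mathrm{e},i}\) already give a disconnected \(\pi\); Definition~\ref{def:question-query-subgraph} does not even require \(G_{\mathrm{q}}\) to be connected. That question's path lies in a single component, so the proposition puts it within direct competence. Your argument instead files it under case (2) and declares it Unresolved via a ``gap'' that does not exist.

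Split instead on which components \(\pi\) meets:
\((1')\) \(\pi \subseteq M_{\mathrm{e}}\) with all nodes in one \(M_{\mathrm{e},i}\);
\((2')\) \(\pi \subseteq M_{\mathrm{e}}\) with nodes in at least two components;
\((3')\) \(\pi\) uses a node or edge of \(M \setminus M_{\mathrm{e}}\).
Alternatively, let \(\pi\) include its connecting witness paths, so that it is connected. Then case (2) is empty, because any path in \(M\) between distinct components of \(M_{\mathrm{e}}\) must use an edge outside \(E_{\mathrm{e}}\). In other words, the gap disjunct is subsumed by the \(M \setminus M_{\mathrm{e}}\) disjunct, which is worth saying explicitly.

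The second false step is your claim in case (1) that grounding and collation well-formedness need nothing beyond \(M_{\mathrm{e},i}\). Under Definition~\ref{def:wm-belief-collation}, \(\Gamma_t \subseteq V^{\mathrm{tr}}_t \times V_B\) ranges over all of \(B\), and \(G^{\mathrm{col}}_t\) contains every edge of \(E_B\). So the well-formedness required for Supported is a global property of \(B\) and \(W_t\), and an incompatibility far outside \(M\) can break it. Likewise, Contradicted asks only for \(Y \subseteq M_{\mathrm{e}}\), not \(Y \subseteq M_{\mathrm{e},i}\). Hence ``decidable from accessible structure alone'' does not follow from the definitions. What you can prove for a \((1')\) path is the hedged claim in your last paragraph: neither structural disjunct of the Unresolved clause applies. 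State case (1) in that form from the outset, rather than asserting locality first and qualifying it only at the end.
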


Different experts correspond to different subgraphs \(M_{\mathrm{e}}\). Reasoning over a question is graph traversal plus compatibility checking---grounding, similarity, conflict resolution, and control---relative to the subgraph the agent can currently activate in \(W_t\) and retrieve from \(M_{\mathrm{e}}\).

\subsection{True/false evaluation and branch knowledge}

True/false judgments illustrate the schema directly. A true/false question may be modeled as a query subgraph \(G_{\mathrm{q}}\) containing a candidate node or relation together with the structural commitments required for support or contradiction. Evaluation consists in testing whether the relevant support, contradiction, temporal ordering, causal dependence, or containment structure can be recovered from \(M_{\mathrm{e}}\) and its currently activated neighborhood in \(W_t\).

An expert is therefore not a store of detached answers, but an agent who knows enough branches of \(M\) to traverse from \(G_{\mathrm{q}}\) to supporting or contradicting structure within \(M_{\mathrm{e}}\). Different experts may answer the same question differently when their accessible subgraphs differ in coverage, integration, or current activation, even though the full field \(M\) may contain a definitive answer path.

\subsection{Action branches and failure modes}

The same context graph \(C_t\) can encode decision structure in addition to question structure. Candidate actions are represented as nodes or subgraphs linked by temporal, causal, evidential, and evaluative edges to predicted outcomes, intermediate states, constraints, and failure modes. The question of what to do next is a structured comparison among successor branches internal to \(C_t\), rather than an unstructured choice among labels.

Failure modes are designated subgraphs whose activation or predicted realization conflicts with goal structure, constraint satisfaction, or coherence conditions over the task-relevant region. A control operation \(c \in \mathfrak{O}\) emitted by \(\mathcal{K}\) may stabilize one branch, suppress another, retrieve missing structure from \(B\), or route processing toward conflict resolution (Definition~\ref{def:control-operation}).

\subsection{Task specification}

\begin{definition}[Task specification]
\label{def:task-specification}
A task specification is a tuple
\[
T = \langle R_T,\Gamma_T,\Omega_T,\Theta_T\rangle,
\]
where:
\begin{enumerate}
    \item \(R_T \subseteq V_B\) is the specification-level task-relevant salience set, and \(R_t\) is its runtime instantiation (Definition~\ref{def:salience-set});
    \item \(\Gamma_T\) is a set of admissibility constraints;
    \item \(\Omega_T\) is a set of candidate operations or action options;
    \item \(\Theta_T\) is a set of task predicates, including success and failure conditions \(\Psi^{\mathrm{task}}_{\mathrm{succ}}\) and \(\Psi^{\mathrm{task}}_{\mathrm{fail}}\) (distinct from trajectory functional \(\Psi\), Definition~\ref{def:trajectory-functional}) and from the global threshold bundle \(\Theta\) (Definition~\ref{def:parameter-bundle}).
\end{enumerate}
\end{definition}

The tuple \(T\) is a task-indexed specification of selection, admissibility, and outcome structure. For a given task \(T\), the corresponding context graph is written \(C_t^T\), where
\[
C_t^T \subseteq W_t
\qquad\text{and}\qquad
V_{C_t^T} \subseteq R_t \cup V_{\mathcal{A}(W_t)}.
\]
At runtime, the task-relevant set \(R_t\) is the instantiated slice of \(R_T\) present in the current state.
Equivalently, the task determines a projection
\[
\Lambda_T : W_t \mapsto C_t^T
\]
from the current working-memory state to the task-relevant context graph.

\subsection{Goals, constraints, and options}

Within a task-instantiated context graph, active goals are represented as designated nodes or subgraphs already licensed by the base ontology. Write \(V^{\mathrm{goal}}_t \subseteq V_{C_t}\) for the set of currently active goal nodes. Constraints are drawn from the admissibility family \(\Gamma_T\) in the task specification (Definition~\ref{def:task-specification}), instantiated as admissibility predicates \(\{\gamma^{\mathrm{adm}}_1,\ldots,\gamma^{\mathrm{adm}}_m\}\) (lowercase \(\gamma\), not grounding or task-family \(\Gamma\)); grounding correspondence \(\Gamma_t\) (Definition~\ref{def:wm-belief-collation}) is a separate object. Candidate options are represented by a set \(\Omega_t = \{\nu_1,\ldots,\nu_n\}\) instantiated from \(\Omega_T\), where each \(\nu_i\) is either a control-eligible operation in \(\mathfrak{O}\) or a structured transition over working memory induced by \(U\). An option \(\nu \in \Omega_t\) is admissible at time \(t\) iff \(\mathrm{Adm}(\nu,C_t,\Gamma_T)=1\).

\begin{definition}[Goal-designated subgraph]
\label{def:goal-designated-subgraph}
A motivational or task goal state is represented by a designated subgraph
\[
G_t=(V_{G_t},E_{G_t}) \subseteq B,
\]
whose nodes encode desired outcomes, maintained constraints, or preferred target states. This is not a new primitive object class: it is a designated belief subgraph whose task-layer runtime realization may appear inside the context graph \(C_t\).
\end{definition}

\begin{definition}[Goal satisfaction]
\label{def:goal-satisfaction}
Let \(G_t=(V_{G_t},E_{G_t}) \subseteq B\) be a designated goal subgraph. Given a realization predicate \(\mathrm{realized}(g,W_t)\) on goal nodes \(g \in V_{G_t}\), the \emph{goal-satisfaction measure} is the involvement-style quantity
\[
S_{\mathrm{goal}}(G_t,W_t)
=
\frac{|\{g \in V_{G_t} : \mathrm{realized}(g,W_t)\}|}{|V_{G_t}|}.
\]
When realization is identified with working-memory activation or explicit attainment inside a context graph, this is an application-specific specialization of Definition~\ref{def:subgraph-involvement}.
\end{definition}

\begin{remark}[Planning as prospective subgraph]
\label{rem:planning-prospective-subgraph}
For a goal subgraph \(G_t\), a \emph{planning graph} is a designated belief subgraph \(P^{\mathrm{plan}}_t = (V_{P^{\mathrm{plan}}_t}, E_{P^{\mathrm{plan}}_t}) \subseteq B\) containing current task-state nodes, candidate action branches (typically \(\mathcal{R}_{\mathrm{causal}}\)), intermediate states linked by temporal edges (\(\mathcal{R}_{\mathrm{temp}}\)), and goal nodes in \(V_{G_t}\), with directed paths from present states toward goals. This is a prospective action subgraph within \(B\), not a new primitive graph type.
\end{remark}

\begin{proposition}[Motivational profile]
\label{prop:motivational-profile}
A motivational state at time \(t\) may be characterized by the activation pattern of goal nodes in \(G_t\), the structure of available planning subgraphs directed toward those goals, and the expected utilities associated with action branches connecting current states to goal nodes. High motivation corresponds to strong goal activation, connected planning structure, and control allocation toward high-utility paths; apathy corresponds to the converse profile.
\end{proposition}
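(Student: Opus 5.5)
This proposition is definitional in character, so I would prove it as a well-definedness and consistency check. The aim is to show that the three characterizing quantities can each be expressed with machinery already introduced. Those quantities are goal activation, planning structure, and utility-directed control. Once that is done, the labels ``high motivation'' and ``apathy'' become predicates over them, and no new primitives are needed.

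First, goal activation. I would take the designated goal subgraph \(G_t\subseteq B\) (Definition~\ref{def:goal-designated-subgraph}) and measure its activation pattern in two ways. The involvement \(\mathcal{I}(G_t,W_t)\) of Definition~\ref{def:subgraph-involvement} gives the fraction of goal nodes that are active. Alternatively, \(S_{\mathrm{goal}}\) of Definition~\ref{def:goal-satisfaction} serves the same role, together with the total mass \(\alpha_t^{W}(V_{G_t}\cap V_{W_t})\) from Definition~\ref{def:activation-function}. ``Strong goal activation'' is then the lower-bound predicate \(\mathcal{I}(G_t,W_t)\ge\theta_{\mathrm{inv}}\), which respects the threshold directions fixed in Definition~\ref{def:parameter-bundle}.

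Second, planning structure. I would use the planning graph \(P^{\mathrm{plan}}_t\) of Remark~\ref{rem:planning-prospective-subgraph}. ``Connected planning structure'' means two things: \(\Phi_{\mathrm{frag}}(P^{\mathrm{plan}}_t)=1\) (Definition~\ref{def:fragmentation}), and every current-state node has a directed causal or temporal path to some node of \(V_{G_t}\). Both are decidable graph predicates on a finite graph.

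Third, expected utilities. I would use signed subgraph evaluation (Definition~\ref{def:signed-subgraph-evaluation}) as the utility of an action branch. Concretely, I would restrict to a branch subgraph \(Q_\nu\subseteq P^{\mathrm{plan}}_t\) for each option \(\nu\in\Omega_t\), with a signing function that is positive on goal nodes and negative on failure-mode nodes, and set \(u(\nu)=\mathcal{E}(Q_\nu,\sigma_{Q_\nu},W_t)\). ``Control allocation toward high-utility paths'' is then the condition that \(\mathcal{K}\) emits attentional focusing or persistence-bias operations (Definition~\ref{def:control-operation}) on some \(Q_{\nu^\ast}\) with \(\nu^\ast\in\arg\max u\). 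Equivalently, this is a salience parametrization with \(\beta_{Q_{\nu^\ast}}>1\) (Definition~\ref{def:subgraph-salience-bias}).

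Finally, I would define apathy as the conjunction of the negated predicates. The converse profile has three parts: \(\mathcal{I}(G_t,W_t)<\theta_{\mathrm{inv}}\), a fragmented or goal-disconnected planning graph, and no utility-directed control. By construction these two profiles are mutually exclusive.

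The main obstacle is that the statement is qualitative (``may be characterized'', ``corresponds to''), so there is no strict implication to establish. The real task is choosing the utility functional so that it stays inside the ontology. If the signing-function device proved too coarse, I would instead fall back on a trajectory functional \(\Psi\) over predicted successor states. I would also acknowledge in the argument that the characterization is a representational definition rather than a derived theorem.
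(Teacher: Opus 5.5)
The paper gives no proof of this proposition. It is asserted as an interpretive characterization, justified only in that each ingredient is already an object of the ontology, in the spirit of Remark~\ref{rem:task-instantiation-scope}: the goal subgraph \(G_t\) of Definition~\ref{def:goal-designated-subgraph}, the planning graph of Remark~\ref{rem:planning-prospective-subgraph}, and control operations from \(\mathcal{K}\). Your framing as a well-definedness check therefore matches the paper's intent. Where you differ is in going further: you fix explicit thresholded predicates, which gives decidability and mutual exclusivity of the two profiles, whereas the paper stays neutral about operationalization and so commits to nothing checkable. The cost is that each choice must fit the existing machinery, and your utility clause fits poorly. \(\mathcal{E}(Q_\nu,\sigma_{Q_\nu},W_t)\) sums only over currently active nodes, weighted by current activation. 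A branch therefore gets no credit for outcome nodes that are not presently active, however valuable they are, and \(u\) partly records where activation already sits, including activation produced by earlier control. This risks letting perseverative control in the sense of Proposition~\ref{prop:mental-state-rumination} count as utility-directed. The paper's nearest prospective device is the set of predicted successors \(\widehat{W}_{t+1}^{(\nu)}=U^{\mathrm{tsk}}_T(W_t,\nu)\), ranked by the objective functional \(J^{\mathrm{obj}}_T\). Your fallback should therefore be the primary definition, e.g.\ \(u(\nu)=J^{\mathrm{obj}}_T\bigl(\widehat{W}_{t+1}^{(\nu)}\bigr)\). Two smaller points. First, Remark~\ref{rem:planning-prospective-subgraph} already builds directed paths from present states toward goals into what a planning graph is. The discriminating content of your planning clause is thus whether such a subgraph is available at all and whether it is unfragmented, and the apathetic profile should explicitly include the case where none exists. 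Second, \(\mathcal{K}\)-emitted focusing is not equivalent to a coefficient \(\beta_Q>1\). The paper treats \(\beta_Q\) as a parameter of \(U\) drawn from \(\Theta\), which control may only optionally sustain (Remark~\ref{def:attachment-salience-param}).
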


\subsection{Predicted outcomes}

For each admissible option \(\nu \in \Omega_t\), the predicted outcome is
\[
\widehat{W}_{t+1}^{(\nu)} = U^{\mathrm{tsk}}_T(W_t,\nu),
\]
where \(U^{\mathrm{tsk}}_T\) denotes the working-memory transition operator \(U\) (Definition~\ref{def:wm-transition}) as constrained by task specification \(T\) (distinct from chunking \(U_\chi\) and activation update \(U_{\mathrm{act}}\)). The set of predicted successor states is \(\mathcal{W}^{\mathrm{pred}}_t = \{\widehat{W}_{t+1}^{(\nu)} : \nu \in \Omega_t\}\). Option selection may then be defined separately by an objective functional \(J^{\mathrm{obj}}_T\), if the task requires ranking among predicted successors.

\subsection{Working-memory instantiation}

At runtime, the active task is realized as a working-memory graph \(W_t\) containing a task-relevant slice of \(M_{\mathrm{e}}\) and the current query subgraph \(G_{\mathrm{q}}\). Retrieval from \(M_{\mathrm{e}}\) into \(W_t\), grounding through \(\Gamma_t\), and control outputs from \(\mathcal{K}\) jointly determine which field structure is available for inference.

The context graph \(C_t\) is obtained from the interaction of \(G_{\mathrm{q}}\), the currently active portion of \(M_{\mathrm{e}}\) in \(W_t\), and salience, awareness, and control constraints as in Definition~\ref{def:context-graph}. Successful reasoning aligns the active neighborhood of \(G_{\mathrm{q}}\) with the relevant branch of \(M\); failure occurs when the needed branch is absent from \(M_{\mathrm{e}}\), weakly represented in \(W_t\), or in conflict under \(G^{\mathrm{col}}_t\). When \(G_{\mathrm{q}}\) is only weakly aligned to \(M_{\mathrm{e}}\), when grounding scores are low, or when conflict catalogs remain active, the architecture predicts hesitation, partial resolution, or controlled search rather than confident commitment.

\subsection{Success and failure conditions}

Task success and failure are represented by task predicates \(\Psi^{\mathrm{task}}_{\mathrm{succ}}(W_t,T)\) and \(\Psi^{\mathrm{task}}_{\mathrm{fail}}(W_t,T)\) over working-memory states.

\begin{definition}[Failure state]
\label{def:task-failure-state}
A working-memory state \(W_t\) is a \emph{failure state} relative to task \(T\) iff
\[
\mathrm{Fail}^{\mathrm{task}}(W_t,T)
\iff
\neg \Psi^{\mathrm{task}}_{\mathrm{succ}}(W_t,T)\wedge \Psi^{\mathrm{task}}_{\mathrm{fail}}(W_t,T).
\]
where \(\mathrm{Fail}^{\mathrm{task}}\) is distinct from the fragmentation index \(F(W_t)\) (Definition~\ref{def:fragmentation}).
\end{definition}

Typical failure instances, all expressed with previously defined quantities, include excessive fragmentation \(F(W_t)>1\) (Definition~\ref{def:fragmentation}), unresolved active conflict \(\kappa(\mathfrak{C}_\kappa,W_t)>0\), incoherence in a designated task-relevant subgraph, inadequate awareness coverage \(V_{C_t}\nsubseteq V_{\mathcal{A}(W_t)}\), and control trajectories that violate task constraints.

\subsection{Recovery and task-sensitive control}

When a failure state is detected, recovery is represented by a control operation \(c_{\mathrm{rec}} \in \mathfrak{O}\) emitted by \(\mathcal{K}\)---for example attentional refocusing, retrieval amplification, distractor suppression, persistence bias, or conflict-resolution routing (Definition~\ref{def:control-operation}). Task-level regulation is the task-relative deployment of \(\mathcal{K}\) over designated context graphs and their trajectories.

\subsection{Task-sensitive trajectories}

A task-sensitive trajectory is the sequence \((C_{t-k}^T,\ldots,C_t^T)\), where each \(C_{t-i}^T\) is induced from \(W_{t-i}\) by the task projection \(\Lambda_T\). Task performance depends on the sequence of transformations through which \(C_t^T\) arose. Those transformations may depend on fragmentation, conflict count, self-involvement, self-consistency, or signed evaluation along the trajectory.

\subsection{Summary}

The task-instantiation schema specializes previously defined graph operations over \(B\) and \(W_t\):
\begin{itemize}
    \item \(M \subseteq B\): a field or domain, designated as a connected mental model (Definition~\ref{def:mental-model}), or as a family \(\mathcal{M}_{\mathrm{fld}}\) of overlapping mental models for multi-branch fields;
    \item \(M_{\mathrm{e}} \subseteq M\): expert-accessible knowledge, a possibly fragmented subgraph with components \(M_{\mathrm{e},1},\ldots,M_{\mathrm{e},k}\) (Definition~\ref{def:field-expert-subgraph});
    \item \(G_{\mathrm{q}} \subseteq W_t\): the question, as a query subgraph and structured probe into \(M\) (Definition~\ref{def:question-query-subgraph});
    \item \(C_t \subseteq W_t\): the task context graph, selected under salience \(R_t\), awareness \(\mathcal{A}(W_t)\), and control from \(\mathcal{K}\) (Definition~\ref{def:context-graph});
    \item answer status: supported, contradicted, or unresolved under Definition~\ref{def:answerability}, via \(\mathrm{sim}_{\mathrm{match}}\), \(\Gamma_t\), \(G^{\mathrm{col}}_t\), \(\bowtie\), and \(\kappa\).
\end{itemize}
The query subgraph specifies what is asked, the context graph specifies what is currently relevant, and answer status specifies how the domain structure resolves the query.
Expert reasoning is the task-relative activation, alignment, and regulation of domain-specific subgraphs already available within \(B\) and \(W_t\).

\begin{remark}[Scope of the section]
\label{rem:task-instantiation-scope}
This section specifies a reusable task-instantiation schema. Context graphs \(C_t\), mental models \(M\), expert-accessible subgraphs \(M_{\mathrm{e}}\), query subgraphs \(G_{\mathrm{q}}\), goals, constraints, options, predicted outcomes, recovery policies, and failure states are task-indexed designations, predicates, or constrained uses of objects already defined in the foundational architecture. The difference between the present section and the formal core is one of interpretation and parameterization.
\end{remark}

\section{Mappings to Existing Frameworks}
\label{sec:mappings}

This section does not claim that existing frameworks are identical to NEST. Rather, it gives compatibility mappings: for each framework, we identify the subset of NEST variables, operators, and constraints needed to represent its central commitments. The comparison is therefore structural rather than eliminative: differences between frameworks appear as differences in which graph objects are primitive, which transitions are licensed, and which constraints are enforced. This approach follows the broader unification program in cognitive science, which treats architecture-level integration as a scientific goal rather than a mere re-description exercise \cite{Newell1990,Milkowski2017,LairdLebiereRosenbloom2017}.

\subsection{ACT-R}

ACT-R provides a natural starting point because it already distinguishes declarative knowledge, procedural knowledge, and buffer-based access to currently relevant information \cite{Anderson2005,LebiereACT,AndersonEtAl2004ACTR}. In NEST, declarative chunks map to recursive nodes in the belief graph \(B\), while ACT-R buffers correspond not to all of \(W_t\), but to functionally designated active subgraphs or slices of \(W_t\) selected for current processing. This preserves ACT-R's separation between stored chunk structure and transient buffer occupancy. The ACT-R goal buffer corresponds to a designated task-relevant subgraph of \(W_t\), whereas declarative retrieval corresponds to graph matching and activation-dependent access over nodes of \(B\). Productions correspond to constrained transition templates: condition-sensitive instances of the working-memory transition operator \(U\) together with possible belief-update consequences under \(\Delta\). A production matches a configuration of buffer contents and then issues such a transition on \(W_t\), on \(B\), or on the interface between them. This preserves ACT-R's separation between what is known, what is currently active, and what action is executed from that state \cite{Anderson2005,LebiereACT,AndersonEtAl2004ACTR}.

Conceptually, ACT-R's central insight is that cognition is controlled by the interaction of activation, retrieval, and rule-based action selection \cite{Anderson2005,AndersonEtAl2004ACTR}. NEST preserves that insight but makes the representational substrate explicit. Activation becomes \(\alpha_t^B\) and \(\alpha_t^{W}\), retrieval becomes grounded access from \(W_t\) to \(B\), and rule application becomes \(\Delta_{\bullet}\) or \(U\), depending on whether the transition revises belief structure or only changes the working-memory configuration. What ACT-R leaves implicit is the internal structure of chunks and the relational structure between chunks; NEST represents both explicitly through recursive nodes and typed edges. NEST adds explicit internal graph structure and typed relational organization while preserving ACT-R's commitment to activation-driven retrieval and production-based control \cite{Anderson2005,AndersonEtAl2004ACTR}.

\subsection{Soar}

Soar can be mapped even more directly because its architectural vocabulary already centers on problem spaces, operators, subgoals, impasses, and chunking \cite{Newell1990,Newell1989Soar,LairdRosenbloomNewell1986,Laird2012Soar}. In NEST, a Soar problem space corresponds to a task-context subgraph in \(W_t\), operators correspond to candidate graph transitions, and the selection of an operator corresponds to choosing one of several permissible update paths through the task graph. An impasse corresponds to a control-relevant failure state in which the current task-context graph does not support decisive operator selection; in NEST this may arise from incompatibility, incomplete task coverage, unresolved grounding, or insufficient discriminability among available successor transitions. Soar's chunking mechanism then maps to belief-update compression in NEST: repeated problem-solving traces can be compiled into new graph structure in \(B\), thereby reducing future search cost and increasing direct access to solved patterns \cite{Newell1990,LairdRosenbloomNewell1986,Laird2012Soar}.

Conceptually, Soar and NEST share the idea that intelligent behavior emerges from the interaction of a persistent knowledge base with a limited working state and a mechanism that resolves control uncertainty. NEST preserves Soar's universal subgoaling logic, but replaces Soar's more procedural description with explicit graph-theoretic diagnostics. In particular, subgoaling becomes the introduction of a derived task-context subgraph whose function is to resolve the currently blocked control state, while impasse corresponds to a control-relevant failure state in which the current task-context graph does not support decisive operator selection. This mapping is valuable because it shows that Soar's control cycle can be understood as a special case of graph transition under structural constraints, with chunking as a form of graph-theoretic memory consolidation \cite{Newell1990,Newell1989Soar,LairdRosenbloomNewell1986,Laird2012Soar}.

\subsection{Sigma and the Common Model of Cognition}

Sigma and the Common Model of Cognition (CMC) are related but not identical comparison targets. The CMC is primarily an architectural inventory of major cognitive subsystems, whereas Sigma is a more unified architectural framework with graded and probabilistic commitments. NEST can map to both, but the correspondence is clearest when the two are distinguished before comparison \cite{LairdLebiereRosenbloom2017,LairdLebiereRosenbloomStocco2025,Rosenbloom2010,RosenbloomDemskiUstun2016}.

At the CMC level, long-term memory maps to the belief graph \(B\), working memory to \(W_t\), control to \(\mathcal{K}\) and related trajectory functionals, perceptual intake to transient graph construction from perceptual payloads, action to graph-externalized transitions or task-selected successor states, and procedural organization to constrained instances of \(U\) and \(\Delta\). Sigma's hybrid and probabilistic orientation is compatible with NEST insofar as weighted edges, graded activation, grounding scores, and threshold-parameterized transitions allow uncertainty and partial support to be represented without abandoning explicit graph structure \cite{LairdLebiereRosenbloom2017,LairdLebiereRosenbloomStocco2025,RosenbloomDemskiUstun2016}.

Conceptually, the CMC contributes a strong modular view of cognition: information is perceived, integrated into working memory, processed by control, and then used to guide action and learning \cite{LairdLebiereRosenbloom2017}. NEST preserves this control flow but adds a finer-grained content model. The CMC specifies where information lives; NEST specifies what that information looks like internally and how it can be recursively nested. This distinction matters because it allows NEST to represent not only module-to-module communication, but also the internal conceptual, episodic, and relational structure of the information being communicated. In that sense, NEST can be read as a representational enrichment of the CMC rather than a replacement for it \cite{LairdLebiereRosenbloom2017,LairdLebiereRosenbloomStocco2025}.

\subsection{Global Workspace Theory}

Global Workspace Theory (GWT) and Global Neuronal Workspace models treat conscious access as a limited-capacity workspace in which selected information becomes globally available to multiple specialized processors through broadcasting or ignition-like dynamics \cite{Baars1988,DehaeneChangeux2011,Dehaene2014,DehaeneKerszbergChangeux1998}. NEST maps this idea by treating the workspace as an access-privileged, high-salience region of \(W_t\) whose contents are available for broad coordination across otherwise specialized subgraphs, while the global broadcast corresponds to activation spread from a focal subgraph into other active subgraphs connected through typed relations. A conscious content, under this mapping, is a working-memory subgraph that is both sufficiently activated and selected by the awareness functional \(\mathcal{A}(W_t)\) for broad control-relevant access. The workspace is therefore a control-sensitive, graph-structured access region rather than a module with no internal content \cite{Baars1988,DehaeneChangeux2011,Dehaene2014}.

Conceptually, GWT explains why some information becomes available for report, decision, and control while other information remains local or inaccessible \cite{Baars1988,DehaeneChangeux2011}. NEST preserves that access distinction by separating \(B\) from \(W_t\) and by defining awareness via the awareness functional \(\mathcal{A}(W_t)\) (Definition~\ref{def:awareness-functional}), but it extends GWT by making the workspace content explicitly structured and recursively nested. This extension is important because the workspace in NEST can contain not only flat propositions, but also graph payloads, task models, self-related structures, and conflict patterns. GWT therefore appears in NEST as a special case in which conscious access is modeled by the joint action of activation, awareness selection, and cross-subgraph availability within working memory \cite{Baars1988,DehaeneChangeux2011,DehaeneKerszbergChangeux1998,Dehaene2014}.

\subsection{Semantic Networks and Conceptual Graphs}

Semantic networks provide the simplest graph-level predecessor of NEST: concepts are nodes and relations such as \emph{is-a}, \emph{part-of}, and \emph{related-to} are edges \cite{CollinsQuillian1969,CollinsLoftus1975,Steyvers2005}. NEST maps these directly into its containment and associative edge classes, with recursive node payloads extending the notion of a concept node so that it can contain internal substructure rather than remain atomic. Thus, a semantic network constitutes a restricted representational case of NEST in which the graph has minimal nesting and the main operations involve spreading activation and relational access among concept nodes \cite{CollinsQuillian1969,CollinsLoftus1975,Steyvers2005}. The conceptual gain from NEST is that semantic relations are no longer the whole story; they become one relation class inside a broader typed ontology that also includes causal, temporal, evidential, and spatial structure.

Conceptual graphs strengthen semantic networks by providing a formal graph language for propositional and relational representation \cite{Sowa1984}. NEST preserves this representational ambition, but adds working-memory dynamics, belief-update operators, and explicit graph-based conflict conditions. In conceptual-graph terms, NEST can represent propositions as relational subgraphs, but it additionally specifies how those subgraphs become active, how they are checked against stored structure, and how they are revised over time. Conceptual graphs map into NEST as a special representational case: they capture richly structured relational content, while NEST adds explicit working-memory dynamics, conflict diagnostics, and update operators over that content \cite{Sowa1984,Steyvers2005}.

\subsection{Theory-Theory and chunking}

Theory-Theory approaches treat concepts as embedded in intuitive explanatory systems rather than as isolated labels \cite{GopnikMeltzoff1997,GopnikWellman1994,Murphy2002TheoryTheory}. In NEST, such intuitive theories are naturally modeled as recursive subgraphs containing causal, evidential, temporal, and containment relations among lower-level concepts. A concept is then not merely an atomic node, but a node whose payload may encode an explanatory microtheory as an internal graph. Conceptual change becomes graph revision under \(\Delta\): a local theory-structure may be reweighted, retyped, partially replaced, or reorganized when incompatible evidence accumulates or when a rival structure yields higher coherence. This provides a direct representational bridge between theory-based conceptual development and graph-based revision dynamics \cite{GopnikWellman1994,Murphy2002TheoryTheory}.

Chunking can be mapped to NEST as compression of recurrent graph patterns into stable belief-graph structures. Repeated problem-solving episodes can be collapsed into a new recursive node or reusable subgraph, which then functions as a higher-level unit of access in later processing.

\begin{definition}[Chunking operator]
\label{def:chunking-operator}
A \emph{chunking operator} is a two-stage operator family consisting of a working-memory compression step \(U_\chi\) and, when committed, a belief-update step \(\Delta_\chi\). The first stage compresses a set of co-active nodes in \(W_t\) into a single new recursive node; the second stage stores the resulting chunk in \(B\) when update conditions of the architecture are satisfied. Let \(\{v_1, \dots, v_k\} \subseteq V_{W_t}\) be co-active nodes---including perceptual input nodes---compressed into
\[
v^\ast = (c^\ast, G^\ast),
\]
where \(G^\ast\) is a graph payload encoding the internal structure of \(\{v_1, \dots, v_k\}\) and their mutual edges. The WM compression stage is
\[
U_\chi(W_t, B) = W_{t+1},
\]
where \(V_{W_{t+1}} = (V_{W_t} \setminus \{v_1, \dots, v_k\}) \cup \{v^\ast\}\), edges incident to any \(v_i\) are redirected to \(v^\ast\), and \(\alpha_{t+1}^{W}(v^\ast) = \sum_{i=1}^{k} \alpha_t^{W}(v_i)\). The operator reduces \(|V_{W_t}|\) by \(k-1\), freeing capacity. A node \(v^\ast\) formed this way is called a \emph{chunk}; if a matching chunk already exists in \(B\), the operation is a recognition event rather than a novel chunk-creation event (Definition~\ref{def:recognition-event}). When update conditions are satisfied, \(\Delta_\chi\) commits the chunk structure to \(B\).
\end{definition}

Accordingly, chunking is transition-level when treated as temporary compression inside \(W_t\), and update-level when the compressed structure is committed to \(B\).

This preserves the central intuition of chunking literature: repeated computation can be compiled into memory structures that reduce future processing cost. In NEST, however, the compiled object is not merely a symbolic chunk; it is a typed graph fragment with internal structure and update history \cite{GobetSimon1996,Newell1990}.

\subsection{Graph-Based AI and Knowledge-Graph Reasoning}

Graph-based AI provides a computationally relevant comparison class for NEST. Knowledge graphs and graph neural networks have become standard tools for representing explicit relational structure, improving interpretability, and supporting inference over structured data \cite{Pan2024Roadmap,Dai2024KG,Kau2024KG,Wu2021GNN}. Recent work shows that large language models can benefit from knowledge graphs as structured sources of relational grounding, and that graph-based representations can improve factual consistency, reasoning reliability, and explainability \cite{Pan2024Roadmap,Dai2024KG,Kau2024KG}. Graph neural networks are also increasingly used to model cognitive and brain-level structure, including connectome-based prediction and interpretable cognitive decoding \cite{Zhang2022Connectome,Zhang2022BGNN}.

NEST fits this direction because its primitives are already graph-native. Its belief graph, working-memory graph, typed edges, grounding correspondences, and update operators can be implemented in forms compatible with knowledge-graph systems and, in part, with graph-learning pipelines. Conceptually, this matters because AI systems that aim to model human cognition need not only distributed pattern learning, but also explicit structural representations of memory, conflict, control, and update. NEST provides an explicit representational scheme for those structures, with computational implementations available in graph-native settings, while remaining close to the representational vocabulary used in contemporary graph-based AI and cognitive modeling \cite{Pan2024Roadmap,Dai2024KG,Wu2021GNN,Zhang2022Connectome,Zhang2022BGNN}.

\subsection{Summary of correspondences}

Table~\ref{tab:framework-mappings} consolidates the variable-level mappings developed above.

\begin{table}[ht]
\centering
\caption{Summary of framework correspondences at the architectural-variable level.}
\label{tab:framework-mappings}
\small
\begin{tabular}{@{}p{0.18\linewidth}p{0.34\linewidth}p{0.38\linewidth}@{}}
\hline
\textbf{Framework} & \textbf{Source-level variables or modules} & \textbf{NEST correspondence} \\
\hline
ACT-R &
Chunks, buffers, productions, declarative memory, procedural memory &
Recursive nodes in \(B\); designated active subgraphs of \(W_t\); constrained instances of \(U\) and \(\Delta\); activation and retrieval as \(\alpha_t^B\), \(\alpha_t^{W}\), and grounded access \cite{Anderson2005,AndersonEtAl2004ACTR} \\
Soar &
Problem spaces, operators, impasses, chunking, subgoals &
Task-context subgraphs; candidate transitions; control-relevant failure states; derived subgoal contexts; compiled graph patterns in \(B\) \cite{Newell1990,Newell1989Soar,LairdRosenbloomNewell1986,Laird2012Soar} \\
CMC / Sigma &
Long-term memory, working memory, control, perception, action; graded/probabilistic processing &
\(B\), \(W_t\), \(\mathcal{K}\), \(\Psi\), constrained instances of \(U\) and \(\Delta\), perceptual graph input, action-selective transitions, and weighted or graded graph relations \cite{LairdLebiereRosenbloom2017,LairdLebiereRosenbloomStocco2025} \\
GWT &
Workspace, access, broadcast, ignition, conscious availability &
Access-privileged subgraph of \(W_t\) selected by \(\mathcal{A}(W_t)\); activation spread and broad availability across connected subgraphs \cite{Baars1988,DehaeneChangeux2011,Dehaene2014,DehaeneKerszbergChangeux1998} \\
Semantic networks / conceptual graphs &
Concepts, arcs, \emph{is-a}, \emph{part-of}, relational inference &
Containment and associative edges; recursively structured nodes; graph-based propositional content \cite{CollinsQuillian1969,CollinsLoftus1975,Sowa1984,Steyvers2005} \\
Theory-Theory / chunking &
Intuitive theories, explanatory structure, compiled knowledge &
Recursive theory subgraphs; revision under \(\Delta\); temporary chunk compression in \(W_t\); reusable chunk storage in \(B\) \cite{GopnikWellman1994,Murphy2002TheoryTheory,GobetSimon1996} \\
Graph-based AI &
Knowledge graphs, graph neural networks, structured reasoning &
Direct computational substrate for belief, working memory, conflict, grounding, and update \cite{Pan2024Roadmap,Dai2024KG,Kau2024KG,Wu2021GNN,Zhang2022Connectome,Zhang2022BGNN} \\
\hline
\end{tabular}
\end{table}

The table records correspondence at the level of architectural variables and operators, not identity of explanatory aims or empirical commitments.

The point of these mappings is not that every framework collapses into NEST without remainder. It is that the frameworks can be compared within a shared representational space. Once stated in that space, their differences become more explicit: they differ in which graph objects are treated as primitive, which transitions are licensed, which conflicts are representable, and which update policies are permitted \cite{Newell1990,Milkowski2017,LairdLebiereRosenbloom2017}.

\section{Discussion and Future Work}
\label{sec:discussion}

This paper is intentionally foundational. Its aim has been to specify the representational and structural commitments of NEST at a high level of abstraction, rather than to deliver a finished empirical model or domain-specific simulation. The preceding sections develop, in order, the formal ontology and operator toolkit, derived cognitive phenomena, a task-instantiation schema, and compatibility mappings to major frameworks. The mappings section is the culminating technical step: it shows how existing theories can be read as constrained regions of one graph-theoretic language.

\subsection{Implications and scope}

NEST separates a stable representational core from task- and domain-specific instantiations. Future application work should select subgraphs, relation subsets, thresholds, and control policies appropriate to a domain rather than extend the foundational ontology. This division matches longstanding practice in cognitive architecture research \cite{Lieto2018,Newell1990}, where fixed mechanisms combine with specialized knowledge structures and control policies.

\subsection{Limitations}

The present formulation has several limitations. We have not specified a learning algorithm; belief update is defined abstractly. We assume a clear separation between durable belief and capacity-limited working memory, which may not map directly onto neural substrates. Compatibility mappings are variable-level sketches rather than executable comparative simulations. These limitations are deliberate: the contribution is a transparent substrate for later algorithmic, empirical, and domain-specific development.

\subsection{Open research directions}

Several directions follow naturally from the formal core established here.

\textbf{Domain instantiations.} Concrete instantiations in physics reasoning \cite{Chi1992,diSessa2014}, diagnostic inference under uncertainty, and evaluative argumentation can turn the task-instantiation schema into fully specified models with simulations and testable predictions.

\textbf{Computational realization.} An implementation would require concrete data structures for \(B\) and \(W_t\), graph-matching procedures for recognition, and chosen parameterizations of awareness and control over trajectories. Conceptual change could then be studied as learnable graph-rewrite behavior rather than as an abstract update interface alone.

\textbf{Empirical anchoring.} Behavioral measures of capacity and chunking \cite{ChaseSimon1973,GobetSimon1998}, connectivity data during task performance, and metacognitive reports \cite{NelsonNarens1990,Koriat2007} can constrain capacity bounds, grounding behavior, and confidence-like readings of structural diagnostics.

\textbf{Comparative and representational extensions.} Mappings in Section~\ref{sec:mappings} can be extended toward structure-preserving simulations of ACT-R, global workspace theory, and related architectures \cite{AndersonEtAl2004ACTR,Baars1988,LairdLebiereRosenbloom2017}. Probabilistic or fuzzy graph semantics may capture graded belief and uncertainty without altering the discrete ontology developed here.

Taken together, these directions suggest that NEST can serve as a shared representational language for comparing theories, instantiating domains, and linking symbolic structure to empirical and computational work in cognitive science and artificial intelligence.

\IfFileExists{main.bbl}{%

}{%
  \bibliographystyle{plain}%
  \bibliography{references}%
}

\end{document}